\newcount\Comments  
\Comments=0

\documentclass{article}
\usepackage[margin=1in]{geometry}
\usepackage{soul}
\usepackage{dsfont}
\usepackage{url}
\usepackage[utf8]{inputenc}
\usepackage{graphicx}
\usepackage{amsmath}
\usepackage{algorithmic}
\urlstyle{same}
\usepackage{amsthm}
\usepackage{amssymb}
\usepackage[ruled,vlined]{algorithm2e}
\usepackage{graphicx}
\usepackage{comment}
\usepackage[usenames,dvipsnames,svgnames,table]{xcolor}
\usepackage[colorinlistoftodos]{todonotes}
\usepackage[colorlinks=true, allcolors=NavyBlue]{hyperref}
\usepackage{bm}
\usepackage{natbib}

\newcommand{\E}[0]{\mathbb{E}}

\newcommand{\ignore}[1]{}
\newcommand{\kibitz}[2]{\ifnum\Comments=1\textcolor{#1}{#2}\fi}

\renewcommand{\hat}[1]{\widehat{#1}}

\def\argmax{\qopname\relax n{argmax}}

\newtheorem{theorem}{Theorem}[section]

\newtheorem{lemma}[theorem]{Lemma}

\newtheorem{definition}[theorem]{Definition}

\newtheorem{example}[theorem]{Example}

\newtheorem*{conjecture*}{Conjecture}
\newtheoremstyle{nonindented}{1ex}{1ex}{}{}{\bfseries}{.}{.5em}{}
\newtheoremstyle{indented}{1ex}{1ex}{\itshape\addtolength{\leftskip}{0.6cm}\addtolength{\rightskip}{0.6cm}}{}{\bfseries}{.}{.5em}{}
\theoremstyle{nonindented}
\theoremstyle{indented}
\theoremstyle{plain}

\renewcommand{\hat}{\widehat}
\renewcommand{\tilde}{\widetilde}
\renewcommand{\bar}{\overline}

\DeclareMathOperator{\poly}{poly}



\def\min{\qopname\relax n{min}}
\def\max{\qopname\relax n{max}}

\def\argmax{\qopname\relax n{argmax}}

\def\Pr{\qopname\relax n{\mathbf{Pr}}}
\def\Ex{\qopname\relax n{\mathbf{E}}}

\newcommand{\RR}{\mathbb{R}}

\def\D{\mathcal{D}}

\def\P{\mathcal{P}}





\newcommand{\mini}[1]{\mbox{$\min$} & {#1} & &\\}
\newcommand{\maxi}[1]{\mbox{$\max$} & {#1 } & & \\}

\renewcommand{\st}{\mbox{\text{s.t.}} }
\newcommand{\con}[2]{&#1 & & #2\\}
\newcommand{\qcon}[3]{&#1, & \mbox{for } #2  & #3 \\}
\newenvironment{lp}{\begin{equation}  \begin{array}{lllr}}{\end{array}\end{equation}}
\newenvironment{lp*}{\begin{equation*}  \begin{array}{lllr}}{\end{array}\end{equation*}}


\title{Selling Information Through Consulting\thanks{This work is supported by the National Science Foundation under Grant No. CCF-1718549. }}
\author{
Yiling Chen \\
Harvard University\\
{\normalsize yiling@seas.harvard.edu}\\
\and
Haifeng Xu\\
University of Virginia\\
{\normalsize hx4ad@virginia.edu}
\and 
Shuran Zheng \\
Harvard University\\
{\normalsize shuran\_zheng@seas.harvard.edu}
}
\date{}
\begin{document}


\begin{titlepage}
	\clearpage\maketitle
	\thispagestyle{empty}

\begin{abstract}
	
    We consider a monopoly information holder selling information to a \emph{budget-constrained} decision maker, who may benefit from the seller's information. The decision maker has a utility function that depends on his action and an uncertain \emph{state of the world}. The seller and the buyer each observe a private signal regarding  the state of the world, which may be correlated with each other.  The seller's goal is to sell her private information to the buyer and extract maximum possible revenue, subject to the buyer's budget constraints. We consider three different settings with increasing generality, i.e.,  the seller's signal and the buyer's signal can be independent, correlated, or follow a general distribution accessed through a black-box sampling oracle.  For each setting,  we design information selling mechanisms which are both optimal and simple in the sense that they can be naturally interpreted, have succinct representations, and can be efficiently computed. Notably, though the optimal mechanism exhibits slightly increasing complexity as the setting becomes more general,  all our mechanisms share the same format of acting as a \emph{consultant} who recommends the best action to the buyer but uses different and carefully designed payment rules for different settings.  Each of our optimal mechanisms can be easily computed by solving a single polynomial-size linear program. This significantly simplifies  exponential-size LPs solved by the Ellipsoid method in the previous work, which computes the optimal mechanisms in the same setting but \emph{without} budget limit. Such simplification is enabled by our new characterizations of the optimal mechanism in the (more realistic) budget-constrained setting.   
\end{abstract}

\end{titlepage} 

\newpage 

\section{Introduction}

Recent years have seen a growth of information markets thanks to the tremendous increase in the volume and variety of the online data sources. The information traded includes consumer information (e.g. Acxiom, Nielsen, Oracle), credit reports (e.g. Experian, TransUnion, FICO), recommendations (e.g. Waze, Tripadvisor), etc., which can be valuable to decision makers like advertisers, retailers and loan providers.

Motivated by these applications, this paper considers a very basic setting along this line and studies how a monopoly information provider can sell information optimally to a single decision maker. The decision maker's utility depends on the action he takes (e.g. to lend the money or not) and a random \emph{state of the world}, which is uncertain to the decision maker (e.g., whether the borrower will pay the debt). The information provider and the decision maker each observe a private signal regarding the state of the world. Their signals can be correlated to one another. The information provider's goal is to sell her private signal to the decision maker in a way that maximizes the revenue, knowing that the buyer will choose what is best for himself.

This problem has been formulated and studied by~\cite{Babaioff:2012:OMS:2229012.2229024}. They identified conditions under which there is an optimal one-round mechanism and provide characterizations of the optimal mechanism. In particular, they show that a one-round optimal mechanism exists in two special settings: (1) the seller's signal and the buyer's signal are independent, \emph{or} (2) the buyer is \emph{committed} to complete the entire protocol  even if aborting the protocol will give him higher utility.  
The buyer commitment assumption is not as unrealistic as it might first sound. In particular,  
 \cite{Babaioff:2012:OMS:2229012.2229024} show that buyer commitment can be obtained using the following approach: first ask the buyer to deposit a large amount of money, then run the optimal mechanism for committed buyers, and at last refund the buyer his deposit less the payment in the mechanism.  
 The deposit step enforces the buyer's commitment to the payment of the mechanism.\footnote{This is crucial in selling information since  the payment amount itself can be correlated with the information for sale. Thus the buyer can learn information upon knowing the payment amount but before deciding to purchase. In such cases, an uncommitted buyer can simply learn the payment amount and then leave because the payment already reveals useful information to him.}   
 However, the major drawback of this approach --- as also highlighted by  \cite{Babaioff:2012:OMS:2229012.2229024} --- is that both the amount of deposit and the money transfer required by the mechanism can be extremely large compared to players' expected utilities. For instance, they give a simple example with buyer utility function taking values from $[0,5]$. However, the optimal mechanism has to ask the buyer to deposit an amount of $25004$ and then either returns $0$ or returns $50000$ to the buyer, yielding an optimal expected seller revenue no more than $2$. Clearly, such a mechanism would \emph{not} be practical in reality.  


Given the above deficiencies, this work considers a more realistic setting with a \emph{budget-constrained} buyer. In particular, the buyer in our model has a budget, indicating the maximum amount he can afford. We consider both \emph{public} budget which is known to the seller a priori, and \emph{private} budget which the seller also needs to elicit in the mechanism.  In the latter case, the buyer may misreport his private budget if this is beneficial to him. 
We assume that the seller also has a budget as the maximum amount she can afford to pay the buyer (as seen from the previous example, the optimal mechanism sometimes requires the seller to pay the buyer). We consider three settings with increasing generality --- i.e., independent signals, correlated signals, and signals drawn from a general distribution accessed through a black-box sampling oracle --- and for each setting we are able to design information selling mechanisms which are  both optimal (approximately optimal when signal distributions are accessed through samples) and simple in the sense that they can be naturally interpreted, have succinct representations, and can be efficiently computed. Notably, in contrast to the optimal mechanisms by \cite{Babaioff:2012:OMS:2229012.2229024} which are formulated as exponential-size linear programs and solved by using the ellipsoid method on the dual LP, all our mechanisms can be directly computed by solving a single polynomial-size linear program thanks to our new and succinct characterization of the optimal mechanisms.  In some sense, our results can be viewed as generalizations of the results by \cite{Babaioff:2012:OMS:2229012.2229024} since when the payments in their mechanisms are smaller than the budget limits, our mechanisms (though simpler) also achieve optimality as their mechanisms whereas when the payments in their mechanisms are too large, our mechanisms are more realistic by imposing payment limits. 
Next, we elaborate our results in more details. 

\subsection{Our Results} 


When the seller's signal and the buyer's signal are independent and the buyer's budget is public, we prove that the optimal mechanism has the following simple format: (i) it asks the buyer to report his signal; (ii) based on the  report, the seller charges the buyer a fixed amount of payment; (iii) the seller maps (possibly randomly) her signal to an action recommendation  to the buyer.  The mechanism has the following properties: (1) it is individually rational (IR) and incentive compatible (IC); (2) each action recommended to the buyer is indeed his best response, which we  call the \emph{obedience} constraint; (3) the optimal mechanism can be computed directly from solving a polynomial-size linear program (LP).  This mechanism is very much like the process of consulting, during which a client (the buyer) tells his type to the consultant (the seller) and pays the consultant for receiving the best action recommendation. As a result, we term this mechanism the \emph{consulting mechanism with direct payment ($\texttt{CM-dirP}$)}.   

When the buyer budget is private, we show via an example that any mechanism with a single-round transfer \emph{cannot} be optimal even when the buyer's signal and the seller's signal are independent. Nevertheless, for independent signals, we prove that the consulting mechanism with a slightly different payment method, coined the \emph{consulting mechanism with deposit and return ($\texttt{CM-depR}$)}, achieves the optimal revenue. In particular, instead of asking for a direct payment as in $\texttt{CM-dirP}$, the $\texttt{CM-depR}$ mechanism first asks the buyer to deposit his (private) budget and then refund the buyer his budget less the payment (thus two-rounds of transfers). Besides satisfying IR, IC and obedience, the optimal $\texttt{CM-depR}$  mechanism can also be computed by a polynomial-size LP.    

For the general setting with correlated signals drawn from an explicitly given prior distribution, we prove that for both public and private budget cases, the consulting mechanism with a simple two-round transfer process, coined the \emph{consulting mechanism  with probabilistic return ($\texttt{CM-probR}$)},  achieves the optimal revenue.  The $\texttt{CM-probR}$ has the following format: (i) it asks the buyer to report his signal and deposit his budget; (ii) based on the report, the mechanism maps the seller signal (randomly) to an action recommendation together with an amount of refund, which is either $0$ (i.e., no refund) or the buyer's full deposit plus the seller's full budget.  
The main difference between $\texttt{CM-probR}$ and the previous consulting mechanisms is that the payment amount is no longer pre-determined before any information disclosure, but is contingent upon the recommended action and thus contains information regarding the seller's private signal. Similarly, besides satisfying IR, IC and obedience, the optimal $\texttt{CM-probR}$  mechanism can also be computed by a polynomial-size LP. 
 Our proof of the optimality and the computability of $\texttt{CM-probR}$  mechanisms features a novel use of the duality theory, which may be of independent interest.  Our starting point is an LP formulation 
 $\P$ for computing the optimal mechanism which however has exponentially many constraints \emph{and} exponentially many variables. 
 We simplify this LP as follows. We first turn to the dual LP $\D$ and leverage its special structure to eliminate a majority of the dual variables and convert $\D$  into another linear program $\D'$ with only  polynomially many variables though still exponentially many constraints. 
 At this point, we can employ the celebrated ellipsoid method to solve $\D'$, which however is neither practically efficient nor able to give us an interpretable mechanism.  We instead turn to the dual of $\D'$, denoted as $\P'$, which is different from $\P$. Interestingly, despite that $\P'$ has exponentially many variables, we manage to prove that its solutions correspond to a special class of information selling mechanisms and moreover, always admits an optimal solution which is a $\texttt{CM-probR}$. This yields the optimality of  $\texttt{CM-probR}$  mechanisms.
 
 Finally, we extend our results to the setting where the seller does not know the prior distribution of the signals explicitly and can only access it through a black-box sampling oracle. This applies to the cases where distributions are only accessible through samples. Using a Monte-Carlo sampling approach, inspired by the idea of \cite{dughmi2016algorithmic}, we show that a mechanism with  approximate optimality, IR, IC and obedience can be implemented in polynomial time with polynomially many samples from the prior distribution.    

\subsection{Related Work}
Most relevant to this paper is the work by \cite{Babaioff:2012:OMS:2229012.2229024} who study the optimal mechanism design for selling information to an imperfectly informed decision maker (henceforth the \emph{buyer}) without budget consideration. 
Our setting is essentially the same as that of \cite{Babaioff:2012:OMS:2229012.2229024}, but under \emph{budget constraints}. This is motivated by the drawbacks of their optimal mechanisms  which may require an extremely large amount of transfers even to extract a tiny amount of revenue. On the descriptive side,  we show that budget constraints indeed affect the characterization of optimal mechanisms. For example, when the buyer has a private budget constraint, any mechanism with one-round transfer cannot be optimal even when signals are independent, which is opposed to the characterization in \cite{Babaioff:2012:OMS:2229012.2229024} for the same setting but without budgets.  On the prescriptive side, all the mechanisms in \cite{Babaioff:2012:OMS:2229012.2229024} rely on solving exponentially large linear programs via the ellipsoid method whereas all our mechanisms can be computed  by directly solving a polynomial-size linear program. This is enabled by our new characterizations of the optimal mechanism in much simpler and more interpretable forms. 

The sale of information has attracted much attention in the economic literature.  To our knowledge,  \citet{admati1986monopolistic, admati1990direct} are among the first to explicitly consider the sale of information. They consider a monopoly who sells information regarding a risky asset to a continuum of  homogeneous traders in a financial market, which then affects the equilibrium price of the market. They show that the optimal selling method may involve revealing different and noisy signals to different traders. In contrast to the focus of \citet{admati1986monopolistic, admati1990direct}  on selling information to multiple homogeneous traders with externalities,  we study the sale of  information to a single decision maker with heterogeneous types.  Recently, \cite{bergemann2015selling} study the sale of cookies, a particular type of informational goods, motivated by the context of online advertising.  \cite{malenko2018proxy}  studies how to sell information to voters and characterize when the sale of information leads to a more informative voting outcome.  \cite{Bergemann2018Info} considers the sale of information to a decision maker, which is similar to our setting  but without budget constraints.  Moreover, they restrict the design space to a particular format of mechanisms as a menu of statistical experiments. They establish various properties regarding the optimal mechanism and characterize the optimal mechanism in the cases of binary states and actions, or binary types.  In particular, part of our proof for the characterization results of independent signals uses similar ideas by \citet{Bergemann2018Info} to reduce the number of signals required in the signaling scheme. { \citet{esHo2007price} studied the  consultant's problem of optimally contracting with the client. 
	They considered a different model in which a consultant incurs a cost to acquire a private observation, and once the observation is acquired, the consultant will fully reveal the information to the client.}  

{ Selling physical goods to buyers with budget constraints has been studied extensively in the literature of mechanism design (see, e.g., \citep{che2000optimal, chawla2011bayesian, devanur2017optimal} and references therein). }
Our work is conceptually related to, yet fundamentally different from, selling physical goods and meanwhile revealing information regarding the value of the item for sale.  \citet{Emek12,Miltersen12,Badanidiyuru2018} study how an auctioneer can  strategically reveal information to bidders in order to affect their valuation regarding the item with the ultimate goal of maximizing revenue in a fixed auction format (in particular, the second-price auction).  \cite{Daskalakis2016does} consider  a seller of a physical item who also possesses private information regarding the item's value to buyers, and studies the optimal mechanism for jointly selling the item  and the seller's private information. They show that this joint design reduces to optimal mechanism design for selling multiple items without information involved.  \cite{smolin2019disclosure} considers a similar situation where a seller sells an item, characterized by a vector of attributes, to a buyer. They also study the optimal joint design of the item pricing mechanism and information selling mechanism and provide a characterization of the optimal mechanism within a particular design space.

Finally, our work is also related to the rich literature of persuasion, a.k.a. \emph{signaling} or \emph{information design}. Our model is particularly relevant to persuasion of a privately informed receiver by \cite{kolotilin2017persuasion} since the seller in our model reveals information by a signaling scheme to a buyer with a private signal. The techniques we use in Section \ref{sec:black-box} is inspired by algorithms for Bayesian persuasion designed by \cite{dughmi2016algorithmic}. 
Due to the space limit, we do not give a thorough review of the large literature in persuasion but refer the reader to the recent survey by \cite{kamenica2018bayesian} (from the economic perspective) and \cite{Dughmi2017} (from the algorithmic perspective).     
 




\section{Preliminaries}\label{sec:model}
\paragraph{Basic Setup}
We consider a monopoly information holder (call \emph{her} the \emph{seller}) selling information to a budget-constrained decision maker (call \emph{him} the \emph{buyer}), who needs to choose an action $a \in A$ and has a budget $b\in B$.  The buyer's utility  depends  on both his action $a$ and a random \emph{state of the world}. The seller and the buyer each observe a private \emph{signal} about the state of the world, denoted by $\omega \in \Omega$ and $\theta \in \Theta$, respectively.  Sets $A,B,\Omega,\Theta$ are all finite. Let $u(\omega, \theta, a)$ denote the buyer's utility of action $a$ when the seller's signal and buyer's signal are realized to $\omega, \theta$, respectively.\footnote{Alternatively and equivalently, one can think of the buyer's utility as a function $u'$ of action $a$ and the state of world denoted as  by $s$.  Then our definition of $u(\omega, \theta, a) $ simply corresponds to $\E_s[u'(s,a) | \omega, \theta].$ } For convenience, we sometimes also call $\omega$ and $\theta$ the \emph{type} of the seller and buyer. Moreover, we assume that the seller also has a budget $M$ to run the mechanism and $M$ is \emph{publicly known} (as we shall see, the optimal mechanism sometimes requires the seller to pay the buyer with some probability). 
 

We assume that the signal pair $\omega, \theta$ and the budget $b$ are drawn from a publicly known prior distribution $\mu(\omega, \theta, b)$ and the utility function $u(\omega, \theta, a)$ is also public knowledge. The surplus for a buyer of type $\theta$ from fully observing the seller's signal  $\omega$, denoted as $\delta(\theta)$, is
$$
\delta(\theta) = \E_\omega\left[ \max_a u(\omega, \theta, a)\big\vert \theta\right] - \max_{a\in A} \E_\omega\left[u(\omega, \theta, a) | \theta\right] \geq 0.
$$

The seller's goal is to sell her information regarding $\omega$ to the budget-constrained buyer to extract the largest possible revenue. As is typical in mechanism design, we assume that the seller is a trusted authority and will  not defect after posting a selling mechanism based on her knowledge of $\mu(\omega, \theta, b)$ and $u(\omega, \theta, a)$. However, the buyer will strategically respond to the mechanism in order to maximize his own utility and may defect if that is beneficial.  We remark that this basic model has been studied in some previous works \cite{Babaioff:2012:OMS:2229012.2229024,Bergemann2018Info}. The key difference between our model and previous ones is that  our seller and buyer are both \emph{budget-constrained}.

\paragraph{The Design Space -- Generic Interactive Protocols.} We now describe the design space, i.e., the class of mechanisms within which our mechanism will be optimal. A key observation by~\cite{Babaioff:2012:OMS:2229012.2229024}  is that a mechanism that interacts with the buyer for \emph{multiple} rounds can possibly extract more revenue than any one-round mechanism. Therefore, we consider mechanisms that may interact with the buyer for multiple rounds via a \emph{generic interactive protocol}, as defined by \cite{Babaioff:2012:OMS:2229012.2229024}.  At a high level, the protocol/mechanism can be viewed as an extensive-form game represented as a tree with three types of nodes: \emph{seller} nodes, \emph{buyer} nodes and \emph{transfer} node.   Each move at a seller node can be seen as revealing some amount of information regarding $\omega$ to the buyer whereas 
each move at a buyer node reveals buyer information (e.g., the buyer's type) to the seller. All the payments ---  from the buyer to the seller or vice versa --- happen at transfer nodes.  More formally, the protocol is defined as follows.  


\begin{definition}[\cite{Babaioff:2012:OMS:2229012.2229024}]
A \emph{generic interactive protocol} is a finite-size
protocol tree with the following three types of non-leaf nodes:
\begin{enumerate}
    \item {\bf Seller node}: at a seller node $n$, the seller randomly moves to one of the children of $n$ according to her private signal $\omega$. Let $C(n)$ be any child of $n$ and  $p_n(\omega, C(n))$ be the probability of moving to $C(n)$ when the seller's signal is $\omega$. Note that $\sum_{C(n)} p_n(\omega, C(n)) = 1$ for all $\omega$. 
    \item {\bf Buyer node}: at a buyer node $n$, the buyer (with type $\theta$ and budget $b$) randomly moves to one of the children $C(n)$. A buyer's strategy at node $n$ is a transition function $\phi_n((\theta, b),C(n))$, which indicates the probability of moving to the node $C(n)$. Naturally,  it should satisfy $\sum_{C(n)} \phi_n((\theta,b),C(n)) = 1$.
    \item {\bf Transfer node}: any transfer node $n$ only has one child and an associated payment to the seller of amount $t(n)$.  
    Note that $t(n)$  can be  negative, meaning the seller pays the buyer.  
\end{enumerate}
\end{definition}

The whole protocol (including all the transition probabilities) is publicly known and both players know which node they are currently at during the execution.  However, each player's private type is only known to themselves. Nevertheless, the buyer at any buyer node can form a posterior belief on $\omega$ according to what has been observed so far along the tree path based on the transition function $p_n(\omega, C(n))$ at each seller node. 
During the game, the buyer can choose to leave the protocol if that is beneficial or the total spending exceeds his budget. Otherwise, the game terminates at some leaf node of the protocol. In either case, the buyer will then choose the best action based on his posterior belief of $\omega$. 
{ Furthermore, the seller's budget constraint restricts the net transfer from seller to buyer to never exceeds $M$ at any node of the protocol tree, and the buyer will never choose to visit a node of the protocol tree at which the net transfer from buyer to seller exceeds $b$.}

We remark that all the optimality claims throughout the paper will be within the design space of all generic interactive protocols, though we  tend to omit this emphasis for convenience.  
We first simplify the design space by invoking the revelation principle. 
\begin{lemma}[Revelation principle] \label{lem:revelation}
	There exists an optimal generic interactive protocol in which the buyer truthfully reports type $\theta$ and budget $b$ at the beginning and takes no other actions. Therefore, the protocol only has one buyer node, i.e., the root of the protocol tree. We call such a mechanism a \emph{revelation mechanism}.  
\end{lemma}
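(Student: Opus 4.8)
The plan is to establish the revelation principle for generic interactive protocols by a standard simulation argument, adapting it carefully to the presence of budgets and the two types of "private data" the buyer holds (type $\theta$ and budget $b$). Given any optimal generic interactive protocol $\Pi$ together with an optimal buyer strategy $\{\phi_n\}$ (which exists because the protocol tree is finite, so the buyer faces a finite extensive-form decision problem), I would construct a new protocol $\Pi'$ with a single buyer node at the root at which the buyer reports a pair $(\hat\theta,\hat b)$. After receiving the report, the mechanism internally simulates the buyer's play: whenever $\Pi$ would reach a buyer node $n$, $\Pi'$ moves to the child $C(n)$ with probability $\phi_n((\hat\theta,\hat b),C(n))$, using the \emph{reported} pair rather than the true one. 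Seller nodes and transfer nodes are copied verbatim, so the distribution over information disclosed and payments, \emph{as a function of the reported pair and the seller's signal}, is identical to what $\Pi$ produces when a buyer of type $(\hat\theta,\hat b)$ plays according to $\phi$.

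Next I would verify the three properties this construction must satisfy. First, truthful reporting is optimal: if a buyer of true type $(\theta,b)$ deviates to report $(\hat\theta,\hat b)$ in $\Pi'$ and thereafter takes a best action against his posterior, he obtains exactly the payoff that a buyer of type $(\theta,b)$ would get in the original $\Pi$ by following the strategy $\phi_{(\hat\theta,\hat b)}$ (the strategy that $\phi$ prescribes for reported type $(\hat\theta,\hat b)$) — because the induced joint distribution over (disclosed information, payments) is the same, and the buyer's ex-post utility and action choice depend only on that distribution and on his true $\theta$, not on what he reported. Since $\phi$ was chosen to be an optimal strategy for the \emph{true} type, following $\phi_{(\theta,b)}$ is at least as good as following $\phi_{(\hat\theta,\hat b)}$, so truthful reporting weakly dominates. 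Second, seller revenue is preserved: when everyone reports truthfully, $\Pi'$ reproduces the exact play of $\Pi$ under the optimal buyer strategy, hence the same expected transfer; so $\Pi'$ is also optimal. Third — and this is the point that needs the most care — I must check that $\Pi'$ respects the budget constraints in the sense the model requires. The seller's budget $M$ constraint is immediate since transfer nodes and the net-transfer-from-seller bound are inherited node by node. The buyer's budget constraint is the subtle one: in the original model the buyer "will never choose to visit a node at which the net transfer from buyer to seller exceeds $b$", so the optimal strategy $\phi$ for a buyer with true budget $b$ already routes around any path whose cumulative payment would exceed $b$; consequently the simulated play in $\Pi'$ under a truthful report never crosses that threshold either, and a buyer who misreports $\hat b \neq b$ and thereby is steered onto an over-budget path would simply be unable to complete it — which is consistent with, and only reinforces, the incentive to report truthfully.

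The main obstacle I anticipate is handling the buyer's option to \emph{leave the protocol early} together with the budget feasibility bookkeeping. In $\Pi$ a buyer can abort at any buyer node; in $\Pi'$ there are no interior buyer nodes, so this option is gone — I need to argue that folding "leave now" into the root report loses nothing. The clean way is to let the optimal strategy $\phi$ in $\Pi$ include, at each buyer node, the choice to quit (modeled either as moving to a designated leaf child or as a special action), so that the reported type $(\hat\theta,\hat b)$ already encodes the entire continuation plan including when to stop; then the simulation in $\Pi'$ faithfully reproduces the abort behavior, and since $\phi$ was optimal, no additional early-exit flexibility could help the truthful buyer. One must also make sure the posterior belief on $\omega$ that the buyer forms at a leaf of $\Pi'$ equals the posterior he would have formed at the corresponding stopping point of $\Pi$ — this holds because $\Pi'$ discloses information through exactly the same sequence of seller-node moves, and the buyer, knowing the (public) protocol and his own report, can reconstruct the same path distribution. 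With these points pinned down, the conclusion that an optimal revelation mechanism exists, with a single buyer node at the root, follows. I would present the argument at this level of detail rather than writing out the tree transformation formally, since the construction is routine once the early-exit and budget-routing subtleties are addressed.
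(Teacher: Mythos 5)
Your proposal is correct and follows essentially the same route as the paper: simulate the buyer's optimal strategy (including his abort decisions) inside the protocol tree using the reported pair $(\hat\theta,\hat b)$, observe that every deviation in the collapsed protocol --- misreporting, walking away early, or disobeying at the end --- maps to a feasible strategy in the original protocol and hence cannot beat the truthful payoff, and note that budget feasibility is inherited because the optimal strategy already routes around over-budget paths while an over-reporting buyer who gets stranded corresponds to a feasible ``play-then-quit'' strategy in the original tree. One small imprecision: the early-exit option is not actually ``gone'' in the collapsed protocol (in this model the buyer may leave at any node, not only at buyer nodes), but your argument already handles this since any such exit corresponds to an abort at the matching node of the original protocol and is therefore dominated by the optimal strategy there.
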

The proof can be found in Appendix~\ref{app:revelation}.
A revelation mechanism is  \emph{incentive compatible} (IC) if it is an optimal strategy for the buyer to report his true type $\theta$ and budget $b$ at the beginning, and is  \emph{individually rational} (IR) if the expected  utility of participating the mechanism is non-negative for any buyer. 

\paragraph{Information Revelation via Signaling Schemes.} Though the generic interactive protocols reveal information sequentially in general and can be intricate, we prove that  optimal mechanisms will only need to reveal information once (though transfers may happen at multiple rounds)\footnote{ If one imposes a no-positive-transfers constraint, that $t(n)$ cannot be negative at any transfer node, then \cite{Babaioff:2012:OMS:2229012.2229024} present an example where the seller may need to reveal information more than once in the optimal mechanism.}, and such information revelation can be described via a \emph{signaling scheme}.  Concretely, a signaling scheme is a \emph{randomized}  mapping from the seller's signal set $\Omega$ to a set of signals $\Sigma$, which can be fully described by the likelihood function $\{ p(\omega, \sigma) \}_{\omega \in \Omega, \sigma \in \Sigma}$ where $p(\omega, \sigma)$ is the probability of sending signal $\sigma$ given seller signal $\omega$.   



With slight abuse of notation, we use $\mu(\omega)$ to denote the buyer's  belief regarding the seller's signal $\omega$. After receiving any signal $\sigma$ from the signaling scheme, the buyer with prior belief $\mu$ will update his belief via a standard Bayesian update, and infer that the buyer signal is $\omega$ with probability 
\begin{equation}\label{eq:buyer-update}
\Pr(\omega| \sigma) =\frac{ \mu(\omega) p(\omega, \sigma)  }{  \sum_{\omega' \in \Omega}  \mu(\omega') p(\omega', \sigma)  }. 
\end{equation} 

As an example, each buyer node $n$ in the generic interactive protocol, together with its transitions described by $\{ p_n(\omega, C(n))  \}_{\omega, C(n)}$, is equivalent to a signaling scheme in which each $C(n)$ corresponds to a signal and  $p_n(\omega, C(n))$ is the probability of sending signal $C(n)$ given the seller type $\omega$. Under this interpretation, moving to node $C(n)$ is equivalent to that the seller sends a signal. The buyer then observes the signal and updates his belief regarding $\omega$ according to Equation \eqref{eq:buyer-update}. Though such information revelation and buyer's belief updates may happen multiple times in the generic interactive protocol, we will show that it suffice to reveal information only once in the optimal mechanism.


\paragraph{Player Utilities and the Revenue Maximization Problem.} 

Let  $Z_{\theta,b}$ denote the (random) node that the protocol ends at (so $Z_{\theta, b}$ could be a leaf, or a node where the buyer choose to leave) when interacting with a buyer of type $\theta$ and budget $b$. Let $\mu(\omega|Z_{\theta,b})$ denote the buyer's posterior belief probability of $\omega$ at $Z_{\theta, b}$ and $t(Z_{\theta, b})$ be the summed transfer on the path from the root to $Z_{\theta,b}$. The expected utility of a buyer of type $\theta$ and budget $b$ is then equal to 
\begin{align*}
U(\theta,b) = \E_{Z_{\theta,b}} \left[\max_a  \sum_{\omega\in \Omega} [u(\omega, \theta, a) \cdot \mu (\omega|Z_{\theta,b}) - t (Z_{\theta,b}) \right],
\end{align*}
where the randomness of node $Z_{\theta,b}$ is due to the internal randomness of the mechanism. Note that the buyer can always leave at the beginning, with expected utility
$\max_a \  \sum_{\omega \in \Omega } u(\omega, \theta, a) \cdot \mu(\omega| \theta,b)$.

On the other hand, the seller's revenue of a generic interactive protocol $M$ is  
\begin{align*}
\texttt{Rev}(M) = \sum_{\theta, b}  \mu(\theta,b) \cdot \E_{Z_{\theta,b}} \left[ t(Z_{\theta,b})  \right],
\end{align*}
where $\mu(\theta, b) = \sum_{\omega} \mu(\omega, \theta, b)$ is the prior probability of a buyer of type $\theta$ and budget $b$ showing up. The seller's goal is to choose the optimal protocol $M$ that maximizes the above expected revenue, assuming  a rational buyer who will optimize his own utility in this mechanism. 

\section{Independent Signals}\label{sec:indep} 
An interesting special case of the problem is that the buyer's signal and budget  $(\theta, b)$  is independent from the seller signal $\omega$, i.e., $\mu(\omega, \theta, b) = \mu(\omega) \cdot \mu(\theta, b)$ where $\mu(\cdot)$ denotes the prior distribution of the corresponding random variables.  
We study this special case separately due to two reasons. First,  this setting has another natural interpretation --- i.e., the $\theta$ can also be interpreted as  the buyer's private type that captures the buyer's \emph{information-irrelevant} characteristics (thus independent of the seller's signal)  such as, e.g., properties of his decision problem $u$.   Second,  the optimal mechanism of this case has a more concise representation. 
In particular, we prove that in this case, there exists an optimal budget-feasible mechanism with the following simple procedure: (1) it asks for the buyer's type $\theta$ and budget $b$; (2) charges the buyer a fixed amount  (via direct payment or first ask for a deposit of $b$ and then return a part of it); (3) recommend an action for the buyer to take.  This mechanism is similar in spirit to a ``consulting procedure'' during which the buyer (a client) tells his type to the seller  (the consultant). The seller then  charges the buyer a fixed amount of money and recommends an action to the buyer to take based on his reported type. 
We thus call such mechanisms  \emph{consulting mechanisms}. It turns out that, with independent signals, the optimal mechanism needs to use slightly different payment methods for the public and private budget cases. We thus consider them separately in subsection \ref{sec:indep:pub} and \ref{sec:indep:priv}.    



\subsection{Public Budget}\label{sec:indep:pub}

In this subsection, we prove that when the signals are independent and the buyer has a fixed public budget $b$, there always exists an optimal consulting mechanism with one-round direct  payment from the buyer to the seller, formally defined as follows. 
\begin{definition} [Consulting mechanism with direct payment]
	A \emph{consulting mechanism with direct payment ($\texttt{CM-dirP}$)} for a buyer with publicly known budget $b$ proceeds as follows: 
	\begin{enumerate}
		\item The seller \emph{commits} to a payment amount $t_{\theta}(\leq b)$ and an action recommendation policy described as a  mapping $p_{\theta}: \Omega \to \Delta A$ for each buyer type $\theta$, where $p_{\theta}(\omega) \in \Delta A$  with $p_{\theta}(\omega, a)$ denoting the probability of recommending action $a$ given seller signal $\omega$.  $p_{\theta}$ is required to be \emph{obedient} --- i.e., conditioned on any recommended action, it  must indeed be an optimal action for the buyer given his information. 
		\item The seller asks the buyer to report his type $\hat{\theta}$. 
		\item The seller charges the buyer an amount $t_{\hat{\theta}}$.
		\item Based on her  signal $\omega$, the seller samples an action $a \sim p_{\hat{\theta}}(\omega)$ and recommends $a$ to the buyer. 
	\end{enumerate} 
\end{definition}
We make a few remarks about $\texttt{CM-dirP}$. First, $\texttt{CM-dirP}$ is indeed a generic interactive protocol, described as follows: (1) its root is  the only buyer node at which he reports his type $\hat{\theta}$; (2) each child of the root corresponds to a buyer type $\hat{\theta}$ and is a transfer node  with transfer amount $t_{\hat{\theta}}$ (to the seller); (3) following each transfer node is a seller node whose children are leaves, each indexed by a buyer action $a \in A$, and the transition probability to leaf node $a \in A$ is $p_{\hat{\theta}}(\omega, a)$. Second, as described in Section \ref{sec:model}, the transition from seller node to leaves is effectively a signaling scheme, in which the set of signals coincides with the set of actions. Upon receiving an action recommendation $a$, the buyer infers a posterior probability of $\omega$ by Bayes updates, as follows: $\Pr(\omega|a) = \frac{ \mu(\omega) p(\omega, a)  }{  \sum_{\omega' \in \Omega}  \mu(\omega') p(\omega', a)  }$. Finally,  an intrinsic constraint of $\texttt{CM-dirP}$  is that the seller's action recommendation must satisfy the \emph{obedience constraint}. That is,  the recommended action must indeed be an optimal action for the buyer's  reported type or, more formally, $$a  = \argmax_{a' \in A} \sum_{\omega} u(\omega, \hat{\theta}, a') \cdot \frac{ \mu(\omega) p_{\hat{\theta}}(\omega, a)  }{  \sum_{\omega' \in \Omega}  \mu(\omega')  p_{\hat{\theta}}(\omega', a)  }. $$ 

Our main result of this subsection is to prove the optimality of $\texttt{CM-dirP}$ mechanisms, described as follows. The proof of Theorem \ref{thm:consult-public-opt} is deferred to Appendix~\ref{sec:public}.

\begin{theorem}\label{thm:consult-public-opt}
When the buyer budget $b$ is public and the buyer signal $\theta$ is independent from $\omega$,  there always exists an optimal mechanism that is an IC \emph{consulting mechanism with direct payment ($\texttt{CM-dirP}$)}.   
\end{theorem}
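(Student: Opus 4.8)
The plan is to take an optimal mechanism and turn it into a $\texttt{CM-dirP}$ through a short chain of revenue‑preserving reductions. By Lemma~\ref{lem:revelation} we may start from an optimal revelation mechanism; since $b$ is public, its unique buyer node is the root, where the buyer reports a type $\hat\theta$, and conditioned on that report only seller nodes and transfer nodes remain, forming a subtree $T_{\hat\theta}$. The first reduction applies the structural result of Section~\ref{sec:model} that an optimal mechanism reveals information only once: this lets us assume $T_{\hat\theta}$ first executes a ``blind'' prefix of transfer nodes and of seller nodes whose transitions are independent of $\omega$, then runs a single signaling scheme $\pi_{\hat\theta}\colon\Omega\to\Delta\Sigma$, then possibly more transfer nodes.

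The second reduction collapses all transfers into one up‑front charge $t_{\hat\theta}\le b$. Transfers that occur after $\pi_{\hat\theta}$ can be deleted: a non‑committed buyer has by then extracted all the information, so he leaves rather than make a positive payment, and a negative post‑revelation transfer only lowers revenue. In the blind prefix the buyer learns nothing about $\omega$, so this prefix is merely a device for extracting a possibly coin‑dependent payment subject to the budget bound holding on \emph{every} realized path; because the buyer is risk‑neutral and the budget caps the \emph{realized} transfer, this device collects at most $b$ in expectation and at most the buyer's value for $\pi_{\hat\theta}$ above his outside option, so a single deterministic charge $t_{\hat\theta}\le b$ equal to its expected take is weakly better for the seller and leaves the buyer willing to go through to $\pi_{\hat\theta}$ (in the corner case where the honest buyer would abandon the prefix, set $t_{\hat\theta}=0$ and $\pi_{\hat\theta}$ to full non‑disclosure). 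This is exactly the step that exploits publicness of the budget; with a private budget one cannot enforce the per‑path bound up front, which is why the private‑budget mechanism later needs deposit‑and‑return.

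The third reduction puts $\pi_{\hat\theta}$ into recommendation form by merging all signals that induce the same buyer best response, which by the standard revelation argument for signaling schemes preserves both the buyer's value and the revenue and yields an obedient scheme $p_{\hat\theta}$; the mechanism is now literally a $\texttt{CM-dirP}$. Individual rationality is inherited, since no reduction decreased the truthful buyer's utility and disclosure has nonnegative value. For incentive compatibility we use independence: the seller‑signal prior $\mu(\omega)$ faced by the buyer does not change with his report, so a true type $\theta$ reporting $\theta'$ obtains the value of $p_{\theta'}$ under $\mu$ minus $t_{\theta'}$, which is at most what he could have secured by reporting $\theta'$ and then playing ``go through the prefix, observe $\pi_{\theta'}$, leave'' in the original mechanism; IC of the original mechanism bounds this by his truthful payoff, which the reductions preserve. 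Finally, bounding $|\Sigma|\le|A|$ via the signal‑reduction idea of \cite{Bergemann2018Info} makes the optimal $\texttt{CM-dirP}$ the solution of a polynomial‑size linear program in the variables $\{t_\theta\}$ and $\{p_\theta(\omega,a)\}$ subject to the obedience, IC and IR constraints.

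The step I expect to be the crux is the second reduction: converting an arbitrary interleaving of multi‑round transfers and blind seller moves --- with the buyer free to abandon at any node --- into one budget‑feasible up‑front payment, cleanly handling the cases where the honest buyer would quit partway, and doing so in a way that transparently breaks once the budget is private. Threading incentive compatibility through all three reductions is the secondary subtlety, and this is where the independence assumption is genuinely needed.
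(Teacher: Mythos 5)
Your overall architecture (collapse the protocol to a one-shot signaling scheme with a fixed up-front charge, then merge signals into obedient action recommendations) matches the paper's two-lemma proof, and your use of independence for IC and of signal-merging for obedience is the right idea. But the first two reductions have genuine gaps. First, you invoke ``the structural result of Section~\ref{sec:model} that an optimal mechanism reveals information only once'' to decompose $T_{\hat\theta}$ into a blind prefix, a single signaling scheme, and a suffix. That statement is not an available lemma --- it is a forward-looking claim whose proof is precisely the content of this theorem and its analogues, so the appeal is circular. The paper needs no such decomposition: it prunes the tree at any node where the truthful buyer would abort, and then observes that the \emph{entire} remaining subtree, viewed as the map from $\omega$ to a distribution over its leaves $l_i$ with likelihoods $p_{\theta}(\omega,l_i)$, is already one signaling scheme; no separation into blind and informative stages is required.

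Second, your treatment of post-revelation transfers is unsound. You delete negative post-revelation transfers on the grounds that they ``only lower revenue,'' but such payments may be exactly what makes IR or IC hold for the truthful buyer; removing them changes his utility and can destroy feasibility, and assuming the optimal mechanism contains none is assuming part of the conclusion. The paper instead folds \emph{every} transfer on the path to leaf $l_i$ into $\tau(l_i)$ and replaces the whole path-dependent payment by the single deterministic charge $t_{\theta}=\sum_{l_i}\tau(l_i)\sum_{\omega}\mu(\omega)p_{\theta}(\omega,l_i)$; independence guarantees this expectation is identical for every true type contemplating report $\theta$, so all IC and IR constraints are preserved, and $t_{\theta}\le b$ because each realized $\tau(l_i)\le b$. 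Your risk-neutral expectation argument is this same idea, but you apply it only to the blind prefix; applied to the whole path it subsumes the post-revelation transfers and removes the need to argue them away. Your third reduction (merging signals inducing the same best response, with convexity of the deviator's value giving IC) is correct and coincides with the paper's conversion from pricing mappings mechanisms to consulting mechanisms.
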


\subsection{Private Budget}\label{sec:indep:priv}
When the budget  is privy to the buyer, our mechanism will also need  to ask the buyer to report his budget. 
It turns out that in this case, any mechanism with single-round transfer  \emph{cannot} achieve optimality, which is illustrated in the following example. 
\begin{example}\label{ex:box}
We consider a similar setup as the``treasure box'' example used  by~\cite{Babaioff:2012:OMS:2229012.2229024}. In particular, imagine a box with a locker on it. There are two keys labeled with $0$ and $1$, exactly one of which can open the box. The buyer can choose one key and try it. If he opens the
box, he gets the object inside. Let the type of the buyer $\theta \in \{ 0 , 1\}$ encode his value $z_\theta$ for the object, where $z_0 = 120$ and $z_1 = 80$.   The seller knows exactly what
the correct key is, and let it be $\omega \in \{ 0, 1\}$. How should the seller sell her information to the buyer?

In this problem, we  have $\Omega =\Theta = A = \{0, 1\}$, $u(\omega, \theta, a) = \mathds{1} \{\omega = a\} \cdot z_\theta$.  A type-$0$ buyer has a budget of $50$ whereas a type-$1$ buyer has a budget of $100$.   The signals $\theta$ and $\omega$ are independent and uniformly at random. Concretely, the prior distribution can be expressed as follows: $\mu(\omega, \theta, b) = 1/4$ when $(\omega, \theta, b) = (0, 0, 50), (1,0,50),(0, 1, 100), (1, 1, 100)$ and $\mu (\omega, \theta, b) = 0$ otherwise. 
Simple calculation shows that the buyer's surplus from fully knowing the seller's signal  $\omega$ will be  $ \delta(0) = 60$ and $ \delta(1) = 40. $

It can be shown that some  $\texttt{CM-dirP}$ mechanism achieves optimality among all mechanisms \emph{with a single-round of transfer}. Moreover,  the optimal $\texttt{CM-dirP}$ mechanism is to charge a fixed price $40$ and then reveal the exact value of $\omega$, which achieves  expected revenue $40$.\footnote{in Section~\ref{sec:indep:comp}, we will show how to compute the optimal $\texttt{CM-dirP}$ mechanism via a  polynomial-size linear program.} 
However, strictly more revenue can be extracted via the following mechanism with \emph{two rounds of transfers}. 
Consider the mechanism that offers the buyer the following two options: 
\begin{enumerate}
\item Pay $50$ dollars to know the exact value of $\omega$.
\item Pay $100$ dollars first and then get a refund of $61$ dollars together with the exact value of $\omega$.
\end{enumerate}
The type-$0$ buyer only has $50$ dollars, so he can only choose Option 1. The type-$1$ buyer will choose Option 2.
The expected revenue of this mechanism is thus $(50+39)/2 = 44.5$, which is higher than $40$.
\end{example}

We show that consulting mechanisms can still achieve optimality but with a slightly different payment method: first ask the buyer to fully deposit his budget, and then return a part of the deposit back.  
\begin{definition}[Consulting mechanism with deposit and return]
	A \emph{consulting mechanism with deposit and return ($\texttt{CM-depR}$)}  proceeds as follows: (1) The seller commits to a payment amount  $t_{\theta,  b}(\leq b)$ and an action  recommend policy  $p_{\theta, b}: \Omega \to \Delta A$ for each $(\theta,b)$, under \emph{obedience} constraints (similar to Step 1 of the $\texttt{CM-dirP}$ mechanism); (2) Asks the buyer to report his type $\hat{\theta}$ and deposit his budget $\hat{b}$; (3) Returns to the buyer an amount of $\hat{b} - t_{\hat{\theta}, \hat{b}}$; (4) Samples an action $a \sim p_{\hat{\theta},\hat{b}}(\omega)$ and recommends $a$ to the buyer.  
\end{definition}

As can be seen, a $\texttt{CM-depR}$ mechanism is almost the same as $\texttt{CM-dirP}$ except that it charges the buyer $t_{\hat{\theta},\hat{b}}$ by first asking him to deposit the budget $\hat{b}$ and then immediately returning $\hat{b}-t_{\hat{\theta},\hat{b}}$   whereas  $\texttt{CM-dirP}$ just asks for a direct payment $t_{\hat{\theta}}$. This step is useful in the private budget case because it serves as a ``verification'' of the buyer's budget which helps the seller to learn more information about the buyer's type and thus increases her power of price discrimination.  This intuition is also illustrated  in Example \ref{ex:box}. 

\begin{theorem}\label{thm:consult-private-optimal}
When the buyer  has a private budget and $(\theta, b)$ is independent from $\omega$,  there always exists an optimal mechanism that is an IC \emph{consulting mechanism with deposit and return ($\texttt{CM-depR}$)}.  
\end{theorem}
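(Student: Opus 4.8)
The plan is to follow the same three-step skeleton behind Theorem~\ref{thm:consult-public-opt} (see Appendix~\ref{sec:public}) and then isolate the single extra ingredient forced by the private budget. By the revelation principle (Lemma~\ref{lem:revelation}) it suffices to start from an optimal revelation mechanism that is IR and IC: the buyer reports a pair $(\hat\theta,\hat b)$ at the root and is thereafter passive, except that he may abandon the protocol at any node if that is strictly profitable. By the structural results of Section~\ref{sec:model} we may assume this mechanism reveals information about $\omega$ only once, so for each report $(\hat\theta,\hat b)$ the corresponding subtree consists of transfer nodes together with a single signaling scheme; after relabeling its signals by the action a truthful type-$\hat\theta$ buyer would play (using obedience) we may take it to be an obedient recommendation policy $p_{\hat\theta,\hat b}\colon\Omega\to\Delta A$. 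I will also assume, without loss of generality, that for each report $(\hat\theta,\hat b)$ the largest cumulative amount $C_{\hat\theta,\hat b}$ that a buyer must advance before the signal is at most $\hat b$ (otherwise the truthful buyer could never complete the subtree, and one can prune it).

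Next I would pin down the transfers along the subtree for $(\hat\theta,\hat b)$. No \emph{buyer-to-seller} transfer can occur after the signal has been sent: at that point the buyer already holds all the information, so he would simply walk away rather than pay for it; hence the only post-signal transfers are refunds from the seller, and since refunds are non-negative (otherwise the buyer leaves before any negative net) they can be merged into a single refund. Because $\omega\perp(\theta,b)$, a buyer who has made his deposit is passive, so his continuation utility and the seller's revenue from such a refund depend on it only through its \emph{expectation}; the same independence argument used in the public-budget analysis shows in addition that the adaptive \emph{shape} of the pre-signal transfers cannot be exploited for price discrimination. We may therefore collapse the entire pre-signal sequence into one up-front deposit equal to its peak $C_{\hat\theta,\hat b}$ followed by a deterministic partial return, and then move the (deterministic) post-signal refund to before the recommendation as well. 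This last move is harmless: once all transfers are settled the buyer's net payment is a fixed $t_{\hat\theta,\hat b}$, and even if $t_{\hat\theta,\hat b}<0$ he does not leave, because the recommendation is still weakly valuable to him relative to acting on the prior. The resulting mechanism is literally in the $\texttt{CM-depR}$ format: report $\hat\theta$, deposit $C_{\hat\theta,\hat b}$, receive $C_{\hat\theta,\hat b}-t_{\hat\theta,\hat b}\ (\ge 0)$ back, then receive a recommendation drawn from $p_{\hat\theta,\hat b}(\omega)$, with $t_{\hat\theta,\hat b}\le C_{\hat\theta,\hat b}$.

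Finally I would re-present this mechanism so that truthful reporting is optimal, and verify the claimed properties. For every $(\theta,b)$, define the $\texttt{CM-depR}$ entry indexed by $(\theta,b)$ to require a deposit of $b$ and to implement the net payment $t_{\theta,b}$ and policy $p_{\theta,b}$ that a true-type-$(\theta,b)$ buyer obtains in the original optimal mechanism (padding the deposit up to $b$ and the return correspondingly). A buyer of true budget $b$ can select entry $(\hat\theta,\hat b)$ iff $\hat b\le b$, and since $C_{\hat\theta,\hat b}\le\hat b$, every such deviation was also available in the original mechanism with the \emph{same} payoff $g_\theta(p_{\hat\theta,\hat b})-t_{\hat\theta,\hat b}$; hence the set of profitable deviations can only shrink, while the truthful outcome is preserved. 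IR and both type-IC and budget-IC therefore follow from the IR/IC of the original mechanism (overreporting $\hat b>b$ is equivalent to not participating, since the deposit cannot be made); obedience of each $p_{\theta,b}$ holds by construction; the seller's budget constraint is respected because the return $b-t_{\theta,b}$ is offset by the deposit and $t_{\theta,b}\ge -M$; and the expected revenue equals that of the original optimal mechanism. I expect the genuinely delicate step to be the transfer-collapsing claim borrowed from the public-budget analysis---ruling out, under independence, any revenue gain from adaptively shaped or signal-contingent transfers---together with the bookkeeping needed to show that front-loading everything into a full-budget deposit changes neither a buyer's incentive to remain in the protocol nor, modulo harmless deviation-restriction, the optimal revenue.
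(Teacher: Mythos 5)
Your proposal follows essentially the same route as the paper: reduce, via the revelation principle and pruning, to a one-round mechanism in which each report $(\hat\theta,\hat b)$ triggers a deposit of $\hat b$, a deterministic partial return, and a single signaling scheme (the paper's ``pricing mappings mechanism with deposit and return,'' Lemma~\ref{lem:pricing-private-opt}), and then merge signals inducing the same best response into action recommendations (Lemma~\ref{lem:consult-private-opt}). The two ingredients you isolate --- independence making expected transfers type-independent, and the full-budget deposit acting as budget verification so that overreporting $\hat b>b$ is infeasible while any feasible misreport $(\hat\theta,\hat b)$ with $\hat b\le b$ was already available (with enough budget) in the original mechanism --- are exactly the paper's.

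Two spots are looser than they should be. First, your transfer-collapsing detour (ruling out post-signal buyer-to-seller transfers, merging refunds, tracking the peak cumulative payment $C_{\hat\theta,\hat b}$) is unnecessary: under independence the expected \emph{total} transfer along the path to a leaf, $\sum_{l}\tau(l)\sum_\omega\mu(\omega)p_{\hat\theta,\hat b}(\omega,l)$, is the same number for every true type, so the entire (possibly adaptive, signal-contingent) transfer stream can be replaced in one step by this deterministic net payment without touching the signaling scheme; no case analysis on where transfers sit in the tree is needed, and this is precisely the ``delicate step'' you defer. Second, and more substantively, ``relabeling signals by the truthful type's best action'' is not a relabeling when two signals induce the same action: they must be \emph{merged} into a single posterior, and a deviating type $\theta\neq\hat\theta$ then faces $\max_{a'}$ of a sum rather than a sum of $\max_{a'}$'s, so his deviation payoff weakly \emph{decreases}. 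Your final IC check asserts that each deviation yields ``the same payoff'' as in the original mechanism; after merging it is only weakly smaller --- which is the direction you need, but that inequality is the content of Lemma~\ref{lem:consult-private-opt} and must be stated and used rather than elided.
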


The proof can be found in Appendix~\ref{sec:private}.

\subsection{Computing Optimal Consulting Mechanisms} \label{sec:indep:comp}
We now show how to efficiently compute the optimal consulting mechanisms for both public and private buyer budget. It turns out that in both cases, the optimal consulting mechanisms can be easily computed by simple linear programs with polynomial size. This much simplifies the optimal mechanism  proposed by \cite{Babaioff:2012:OMS:2229012.2229024} (for the setting \emph{without} budget constraints), which requires solving exponentially-large linear programs by going through the dual program and the  ellipsoid method.  Our mechanism is simpler, more interpretable and can be more efficiently computed from a practical perspective. 


Here, we only give the LP formulation for computing the optimal $\texttt{CM-depR}$ mechanism since  the formulation for the optimal $\texttt{CM-dirP}$ is essentially the same. By definition, a $\texttt{CM-depR}$ mechanism can be fully described by variables $p_{\theta,b}(\omega, a)$,  which is the probability of recommending action $a$ when the seller signal is realized to $\omega$ and the buyer reports $(\theta, b)$, and the corresponding net transfer variable $t_{\theta, b}$.  If a $(\theta, b)$-buyer misreports type $\theta'$ and budget $b'$, he will receive a recommendation generated according to the random mapping $p_{\theta',b'}(\omega, a)$ from the seller. However, this recommended action does not have to be optimal for this buyer due to his misreport. Nevertheless, the buyer's optimal expected utility can still be computed as follows, which is the sum of the optimal expected utilities from all action recommendations: 
$$
\sum_{a} \max_{a'} \sum_\omega \mu(\omega)  p_{\theta', b'}(\omega, a)  u( \omega, \theta, a') - t_{\theta', b'} .
$$

 As a result, the optimal consulting mechanism can be computed via the following linear program (LP) with $p_{\theta,b}(\omega, a)$'s and $t_{\theta,b}$'s as variables.  


\begin{lp}
    \maxi{ \sum_{\theta, b} \mu(\theta, b) \cdot t_{\theta, b} }\\
    \st
    \qcon{ \sum_\omega \mu(\omega)  p_{\theta, b}(\omega, a)  u(\omega, \theta, a) \ge \sum_\omega \mu(\omega) p_{\theta, b}( \omega, a) u(\omega, \theta, a')}{a, a',  \theta, b}{\text{(OB)}}
    \con{ \sum_{\omega, a} \mu(\omega)  p_{\theta, b} (\omega, a)  u(\omega, \theta, a) - t_{\theta, b} }{}
    \qcon{ \qquad \ge \sum_{a} \max_{a'} \sum_\omega \mu(\omega)  p_{\theta', b'}(\omega, a)  u( \omega, \theta, a') - t_{\theta', b'}}{ \theta, \theta',  b, b' }{\text{(IC)}}    
     \qcon{ \sum_{\omega, a} \mu(\omega)  p_{\theta, b} (\omega, a) u(\omega, \theta, a) - t_{\theta, b} \ge \sum_\omega \mu(\omega) u(\omega, \theta, a')}{ (\theta, b), a' }{\text{(IR)}}
    \qcon{ t_{\theta, b} \le b}{ (\theta, b) }{\text{(budget)}}
    \qcon{ \sum_a p_{\theta, b}(\omega, a) = 1}{ (\theta, b), \omega \notag}{}
    \con{ 0 \le p_{\theta, b}(\omega, a) \le 1}{}
\end{lp}


The first constraint guarantees the \emph{obedience} constrain in the consulting recommendation, i.e., each recommended action  must indeed be an optimal action for the buyer's  reported type.  The second constraint guarantees incentive compatibility (IC) and the third constraint guarantees individual rationality (IR).

We remark that the IC constraints above are not linear (yet). However, one  can easily transform them into linear constraints by introducing new variables $z_{\theta, \theta', b', a} $ to represent the $\max$ function, as follows 
$$
z_{\theta, \theta', b', a} = \max_{a'} \sum_\omega \mu(\omega) \cdot p_{\theta', b'}(\omega, a) \cdot u( \omega, \theta, a').
$$
Then for any $ \theta, \theta',  b \ge b' $, the IC constraint can be re-formulated as  follows:
\begin{align*}
\sum_{\omega, a} \mu(\omega) \cdot p_{\theta, b} (\omega, a) \cdot u(\omega, \theta, a) - t_{\theta, b} 
\ge  \sum_{a} z_{\theta, \theta', b',a}  - t_{\theta', b'}
\end{align*}
with additional constraints that define $z_{\theta, \theta', b', a}$:
\begin{align*}
   z_{\theta, \theta', b', a} \ge \sum_\omega \mu(\omega) \cdot p_{\theta', b'}(\omega, a) \cdot u( \omega, \theta, a'), \quad \forall a' \in A. 
\end{align*}
These overall establish the following theorem. 
\begin{theorem}
The optimal  $\texttt{CM-dirP}$ mechanism for public buyer budget and the optimal $\texttt{CM-depR}$ mechanism for private buyer budget can each be computed by a single linear program of $\poly(|\Theta|, |A|, |\Omega|)$ size.  
\end{theorem}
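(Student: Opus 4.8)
The plan is to argue that the linear program displayed above is nothing more than a faithful reformulation of the problem ``maximize revenue over all IC, IR, obedient, budget-feasible $\texttt{CM-depR}$ mechanisms'' (resp.\ $\texttt{CM-dirP}$ mechanisms when $b$ is a fixed public budget), whose optimum is — by Theorem~\ref{thm:consult-private-optimal} (resp.\ Theorem~\ref{thm:consult-public-opt}) — a globally optimal mechanism; then to check that the $\max$ inside the IC constraints linearizes at the cost of only polynomially many auxiliary variables; and finally to count variables and constraints. I would carry out the $\texttt{CM-depR}$ (private-budget) case and note that the public-budget case is identical with $b$ held fixed.

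\textbf{Encoding correctness.} First I would record the identity underlying the right-hand side of (IC). If a buyer of true type $\theta$ reports $(\theta',b')$ and deposits $b'$, recommendation $a$ is sent with probability $q(a)=\sum_\omega\mu(\omega)p_{\theta',b'}(\omega,a)$, upon which his posterior on $\omega$ is $\mu(\omega)p_{\theta',b'}(\omega,a)/q(a)$ by \eqref{eq:buyer-update}, so his optimal continuation payoff given $a$ is $\max_{a'}\sum_\omega \mu(\omega)p_{\theta',b'}(\omega,a)u(\omega,\theta,a')/q(a)$; multiplying by $q(a)$ and summing over $a$, the normalizers cancel and his total expected utility is $\sum_a\max_{a'}\sum_\omega\mu(\omega)p_{\theta',b'}(\omega,a)u(\omega,\theta,a') - t_{\theta',b'}$, exactly the RHS of (IC). When $\theta'=\theta$ and $b'=b$, obedience forces the inner $\max$ at $a'=a$, collapsing the LHS of (IC) and (IR) to $\sum_{\omega,a}\mu(\omega)p_{\theta,b}(\omega,a)u(\omega,\theta,a)-t_{\theta,b}$ as written. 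With this in hand, each line of the LP is an exact restatement of a defining property of a valid $\texttt{CM-depR}$: (OB) is obedience — multiply the posterior-form inequality through by $q(a)\ge0$ (and when $q(a)=0$ the LP line is the vacuous $0\ge 0$, matching that no obedience is required for an unsent signal); (IC) is incentive compatibility, ranging only over $b'\le b$ since a budget-$b$ buyer can deposit at most $b$ (consistent with the $b\ge b'$ reformulation); (IR) compares to leaving at the outset; and the remaining lines encode $t_{\theta,b}\le b$ and that $p_{\theta,b}(\omega,\cdot)$ is a distribution. Conversely, any feasible point is exactly the data of such a mechanism (a $\texttt{CM-depR}$ buyer's only profitable deviation is a misreport, since leaving after the deposit-and-return is dominated by continuing). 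Because truthful reporting is incentivized, the realized net transfer on report $(\theta,b)$ is $t_{\theta,b}$, so the objective equals $\texttt{Rev}$. Theorem~\ref{thm:consult-private-optimal} supplies an optimal mechanism that is an IC $\texttt{CM-depR}$; it is LP-feasible with objective equal to its revenue, while every LP-feasible point is a mechanism whose revenue therefore cannot exceed the LP optimum — hence an optimal LP solution is an optimal mechanism.

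\textbf{Linearization and size.} The only non-linearities are the $\max_{a'}$ terms. I introduce variables $z_{\theta,\theta',b',a}$ with constraints $z_{\theta,\theta',b',a}\ge\sum_\omega\mu(\omega)p_{\theta',b'}(\omega,a)u(\omega,\theta,a')$ for all $a'$, and replace the IC inequality by $\sum_{\omega,a}\mu(\omega)p_{\theta,b}(\omega,a)u(\omega,\theta,a)-t_{\theta,b}\ge\sum_a z_{\theta,\theta',b',a}-t_{\theta',b'}$. Since each $z_{\theta,\theta',b',a}$ appears only on a right-hand side one wishes to make small and is bounded below exactly by the corresponding $\max$, an optimal solution of the relaxed program attains it at that $\max$, so the two programs share an optimum and the mechanism is read off directly. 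Counting: the variables $p_{\theta,b}(\omega,a)$, $t_{\theta,b}$, $z_{\theta,\theta',b',a}$ number $O(|\Theta|\,|B|\,|\Omega|\,|A|+|\Theta|^2|B|^2|A|)$, and the constraints (OB, IC with its defining inequalities, IR, budget, the simplex equalities, the box bounds) number $O(|\Theta|\,|B|\,|A|^2+|\Theta|^2|B|^2|A|^2+|\Omega|\,|\Theta|\,|B|)$, with all coefficients products of the given $\mu$- and $u$-values. This is polynomial in $|\Theta|,|A|,|\Omega|$ (and $|B|$); in the public-budget case $|B|=1$, which is the stated bound.

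\textbf{Main obstacle.} There is no deep obstacle: the statement is essentially a bookkeeping corollary of Theorems~\ref{thm:consult-public-opt} and~\ref{thm:consult-private-optimal}. The one point that needs genuine care is the cancellation identity above — that a misreporting buyer's optimal continuation payoff, summed over recommendations weighted by their probabilities, is an expression in which the posterior normalizers disappear — since this is exactly what makes the IC constraint linear after the $z$-substitution rather than merely bilinear in the unknowns; one must also be careful that IC ranges over reports with $b'\le b$ only and that obedience imposes nothing on zero-probability recommendations. A minor further point: should one want the LP to enforce the seller's budget as well, one adds $t_{\theta,b}\ge -M$, another $O(|\Theta|\,|B|)$ linear constraints that leave the size bound unchanged (and for independent signals this is non-binding at the optimum, consistent with the optimal mechanisms here never requiring the seller to pay).
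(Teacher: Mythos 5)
Your proposal is correct and follows essentially the same route as the paper: write the LP over the variables $p_{\theta,b}(\omega,a)$ and $t_{\theta,b}$ with obedience, IC, IR and budget constraints, use the normalizer-cancellation identity to express a misreporting buyer's utility as $\sum_a\max_{a'}\sum_\omega\mu(\omega)p_{\theta',b'}(\omega,a)u(\omega,\theta,a')-t_{\theta',b'}$, linearize the inner $\max$ with auxiliary variables $z_{\theta,\theta',b',a}$ bounded below by each candidate term, and count that everything is polynomial (the paper, like you, implicitly absorbs $|B|$ into the type count). Your added care about zero-probability recommendations and the restriction to reports with $b'\le b$ is consistent with, though more explicit than, the paper's presentation.
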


\section{Correlated Signals}
In this section, we turn to the general setting with correlated signals. In this case, buyers of different signal $\theta$'s will have different prior beliefs on $\omega$ due to the correlation between $\omega$ and $\theta$. On one hand, this increases the seller's power to do price discrimination. On the other hand, it also complicates the design of the optimal mechanism. In particular, with independent signals, the payments of the optimal mechanisms only depend on buyer types but do not need to be contingent on the information  revealed (recall that  $\texttt{CM-dirP}$ and $\texttt{CM-depR}$ ask for payments \emph{before} any information is revealed). However, \cite{Babaioff:2012:OMS:2229012.2229024} show that this \emph{ceases to hold} for correlated signals, even when there is no budget constraint. That is, when the seller's signal and the buyer's signal are correlated, the payment of the optimal mechanism has to depend on the information that is revealed to the seller.   

In this section, we prove that the consulting mechanism coupled with a particular way of payments (which does depend on the information revealed) ---   together termed the \emph{Consulting Mechanisms with Probabilistic Return ($\texttt{CM-probR}$) } --- is optimal for both the public and private buyer budget.  With this characterization result, we then show that the optimal mechanism can be computed efficiently, again by formulating a polynomial-size linear program to directly solve for the optimal $\texttt{CM-probR}$ mechanism.  




\subsection{Optimality of $\texttt{CM-probR}$ for Correlated Signals }  

We start by formally defining the Consulting Mechanisms with Probabilistic Return ($\texttt{CM-probR}$), as follows

\begin{definition}[Consulting Mechanism with Probabilistic Return]\label{def:BDR}
	Let $[I] = \{ +, -\}$ contain the \emph{indicators} about whether the buyer will receive a return (``$+$'') or not (``$-$''). A \emph{consulting mechanism with probabilistic return ($\texttt{CM-probR}$)} for a budget-constrained buyer proceeds as follows:
	\begin{enumerate}
		\item The seller commits to an action recommendation policy described as a  mapping $p_{\theta, b}: \Omega \to \Delta( A\times I)$ for each $(\theta,b)$, where $p_{\theta, b}(\omega) \in \Delta( A\times I)$ with $p_{\theta,b}(\omega,[a,i])$ denoting the probability of recommending action $a$ with indicator $i \in \{ +,- \}$ given seller signal $\omega$.  $p_{\theta, b}$ is required to be \emph{obedient} --- i.e., conditioned on any recommended action, it must indeed be the buyer's optimal action given his information. 
		\item The seller asks the buyer to report his type $\hat{\theta}$ and  deposit his budget $\hat{b}$. 
		\item Based on her signal $\omega$, the seller samples $[a,i] \sim p_{\hat{\theta},\hat{b}}(\omega)$, and recommends $a$ to the buyer.
		\item If  the sampled $i =$ ``$-$'', return $\hat{b}+M$ to the buyer; return $0$ otherwise (recall $M$ is the seller's budget).   
	\end{enumerate}  
\end{definition}

The key difference between $\texttt{CM-probR}$ and $\texttt{CM-dirP}$ lies in their ways of payments. In $\texttt{CM-probR}$, the seller has some probability of receiving a return of $\hat{b}+M$ and this probability depends on the information revealed, which is carried by $[a,i]$ pair. One might wonder why we treat the two cases $+$ and $-$ (w.r.t. an action $a$) separately while not simply use the expected payment instead. We note that this would \emph{not} work because  the posterior probability of each payment is different for different buyer types due to their different prior beliefs  on $\omega$. The combined payment will \emph{not} simply translate to the same expectation for different buyer types due to certain nonlinearity, which will be problematic for IC constraints.  


Our main result of this subsection is  to prove the optimality of $\texttt{CM-probR}$ for correlated signals. 
\begin{theorem}\label{thm:BDR-opt}
When $\theta, \omega$ are correlated, there always exists an IC  Consulting Mechanism with Probabilistic Return ($\texttt{CM-probR}$) that maximizes the seller's revenue in both public and private buyer budget settings.  
\end{theorem}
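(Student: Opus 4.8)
The plan is to follow the duality-based pipeline sketched in the introduction. Start from the natural linear program $\P$ that computes the optimal revenue over all revelation mechanisms: by Lemma~\ref{lem:revelation} we may assume the buyer reports $(\hat\theta,\hat b)$ once at the root, and by the single-revelation property the remaining protocol is a signaling scheme together with a sequence of transfers along each signal-path. The variables of $\P$ encode, for each reported $(\theta,b)$, a joint distribution over (seller signal $\omega$, signal/leaf, total transfer), and the constraints are obedience, IC (no profitable misreport of type or budget, taking into account that after a misreport the buyer re-optimizes his action on each signal), IR, and the two budget constraints (transfer to the seller never exceeds $b$ on any path; net transfer to the buyer never exceeds $M$). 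This $\P$ has exponentially many variables \emph{and} constraints, so the first step is to take its dual $\D$ and exploit structure: many dual constraints share the same coefficient pattern across the transfer dimension, which lets us eliminate the bulk of the dual variables and collapse $\D$ into a program $\D'$ with only polynomially many variables (one block per $(\theta,b)$ pair plus the multipliers for IC/IR/budget) but still exponentially many constraints, one per possible ``deviation plan'' of the buyer.

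The heart of the argument is the second dualization: take the dual $\D'{}^{*}=:\P'$ of $\D'$. This $\P'$ is \emph{not} the original $\P$ — the elimination step changed the feasible region — but it again has exponentially many variables. The key structural claim, which I expect to be the main obstacle, is: (a) every feasible solution of $\P'$ can be interpreted as a (generalized) information-selling mechanism whose revenue equals the $\P'$-objective, and (b) among optimal solutions of $\P'$ there is always one supported on the restricted format of a $\texttt{CM-probR}$ — i.e. a single signaling step from $\Omega$ into $A\times\{+,-\}$ per reported $(\theta,b)$, with the only transfers being ``deposit $\hat b$, then refund either $0$ or $\hat b+M$.'' Part (b) is where all the work lies: one has to show that the extreme points / an optimal vertex of $\P'$ collapse the transfer distribution onto the two-point set $\{0,\hat b+M\}$ (this is exactly why we need both budget constraints tight and why the $+/-$ cases cannot be averaged, as remarked after Definition~\ref{def:BDR}) and collapse the signal space onto $A\times\{+,-\}$ via an obedience/merging argument in the spirit of \citet{Bergemann2018Info}. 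Obedience of the resulting $p_{\theta,b}$ is automatic from the corresponding constraints in $\P'$.

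Once (a) and (b) are in place, strong LP duality gives $\mathrm{val}(\P)=\mathrm{val}(\D)=\mathrm{val}(\D')=\mathrm{val}(\P')$, and the optimal vertex of $\P'$ from (b) is a $\texttt{CM-probR}$ achieving the optimal revenue; IC, IR and obedience hold because they are precisely the constraints carried through the dualizations. For the \textbf{public-budget} case one runs the identical argument after deleting the budget-report coordinate and the budget-misreport IC constraints — $\texttt{CM-probR}$ still applies verbatim since it never used privacy of $b$ — so the same theorem statement covers both settings. I would present the correlated-signal derivation in full and then note that the public case is the special case obtained by fixing $b$, which is why the theorem asserts optimality ``in both public and private buyer budget settings'' simultaneously. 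A secondary technical point to handle carefully is feasibility/boundedness of all four LPs (so that strong duality and the existence of optimal vertices are legitimate), which follows from IR giving a finite trivial solution and the budget constraints bounding transfers.
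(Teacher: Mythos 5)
Your pipeline is exactly the one the paper follows: reduce to a one-shot signaling mechanism with deposit-and-return payments, write the exponential-size LP $\P$ (LP \eqref{lp:posterior-opt-budget}), dualize to $\D$, eliminate most dual variables to obtain $\D'$ with polynomially many variables, dualize again to $\P'$ (LP \eqref{lp:primal-opt}), and establish your claims (a) and (b) there before closing with strong duality. The public-budget case as a restriction of the private one is also how the paper handles it. So the approach is right, and you have correctly located where the work is. That said, the proposal defers precisely the steps that constitute the proof, and in two places your guess about how they go differs from what actually works.

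First, the passage $\D\to\D'$ is not about constraints ``sharing a coefficient pattern across the transfer dimension''; it is specifically the elimination of the pair of dual variables $\gamma_{\theta,q},\xi_{\theta,q}$ attached to the two-sided budget constraint $-M\,x_{\theta}(q)\le \tilde t_{\theta}(q)\le b\,x_{\theta}(q)$. One shows an optimal dual solution can be taken with $\gamma_{\theta,q}\cdot\xi_{\theta,q}=0$, solves for both from the equality constraint, and then observes that the substituted term is $\min\{-b\,g_\theta,\;M\,g_\theta\}$, a minimum of two linear functions, so the single nonlinear constraint splits into \emph{two} linear families. Second, and consequently, the collapse of the payment onto the two-point set is \emph{not} an extreme-point property of $\P'$ that you must extract: because $\D'$ has two constraint families per $(\theta,q)$, its dual $\P'$ automatically has two variable families $x_\theta^+(q),x_\theta^-(q)$, already labelled by which of the two payments ($b$ or $-M$) applies, so \emph{every} feasible point of $\P'$ is a two-payment pricing-outcomes mechanism. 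The only optimality argument needed is the signal-merging step (Lemma \ref{lem:succint-mechanism}): posteriors with the same sign and the same best response for type $\theta$ can be averaged without changing the objective or the IR/obedience terms (linearity of $v_\theta$ on posteriors sharing a best response) and without loosening IC (convexity of $v_{\theta'}$ for deviating types). Finally, your opening reduction --- ``by the single-revelation property the remaining protocol is a signaling scheme together with transfers'' --- is not free from the revelation principle alone under budget constraints; it is Lemma \ref{lem:signal-opt} and needs its own (short) argument that the refund $b-\tau(l_i)$ is always nonnegative so the buyer never defects mid-protocol. A point worth noting there: the budget constraints actually \emph{fix} a known discrepancy in the unbudgeted LP of Babaioff et al.\ arising from the substitution $\tilde t_\theta(q)=x_\theta(q)t_\theta(q)$ when $x_\theta(q)=0$, since they force $\tilde t_\theta(q)=0$ in that case.
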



The proof of Theorem \ref{thm:BDR-opt}  relies on a novel use of the duality theory. We defer the full proof to Appendix \ref{app:corr_lem}  and only provide a sketch here. Our starting point is a result by \cite{Babaioff:2012:OMS:2229012.2229024} who proved that for the setting \emph{without} budget constraints, there always exists an optimal mechanism of the following format: (1) ask the buyer for a (possibly extremely large) deposit; (2) reveal information through a \emph{signaling scheme}; (3) return to the buyer some amount of the deposit which will depend on the signal sent. They term it a \emph{pricing outcomes mechanism}. As a first step, we generalize this result and prove that it still holds even with budget constraints, but with Step (1)  now being to ask the buyer to deposit his reported budget. 

Observe that $\texttt{CM-probR}$ mechanisms are a strict \emph{sub-class} of pricing outcomes mechanisms --- in particular, $\texttt{CM-probR}$'s use at most $2n$ signals and only two possible payments, i.e., returning $b+M$ or $0$.    
We want to prove that this much smaller class of mechanisms is still able to achieve the optimal revenue. Our idea goes as follows (also see illustrations in Figure \ref{fig:cor-alg-explain}). We first formulate the problem of computing the optimal pricing outcomes mechanism, which is a linear program $\P$ with \emph{exponentially} many variables and \emph{exponentially} many constraints. Naturally, any naive approach ---  including  the celebrated ellipsoid method --- cannot be directly used to solve such an LP or its dual. We nevertheless work with the dual of $\P$, denoted as LP $\D$. Obviously, $\D$ also has exponentially many variables and exponentially many constraints, which is still difficult to solve.  
\begin{figure}[ht]
	\centering
	\includegraphics[clip,scale=0.4]{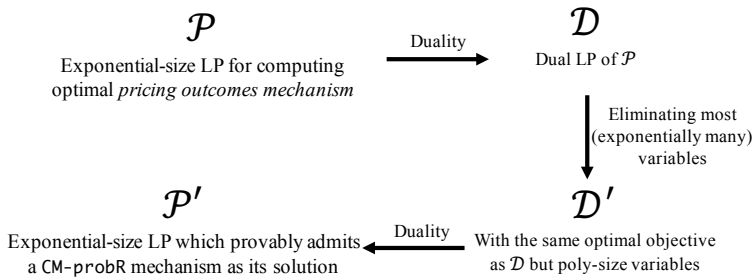}
	\caption{\label{fig:cor-alg-explain} Illustration of the Main Idea for Proving Optimality of $\texttt{CM-probR}$.}
\end{figure}

The crux of  our approach is to prove that the dual LP $\D$ can be reduced to another linear program $\D'$, which will achieve the same optimal objective value as of LP $\D$,  still has exponentially many constraints but will have only polynomially many variables. At this point, one natural idea is to apply the ellipsoid method to solve $\D'$ to obtain the optimal solution, which turns out indeed is doable in polynomial time by designing a separation oracle for $\D'$. However, this idea suffers from several drawbacks. First, the ellipsoid method is widely known to be  inefficient in practice. Second, the dual of $\D'$, denoted as $\P'$, is not our original LP $\P$ any more (e.g., $\P'$ now has exponentially many variables but polynomially many constraints) and it is not clear how it can help to recover the optimal solution to $\P$ instead. We believe that by opening the black box of Ellipsoid method and  after addressing the tedious arithmetic issues (since the separation oracle for $\D'$ needs to solve a convex program, which can only be solved to be within $\epsilon$ precision in $\poly(\log(1/\epsilon))$ time), one may indeed be able to recover the optimal solution to $\P$. However, this would take significantly more effort. Last but not least, the optimal solution obtained through such an optimization procedure is very unlikely to be a simple and interpretable mechanism.   

As a result, we choose to work with the dual of LP $\D'$, denoted as $\P'$, which recall that now has polynomially many constraints but exponentially many variables.  Surprisingly, we show that this transformation yields a simple characterization of the optimal mechanism as well as a much more efficient algorithm for computing it. This is enabled by two intriguing properties of   $\P'$ which we prove. First, any feasible solution to LP $\P'$ corresponds to a particular format of pricing outcomes mechanisms, which only either asks the buyer for a net payment of $b$ or pays the buyer the amount of $M$, and has no other format of transfers. Second, we prove that there always exists an optimal solution to LP $\P'$, which is a  $\texttt{CM-probR}$ mechanism, as described in Definition \ref{def:BDR}. As a result, we can conclude that $\texttt{CM-probR}$ can achieve the optimal revenue since the optimal objective of $\P'$ must equal that of $\P$ due to strong duality.

\subsection{Computing the Optimal $\texttt{CM-probR}$}

We now show how to efficiently compute the optimal $\texttt{CM-probR}$ mechanism via a polynomial-size linear program.  
For notational convenience, instead of $p_{\theta, b}(\omega, [a,+])$ and $p_{\theta, b}(\omega, [a,-])$ as in Definition \ref{def:BDR}, we use  variable $p^+_{\theta,b}(\omega, a)$ and $p^-_{\theta, b}(\omega, a)$ to denote the transition probabilities.  
Then the utility of the buyer with type $\theta$ and budget $b$ when he reports $(\theta', b')$ is equal to
$$
U_{\theta, b} (\theta', b')  =  \sum_{a} \Big( \max_{a'} \sum_\omega \mu(\omega | \theta, b) p^+_{\theta', b'}(\omega, a) (u( \omega, \theta, a') - b) + \max_{a'} \sum_\omega \mu(\omega | \theta, b) p^-_{\theta', b'}(\omega, a) (u( \omega, \theta, a') + M) \Big)
$$
Introduce new variables
\begin{eqnarray*}
&z^+_{\theta, b, \theta', b', a} =  \max_{a'} \sum_\omega \mu(\omega | \theta, b) p^+_{\theta', b'}(\omega, a) (u( \omega, \theta, a') - t^+), \\
&z^-_{\theta, b, \theta', b', a} =  \max_{a'} \sum_\omega \mu(\omega | \theta, b) p^-_{\theta', b'}(\omega, a) (u( \omega, \theta, a') - t^-),
\end{eqnarray*}
where $t^+ = b$ and $t^- = -M$. 
Then the LP for computing the optimal $\texttt{CM-probR}$ mechanism can be formulated as 

\begin{lp} \label{prog:CMPR}
    \maxi{   \sum_{\omega, \theta, b} \mu(\omega, \theta, b) \sum_a \Big( b \cdot p^+_{\theta, b}(\omega, a) - M \cdot p^-_{\theta, b}(\omega, a) \Big)  }  
   \st 
    \qcon{  U_{\theta, b}( \theta, b) \ge U_{\theta, b} (\theta', b')}{ \theta, \theta', b \ge b'}{\text{(IC)}}
    \qcon{ U_{\theta, b}(\theta, b)  \ge \sum_\omega \mu(\omega | \theta, b) u(\omega, \theta, a')}{  (\theta, b), \  a' \in A}{\text{(IR)}}
    \qcon{  U_{\theta, b}(\theta, b) \le  \sum_{a, \omega, \circ\in\{+,-\}}  \mu(\omega | \theta, b) p^\circ_{\theta, b}(\omega, a) (u( \omega, \theta, a) - t^\circ)  }{ \theta, b}{\text{(OB)}}
    \qcon{ U_{\theta, b}(\theta', b') = \sum_a (z^+_{\theta, b, \theta', b', a} + z^-_{\theta, b, \theta', b', a})}{ \theta, b, \theta', b'}{}
    \qcon{   z^\circ_{\theta, b, \theta', b', a} \ge \sum_\omega \mu(\omega | \theta, b) p^\circ_{\theta', b'}(\omega, a) (u( \omega, \theta, a') - t^\circ)}{  \theta, b, \theta', b', a, a', \circ\in\{+,-\} }{}  
    \qcon{  \sum_{a \in A} [p^+_{\theta, b}(\omega, a) +p^-_{\theta, b}(\omega, a)]= 1}{  (\theta, b), \omega }{} 
    \qcon{0 \le p^+_{\theta, b}(\omega, a),p^+_{\theta, b}(\omega, a) \le 1}{ \theta, b, \omega, a}{}
\end{lp} 

These overall establish the following theorem.
\begin{theorem}
The optimal $\texttt{CM-probR}$ mechanism can be computed by a single linear program of $\poly(|\Theta|,|A|,|\Omega|)$ size for both public and private buyer budget setting. 
\end{theorem}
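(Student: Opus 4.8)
The plan is to show that the linear program \eqref{prog:CMPR} is a faithful and polynomial-size encoding of the class of $\texttt{CM-probR}$ mechanisms, so that an optimal LP solution is exactly a revenue-optimal $\texttt{CM-probR}$; combined with Theorem~\ref{thm:BDR-opt} (which already establishes that a $\texttt{CM-probR}$ is optimal among all generic interactive protocols in both budget settings), this yields the claim. By Definition~\ref{def:BDR} a $\texttt{CM-probR}$ is determined entirely by the transition probabilities $p^{+}_{\theta,b}(\omega,a)$ and $p^{-}_{\theta,b}(\omega,a)$, since the two payment levels ($t^{+}$ when the indicator is ``$+$'', $t^{-}=-M$ when it is ``$-$'') are fixed by the mechanism format. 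I would therefore first check the ``easy'' correspondences: that these variables form a valid distribution over $A\times\{+,-\}$ for each $(\omega,\theta,b)$ — the last two lines of \eqref{prog:CMPR}; that the objective $\sum_{\omega,\theta,b}\mu(\omega,\theta,b)\sum_a(b\,p^{+}_{\theta,b}(\omega,a)-M\,p^{-}_{\theta,b}(\omega,a))$ equals the seller's revenue; and that the IR line is precisely the ``prefer the mechanism to acting on one's prior'' inequality. I would also note that the budget constraint is implicit: a buyer with budget $b$ can only deposit at most $b$, which is why the IC constraints range over $b\ge b'$ only.

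The only real work is the nonlinear ``$\max_{a'}$'' in the buyer's best response to each recommended pair $[a,i]$, and the plan is to verify that the linearization via $z^{+}_{\theta,b,\theta',b',a}$ and $z^{-}_{\theta,b,\theta',b',a}$ is tight exactly where it must be. Writing the buyer's continuation utility after a recommendation as a maximum over $a'$, the lower-bound constraints $z^{\circ}_{\theta,b,\theta',b',a}\ge\sum_\omega\mu(\omega\mid\theta,b)\,p^{\circ}_{\theta',b'}(\omega,a)\,(u(\omega,\theta,a')-t^{\circ})$ over all $a'$ make $\sum_a(z^{+}+z^{-})$ an \emph{upper bound} on the true misreport utility $U_{\theta,b}(\theta',b')$. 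The key observations to spell out are: (i) on the off-diagonal entries $(\theta',b')\neq(\theta,b)$ this overestimate only \emph{strengthens} the IC constraint $U_{\theta,b}(\theta,b)\ge U_{\theta,b}(\theta',b')=\sum_a(z^{+}+z^{-})$, so LP-feasibility implies genuine incentive compatibility; and (ii) on the diagonal, the obedience inequality $U_{\theta,b}(\theta,b)\le\sum_{a,\omega,\circ}\mu(\omega\mid\theta,b)\,p^{\circ}_{\theta,b}(\omega,a)(u(\omega,\theta,a)-t^{\circ})$, together with the equality $U_{\theta,b}(\theta,b)=\sum_a(z^{+}_{\theta,b,\theta,b,a}+z^{-}_{\theta,b,\theta,b,a})$ and the lower bounds instantiated at $a'=a$, forces each $z^{\circ}_{\theta,b,\theta,b,a}$ to equal $\sum_\omega\mu(\omega\mid\theta,b)\,p^{\circ}_{\theta,b}(\omega,a)(u(\omega,\theta,a)-t^{\circ})$, i.e. the recommended action $a$ is itself a best response — which is precisely the obedience requirement of Definition~\ref{def:BDR}. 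Conversely, every IC, IR, obedient $\texttt{CM-probR}$ gives a feasible LP point by setting $z^{\circ}$ to the corresponding maxima and $U$ to the true utilities; hence the feasible region of \eqref{prog:CMPR} is exactly (the projection of) the set of such mechanisms, and its optimum is their maximum revenue.

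Finally I would dispatch the two budget regimes and the size count. For private budget the LP is as written; for public budget $b$ is known to the seller, so the IC constraints over budget reports are vacuous and one simply drops them (equivalently, treats $B$ as a singleton), leaving a strictly smaller LP. The number of variables $p^{\pm}_{\theta,b}(\omega,a)$ is $O(|\Theta|\,|B|\,|\Omega|\,|A|)$ and the auxiliaries $z^{\pm}_{\theta,b,\theta',b',a}$ and $U_{\theta,b}(\theta',b')$ number $O(|\Theta|^{2}|B|^{2}|A|)$, with a comparable count of constraints, all polynomial in the input size; combined with Theorem~\ref{thm:BDR-opt} this proves the statement. I expect the only delicate step to be the bookkeeping of the previous paragraph — making the argument that the $z$-linearization is simultaneously an exact evaluation on the diagonal (so obedience is truly enforced) and a safe overestimate off the diagonal (so IC is never accidentally relaxed); everything else is routine translation between the mechanism description and the program.
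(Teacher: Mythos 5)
Your proposal is correct and follows essentially the same route as the paper, which simply presents LP~\eqref{prog:CMPR} and asserts that it "overall establishes" the theorem; your verification that the $z$-linearization is a safe overestimate off the diagonal (so IC is only strengthened) and is forced to be exact on the diagonal by the obedience inequality (so obedience is genuinely enforced) is precisely the bookkeeping the paper leaves implicit. The only cosmetic discrepancy is that your size count correctly includes $|B|$, which the theorem statement omits but which is part of the explicitly given prior anyway.
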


\section{Extension to Black-box Prior Distributions}\label{sec:black-box}
In this section, we extend our results to the setting where the seller  does \emph{not} know the prior distribution $\mu(\omega, \theta, b)$ and can only access the distribution by drawing i.i.d. samples from $\mu(\omega, \theta, b)$. We call this model the \emph{black-box prior distribution}.  This applies to the cases where distributions are only accessible through samples. We allow the signals to be correlated in this section, which includes independent signals as a special case. We provide a Monte-Carlo Sampling mechanism which draws polynomially many samples at the beginning and then computes an approximately optimal mechanism using the empirical distribution. This mechanism is nearly optimal and  approximately achieves incentive compatibility and individual rationality when the buyer knows that the samples are drawn from the true distribution. 
This mechanism is similar to the one in~\cite{dughmi2016algorithmic} which computes nearly optimal persuasion under unknown prior distributions. 
Our problem is more intricate because in persuasion the receiver's type (corresponding to the seller in our case) is known to the sender, but in our problem the buyer's type is randomly drawn from a distribution and needs to be sampled. 
For completeness, we fully state our algorithm and results here and defer the full proof to Appendix~\ref{sec:unknown}.

\begin{algorithm}[!h]               
\SetAlgorithmName{Mechanism}{mechanism}{List of Mechanisms}
\begin{algorithmic}
    \STATE (1) Observe private signal $\omega_1$. Ask the buyer to report his type $\theta_1$ and deposit his budget $b_1$.
    \STATE (2) Draw $n-1$ samples $T=\{(\theta_2, \omega_2, b_2), \dots, (\theta_n, \omega_n, b_n)\}$ from $\mu(\theta, \omega, b)$ to get 
$S=\{(\theta_1, \omega_1, b_1), \dots, $\\ $(\theta_n, \omega_n, b_n)\}\sim \mu^n$.
    \STATE (3) Solve an LP to get a $\texttt{CM-probR}$ mechanism $\mathcal{M}_S$. The LP will be defined later. 
    \STATE (4) Use $\mathcal{M}_S$ to sell the information to the buyer. Specifically, according to $\omega_1, \theta_1, b_1$, the mechanism \\ will  randomly decide an action to recommend, denoted by $a_S(\theta_1, b_1)$, together with a payment $t_S(\theta_1, b_1) \in\{ b_1, -M\}$.
\end{algorithmic}
\caption{$\texttt{CM-probR}$ Mechanism for Black-box Prior Distribution}
\label{alg:unknown}
\end{algorithm}

The seller will post the mechanism before observing $\omega_1$ and commit to follow the mechanism. The buyer does not know $\mu(\omega, \theta, b)$ but knows how the mechanism works. Our mechanism will satisfy the following properties.

\begin{definition}[$\varepsilon$-IC]
	Mechanism~\ref{alg:unknown} is $\varepsilon$-IC if for any possible $\mu(\omega, \theta, b)$, it is $\varepsilon$-optimal for the buyer to truthfully report and follow the mechanism's recommendation in expectation. More specifically, for any $\mu(\omega, \theta, b)$, any $(\theta, b) \neq (\theta', b')$ with $b'\le b$,
	\begin{eqnarray}
		&&\E_{\omega_1\sim \mu(\omega|\theta, b)} \ \E_{T\sim\mu^{n-1}} \  \E_{\mathcal{M}_S} \left[u(\omega_1, \theta, a_S(\theta, b)) -  t_S(\theta, b) \right] 
		\notag \\
		&\ge & \E_{\omega_1\sim \mu(\omega|\theta, b)} \ \E_{T\sim\mu^{n-1}} \  \E_{\mathcal{M}_S} \left[u(\omega_1, \theta, a'(a_S(\theta', b'))) - t_S(\theta', b')\right]- \varepsilon. \notag
	\end{eqnarray}
	for any deviation from the recommendation $a': A\to A$. 
\end{definition}

\begin{definition}[$\varepsilon$-IR]
Mechanism~\ref{alg:unknown} is $\varepsilon$-IR if for any possible $\mu(\omega, \theta, b)$, the buyer's expected surplus is no less than $-\varepsilon$ when he truthfully reports $(\theta, b)$ and follows the recommendation. More specifically, for any $\mu(\omega, \theta, b)$ and any $(\theta, b)$, 
$$
\E_{\omega_1\sim \mu(\omega|\theta, b)} \ \E_{T\sim \mu^{n-1}} \  \E_{\mathcal{M}_S} \left[u(\omega_1, \theta, a_S(\theta, b)) -  t_S(\theta, b) \right] \ge \E_{\omega_1\sim \mu(\omega|\theta, b)} \left[ u( \omega_1, \theta, a')\right]  -\varepsilon
$$
for all $a'\in A$. 
\end{definition}

\begin{definition}[$\varepsilon$-obedience]
Mechanism~\ref{alg:unknown} is $\varepsilon$-obedient if for any possible $\mu(\omega, \theta, b)$, when the buyer truthfully reports, in expectation it is $\varepsilon$-optimal for the buyer to take action $a$ when being recommended action $a$. More specifically,  for any $\mu(\omega, \theta, b)$, any $(\theta, b)$ and mapping $a': A \to A$, 
\begin{eqnarray}
&&\E_{\omega_1\sim \mu(\omega|\theta, b)} \ \E_{T\sim\mu^{n-1}} \  \E_{\mathcal{M}_S} \left[u(\omega_1, \theta, a_S(\theta, b))  \right] \notag\\
&\ge & \E_{\omega_1\sim \mu(\omega|\theta, b)} \ \E_{T\sim\mu^{n-1}} \  \E_{\mathcal{M}_S}  \left[u(\omega_1, \theta, a'(a_S(\theta, b)) )\right]  - \varepsilon. \notag
\end{eqnarray}
\end{definition}


We now define the LP that solves $\mathcal{M}_S$. The LP is basically the same as the one we use when $\mu$ is known~\eqref{prog:CMPR} but with estimated $\mu(\omega, \theta, b)$ and $\mu(\omega | \theta, b)$. For $\mu(\omega, \theta, b)$ in the objective function, we estimate it with the empirical distribution  over set $S$, denoted by $\widehat{\mu}(\omega, \theta, b)$. For $\mu(\omega | \theta, b)$ in the constraints, we estimate it with the empirical distribution over the set $S_{\theta, b} = \{\omega_1\} \cup \{\omega_i : i\ge 2 \text{ and } (\theta_i, b_i) = (\theta, b)\}$, denoted by $\widehat{\mu}(\omega| \theta, b)$. Note that the definition of $\widehat{\mu}(\omega| \theta, b)$ does \textbf{not} use the buyer's report $(\theta_1, b_1)$. The LP with variables $p^+, p^-$  is as follows.
\begin{lp} \label{prog:unknown}
    \maxi{ \sum_{\omega, \theta, b} \widehat{\mu}(\omega, \theta, b) \sum_a \Big( b \cdot p^+_{\theta, b}(\omega, a) - M \cdot p^-_{\theta, b}(\omega, a) \Big) }
    \st
    \qcon{U_{\theta, b}( \theta, b) \ge U_{\theta, b} (\theta', b') -\varepsilon}{\theta, \theta', b \ge b' }{\text{($\varepsilon$-IC)}}\\
    \qcon{U_{\theta, b}(\theta, b)  \ge \sum_\omega \widehat{\mu}(\omega | \theta, b) u(\omega, \theta, a') -\varepsilon}{ (\theta, b), \  a' \in A }{\text{($\varepsilon$-IR)}}\\
    \qcon{  U_{\theta, b}(\theta, b) - \varepsilon \le  \sum_{a, \omega, \circ\in\{+,-\}}  \widehat{\mu}(\omega | \theta, b) p^\circ_{\theta, b}(\omega, a) (u( \omega, \theta, a) - t^\circ)  }{ \theta, b}{\text{($\varepsilon$-OB)}}
    \qcon{ U_{\theta, b}(\theta', b') = \sum_a (z^+_{\theta, b, \theta', b', a} + z^-_{\theta, b, \theta', b', a})}{ \theta, b, \theta', b'}{}
    \qcon{   z^\circ_{\theta, b, \theta', b', a} \ge \sum_\omega \widehat{\mu}(\omega | \theta, b) p^\circ_{\theta', b'}(\omega, a) (u( \omega, \theta, a') - t^\circ)}{  \theta, b, \theta', b', a, a', \circ}{}  
    \qcon{  \sum_{a \in A} [p^+_{\theta, b}(\omega, a) +p^-_{\theta, b}(\omega, a)]= 1}{  (\theta, b), \omega }{} 
    \qcon{0 \le p^+_{\theta, b}(\omega, a),p^+_{\theta, b}(\omega, a) \le 1}{ \theta, b, \omega, a}{}
\end{lp} 
\begin{lemma} \label{lem:unknown}
When $\mathcal{M}_S$ is solved by the LP~\eqref{prog:unknown},
 Mechanism~\ref{alg:unknown} is $\varepsilon$-IR, $\varepsilon$-IC and 
 $\varepsilon$-obedient.
\end{lemma}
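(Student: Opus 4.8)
The plan is to push every population-level inequality in the definitions of $\varepsilon$-IR, $\varepsilon$-IC and $\varepsilon$-obedience down to the empirical constraints of LP~\eqref{prog:unknown}, which $\mathcal{M}_S$ satisfies by construction. First I record that LP~\eqref{prog:unknown} is always feasible: the ``uninformative'' mechanism that, for each $(\theta,b)$, always recommends an action maximizing $\sum_\omega \widehat{\mu}(\omega\mid\theta,b)\,u(\omega,\theta,\cdot)$ and independently draws the outcome indicator $+$ (net transfer $b$) with probability $\tfrac{M}{b+M}$ and $-$ (net transfer $-M$) with probability $\tfrac{b}{b+M}$ has zero expected net transfer and meets the exact obedience, IR and IC constraints with equality; hence $\mathcal{M}_S$ exists for every sample multiset $S$. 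The engine of the proof is an exchangeability observation in the spirit of \cite{dughmi2016algorithmic}: fix the buyer's true type $(\theta,b)$ and suppose he reports truthfully; then the triple $(\theta,\omega_1,b)$ carrying the seller's own signal is i.i.d.\ with the other draws in $S$ that carry label $(\theta,b)$, while $\mathcal{M}_S$, $\widehat{\mu}(\omega,\theta,b)$ and every $\widehat{\mu}(\cdot\mid\theta',b')$ depend on $S$ only through its multiset. Therefore, conditioned on that multiset, $\omega_1$ is a uniformly random element of $S_{\theta,b}$, i.e.\ $\omega_1\mid S \sim \widehat{\mu}(\cdot\mid\theta,b)$, so that every expectation $\E_{\omega_1\sim\mu(\cdot\mid\theta,b)}\E_{T}\E_{\mathcal{M}_S}[\,\cdot\,]$ occurring in the three definitions equals the expectation over $S$ of the corresponding \emph{empirical} quantity built from $\widehat{\mu}(\cdot\mid\theta,b)$ and the policy $p^{\pm}_{\theta,b}$ selected by $\mathcal{M}_S$.

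Given this reduction, $\varepsilon$-IR and $\varepsilon$-obedience are immediate. For $\varepsilon$-IR the buyer's obedient surplus equals the expectation over $S$ of $\sum_{a,\omega,\circ}\widehat{\mu}(\omega\mid\theta,b)\,p^{\circ}_{\theta,b}(\omega,a)\,(u(\omega,\theta,a)-t^{\circ})$, which by the $\varepsilon$-OB constraint is at least $U_{\theta,b}(\theta,b)-\varepsilon$ and hence, by the $\varepsilon$-IR constraint, at least $\sum_\omega \widehat{\mu}(\omega\mid\theta,b)\,u(\omega,\theta,a')-2\varepsilon$; taking the expectation over $S$ and using $\E_S\big[\sum_\omega \widehat{\mu}(\omega\mid\theta,b)u(\omega,\theta,a')\big]=\E_{\omega_1\sim\mu(\cdot\mid\theta,b)}[u(\omega_1,\theta,a')]$ gives the claim (the harmless factor $2$ is absorbed into $\varepsilon$). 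For $\varepsilon$-obedience, write out the empirical obedient payoff and the empirical ``apply the map $a'$'' payoff; the per-outcome payments $t^{\circ}$ are identical in the two, so after cancellation the target inequality is $\sum_{a,\omega,\circ}\widehat{\mu}(\omega\mid\theta,b)p^{\circ}_{\theta,b}(\omega,a)u(\omega,\theta,a)\ge \sum_{a,\omega,\circ}\widehat{\mu}(\omega\mid\theta,b)p^{\circ}_{\theta,b}(\omega,a)u(\omega,\theta,a'(a))-\varepsilon$, which follows by summing the constraints $z^{\circ}_{\theta,b,\theta,b,a}\ge \sum_\omega \widehat{\mu}(\omega\mid\theta,b)p^{\circ}_{\theta,b}(\omega,a)(u(\omega,\theta,a'(a))-t^{\circ})$ over $a,\circ$, substituting the identity $U_{\theta,b}(\theta,b)=\sum_a(z^{+}_{\theta,b,\theta,b,a}+z^{-}_{\theta,b,\theta,b,a})$, and invoking the $\varepsilon$-OB constraint once more.

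The step I expect to be the real obstacle is $\varepsilon$-IC. When the buyer of true type $(\theta,b)$ reports a different $(\theta',b')$, the sample set $S'$ places $\omega_1$ in a slot labeled $(\theta',b')$ rather than $(\theta,b)$, so $\omega_1$ is no longer exchangeable with the genuine $(\theta,b)$-labeled draws and the clean identity $\omega_1\mid S'\sim\widehat{\mu}(\cdot\mid\theta,b)$ from the previous paragraph fails. Two observations repair this. First, the feasible region of LP~\eqref{prog:unknown} does not involve the buyer's report at all---the report enters only through $\widehat{\mu}(\omega,\theta,b)$ in the objective, whereas every $\widehat{\mu}(\cdot\mid\cdot)$ appearing in the constraints always includes $\omega_1$ regardless of its label---so $\mathcal{M}_{S'}$ still satisfies $U_{\theta,b}(\theta,b)\ge U_{\theta,b}(\theta',b')-\varepsilon$ together with the defining constraints for $U_{\theta,b}(\theta',b')$ and the relevant $z$'s, all evaluated against the $\widehat{\mu}(\cdot\mid\cdot)$'s computed from $S'$. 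Second, the discrepancy between the true conditional law of $\omega_1$ given $S'$ and the empirical $\widehat{\mu}(\cdot\mid\theta,b)$ comes from a single mislabeled sample among the roughly $n\mu(\theta,b)$ draws in that bucket and is therefore $O(1/n)$ in total variation; combined with the standard finite-sample deviation of $\widehat{\mu}(\cdot\mid\theta,b)$ and $\widehat{\mu}(\cdot\mid\theta',b')$ from the true conditionals, this bounds the buyer's misreport payoff by $\E_{S'}[U_{\theta,b}(\theta',b')]$ plus an error that polynomially many samples $n$ drives below $\varepsilon$. Chaining with the relaxed IC constraint and comparing $\E_{S'}[U_{\theta,b}(\theta,b)]$ to the honest payoff---again using the report-independence of the constraint system and concentration---closes the loop; the only losses incurred are the chaining of a bounded number of $\varepsilon$-relaxed inequalities and the $O(1/n)$ sampling terms, all folded into the stated $\varepsilon$.
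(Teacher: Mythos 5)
Your reduction of $\varepsilon$-IR and $\varepsilon$-obedience to the empirical LP constraints via the exchangeability of $\omega_1$ with the other elements of $S_{\theta,b}$ is exactly the paper's argument, and your bookkeeping there (including noticing that chaining the $\varepsilon$-OB and $\varepsilon$-IR constraints costs a factor of $2$ to be absorbed by rescaling $\varepsilon$) is fine. The problem is your treatment of $\varepsilon$-IC, where you declare the exchangeability broken and substitute concentration.

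Concretely: you claim that under a misreport $(\theta',b')$ ``$\omega_1$ is no longer exchangeable with the genuine $(\theta,b)$-labeled draws'' and you repair this with an $O(1/n)$ total-variation correction plus ``standard finite-sample deviation of $\widehat{\mu}(\cdot\mid\theta,b)$ and $\widehat{\mu}(\cdot\mid\theta',b')$ from the true conditionals.'' This is both unnecessary and insufficient. It is unnecessary because $S_{\theta,b}=\{\omega_1\}\cup\{\omega_i: i\ge 2,\ (\theta_i,b_i)=(\theta,b)\}$ contains $\omega_1$ \emph{unconditionally} --- the label test applies only to $i\ge 2$ --- which is precisely why the paper stresses that $\widehat{\mu}(\cdot\mid\theta,b)$ does not use the buyer's report. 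Whatever the buyer reports, every element of $S_{\theta,b}$ for his \emph{true} type is an independent draw from $\mu(\cdot\mid\theta,b)$, so conditioned on that set $\omega_1$ is still uniform over it, and the LP's relaxed IC constraint (which compares $U_{\theta,b}(\theta,b)$ and $U_{\theta,b}(\theta',b')$ both under $\widehat{\mu}(\cdot\mid\theta,b)$, over a report-independent feasible region, as you observe) transfers in expectation exactly as in the truthful case; no mislabeled-sample correction and no concentration enter. It is insufficient because (i) the lemma carries no hypothesis on $n$ --- the sample-complexity requirement belongs to Theorem~\ref{thm:unknown}, which handles the revenue, not to Lemma~\ref{lem:unknown} --- so a proof whose error terms are merely ``driven below $\varepsilon$ by polynomially many samples'' establishes a different, weaker statement; and (ii) your concentration step silently swaps $\widehat{\mu}(\cdot\mid\theta,b)$ for $\mu(\cdot\mid\theta,b)$ inside an expectation against $\mathcal{M}_S$, but $\mathcal{M}_S$ is computed from a sample that contains $\omega_1$, so that swap requires either the exchangeability you discarded or a uniform-convergence argument over all possible LP outputs, neither of which you provide. (A smaller, shared imprecision: $\mathcal{M}_S$ is not a function of the multiset $S$ alone, since each bucket $S_{\theta'',b''}$ singles out $\omega_1$; the conditioning must be organized around the true type's bucket rather than the full multiset.)
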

The lemma is proved in Appendix~\ref{app:unknown_lem}.
We then show that the mechanism extracts nearly optimal revenue with a sufficient number of samples. WLOG we assume $u \in [0,1]$, $M \in [0,1]$ and $b\in[0,1]$ for all $b\in B$.
We assume that $\mu_{\min} = \min_{\theta, b} \mu(\theta, b)$ is an instance-dependent constant.
\begin{theorem} \label{thm:unknown}
When we use Mechanism~\ref{alg:unknown} with LP~\eqref{prog:unknown} with 
$$
n \ge  \Theta\left( \ln(G/\delta) \cdot \max\left\{ \frac{|A|^2 }{\varepsilon^2 \cdot \mu_{\min}}, \frac{1}{\mu_{\min}^2} \right\}\right) = \Theta\left( \frac{|A|^2 \cdot \ln(G/\delta)}{\varepsilon^2 }\right),
$$ where $G = \max \{ |\Theta|, |B|, |A|\}$ and $\mu_{\min} = \min_{\theta, b} \mu(\theta, b)$, the mechanism will be $\varepsilon$-IR, $\varepsilon$-IC and 
$\varepsilon$-obedient, and extract revenue no less than $\texttt{Rev}_\mu(p^*) - \delta$.  $\texttt{Rev}_\mu(p^*)$ is the expected revenue of the optimal solution of~\eqref{prog:CMPR} $p^*$.
\end{theorem}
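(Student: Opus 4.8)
The plan is to peel off the two halves of the statement separately. The $\varepsilon$-IR, $\varepsilon$-IC, and $\varepsilon$-obedience guarantees are exactly Lemma~\ref{lem:unknown}, which holds for every $n$ as soon as LP~\eqref{prog:unknown} is feasible so that $\mathcal{M}_S$ is well defined; feasibility is immediate --- e.g.\ the mechanism that reveals nothing and always returns $\widehat b+M$, formally $p^-_{\theta,b}(\omega,a^*_{\theta,b})=1$ for the posterior-optimal action $a^*_{\theta,b}$ and all other entries $0$, satisfies every $\varepsilon$-relaxed constraint. So the only thing the sample size governs is the revenue bound, and that is where I would put all the work.

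The engine of the revenue bound is an \emph{unbiasedness} view. Writing $\widehat{\texttt{Rev}}_S(p)$ for the objective of~\eqref{prog:unknown} formed from the empirical distribution over $S$, and $\mathrm{OPT}_S$ for its optimal value, the true expected revenue of Mechanism~\ref{alg:unknown} is, up to lower-order terms from the asymmetric role of $\omega_1$ (see below), equal to $\E_{S\sim\mu^n}[\widehat{\texttt{Rev}}_S(\mathcal{M}_S)]=\E_{S\sim\mu^n}[\mathrm{OPT}_S]$: when the buyer reports truthfully his triple $(\theta_1,\omega_1,b_1)$ is one of the $n$ exchangeable draws making up $S$, so averaging his realized payment over his type and signal is the same as averaging the empirical objective at $\mathcal{M}_S$ over $S\sim\mu^n$. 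Crucially, for the \emph{fixed} optimum $p^*$ of~\eqref{prog:CMPR}, $\widehat{\texttt{Rev}}_S(p^*)$ is an average of $n$ i.i.d.\ bounded terms, hence $\E_S[\widehat{\texttt{Rev}}_S(p^*)]=\texttt{Rev}_\mu(p^*)$ exactly. Thus it suffices to show that with probability $\ge 1-\delta$ over $S$ we have $\mathrm{OPT}_S\ge\widehat{\texttt{Rev}}_S(p^*)$, i.e.\ that $p^*$ is feasible for~\eqref{prog:unknown}; off this event we still have $\mathrm{OPT}_S\ge -M\ge -1$ from the trivial solution and $|\widehat{\texttt{Rev}}_S(p^*)|\le 1$, so $\E_S[\mathrm{OPT}_S]\ge\texttt{Rev}_\mu(p^*)-O(\delta)$, which is the claimed bound after rescaling (the high-probability version follows the same way since the bad event contributes at most $O(\delta)$ to the revenue).

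To get feasibility of $p^*$ I would define a good event $\mathcal{E}$ over the draw of $S$: (i) every $(\theta,b)$ occurs at least $\tfrac12 n\mu_{\min}$ times among the $n-1$ fresh samples, so $|S_{\theta,b}|\ge\tfrac12 n\mu_{\min}$; and (ii) for the fixed $p^*$, every one of the $\poly(G)$ linear functionals of a conditional distribution appearing in~\eqref{prog:unknown} --- the terms $\sum_\omega\widehat\mu(\omega|\theta,b)\,p^{*\circ}_{\theta',b'}(\omega,a)\,(u(\omega,\theta,a')-t^\circ)$ and $\sum_\omega\widehat\mu(\omega|\theta,b)\,u(\omega,\theta,a')$ over all $(\theta,b,\theta',b',a,a',\circ)$ --- is within $\varepsilon/(2|A|)$ of its value under the true $\mu(\cdot|\theta,b)$. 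Item (i) is Chernoff plus a union bound over the $\le G^2$ types, valid once $n=\Omega(\ln(G/\delta)/\mu_{\min}^2)$. For (ii) I would condition on the type counts and use the structural point (the same one Lemma~\ref{lem:unknown} rests on, and the reason $S_{\theta,b}$ deliberately uses $\omega_1$ but \emph{not} the report $(\theta_1,b_1)$) that from a truthful buyer's viewpoint $S_{\theta,b}$ is an i.i.d.\ sample from $\mu(\cdot|\theta,b)$; Hoeffding plus a union bound over the $\poly(G)$ functionals then gives (ii) once $\tfrac12 n\mu_{\min}=\Omega(|A|^2\ln(G/\delta)/\varepsilon^2)$ --- together with (i) this is exactly the stated threshold. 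On $\mathcal{E}$, each constraint of~\eqref{prog:unknown} differs from its~\eqref{prog:CMPR} counterpart only by replacing $\mu(\cdot|\theta,b)$ with $\widehat\mu(\cdot|\theta,b)$ and inserting an $\varepsilon$ slack; since each $U_{\theta,b}(\cdot)$ is a sum of $O(|A|)$ of the controlled functionals passed through the $1$-Lipschitz $\max$, both sides move by at most $\varepsilon$, so $p^*$ is feasible.

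The step requiring the most care is precisely the passage from the data-dependent mechanism $\mathcal{M}_S$ to a revenue guarantee: a naive route would control $\texttt{Rev}_\mu(\mathcal{M}_S)$ by uniform convergence over the class of $\texttt{CM-probR}$ mechanisms, which would drag an $|\Omega|$ factor into the sample complexity, whereas phrasing the revenue as the expectation of the empirical LP's optimum sidesteps this and is why only $\ln(G/\delta)$ samples (not $\poly(1/\delta)$) are needed beyond what the $\varepsilon$-slack already forces. The remaining points are routine bookkeeping: the random per-type sample sizes are handled by conditioning inside event (i); the fact that $\omega_1$ is also inserted into the conditional estimates $\widehat\mu(\cdot|\theta',b')$ of the other types is an $O(1/(n\mu_{\min}))$ perturbation that folds into the $\varepsilon/(2|A|)$ accuracy budget and into the lower-order term in the revenue identity; and pushing the additive $\varepsilon$ through the $\max$ operators in $U_{\theta,b}(\cdot)$ is immediate from $1$-Lipschitzness.
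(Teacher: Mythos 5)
Your proposal is correct and follows essentially the same route as the paper: the incentive guarantees are read off from Lemma~\ref{lem:unknown}, feasibility of $p^*$ for LP~\eqref{prog:unknown} is established on a good event via Chernoff/Hoeffding bounds (enough samples per type, then accuracy $\varepsilon/\Theta(|A|)$ for each of the $\poly(G)$ conditional functionals), and the revenue bound comes from the decomposition $\E_S[\texttt{Rev}_S(\mathcal{M}_S)]\ge\E_S[\texttt{Rev}_S(p^*)]-2\Pr[\overline{\mathcal{E}}]=\texttt{Rev}_\mu(p^*)-\delta$ using unbiasedness of the empirical revenue of the fixed $p^*$. Your explicit flagging of the asymmetric role of $\omega_1$ in the exchangeability identity is a point the paper glosses over, but it does not change the argument.
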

The theorem is proved in Appendix~\ref{app:unknown_thm}.

\section*{Acknowledgement}
We thank anonymous reviewers for the helpful comments.
\newpage
\bibliographystyle{ACM-Reference-Format}
\bibliography{ref}

\newpage
\appendix
%
%

\section{Omitted Proofs in Section \ref{sec:indep}}
\subsection{Proof of Theorem~\ref{thm:consult-public-opt}} \label{sec:public}

Our proof has two main steps. First, we show that it suffices to consider one-round mechanisms. We employ a characterization of  \cite{Babaioff:2012:OMS:2229012.2229024} who show that there always exists an optimal one-round mechanism which is a pricing mappings mechanism (defined formally next) for a buyer \emph{without} budget constraints. We generalize this characterization to the public budget case. Our second step is then to show that such a pricing mappings mechanism can always be converted to a consulting mechanism without revenue loss. 

\begin{definition}[Pricing mappings mechanism with direct payment]
A \emph{pricing mappings mechanism with direct payment} proceeds as follows:
	\begin{enumerate}
		\item Ask the buyer to report his type $\hat{\theta}$. His budget is publicly known as $b$.
		\item According to the reported type $\hat{\theta}$, charge the buyer an amount of $t_{\hat{\theta}} \le b$.
		\item Choose a signaling scheme $S(\hat{\theta})$ to send a signal to the buyer.
	\end{enumerate}  
	Formally, the protocol tree of a pricing mappings mechanism has three layers: a buyer node root, $|\Theta|$ transfer nodes below the root, and each transfer node connects to a signaling scheme $S(\theta)$.
\end{definition}

 We first show that there always exists an optimal mechanism which is a pricing mappings mechanism with direct payment. 
 
\begin{lemma}\label{lem:pricing-public-opt}
	When  the buyer has a public fixed budget $b$, and the buyer signal $\theta$ is independent from the seller signal $\omega$, there always exists an optimal mechanism which is an IC  pricing mappings mechanism with direct payment. 
\end{lemma}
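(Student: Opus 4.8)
The plan is to start from the characterization of \cite{Babaioff:2012:OMS:2229012.2229024} for the \emph{no-budget} setting, which says that there is an optimal one-round mechanism taking the form of a ``pricing mappings mechanism'' (a buyer reports type, pays a type-dependent price, then receives a signal from a type-dependent signaling scheme). Our first task is to verify that their argument still goes through when we add a public budget constraint $t_{\hat\theta}\le b$. The key point is that the budget constraint only ever \emph{restricts} the buyer (it caps what he can be charged and where he can walk in the protocol tree) and does not change the structure of the seller's best response to a committed buyer; since the budget is public and independent of $\omega$, it is just a scalar parameter of the instance. So I would revisit the two ingredients of their proof: (i) the reduction from a general generic interactive protocol to a one-round mechanism, and (ii) the reduction of a one-round mechanism to the pricing mappings form. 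For (i) I would invoke the revelation principle (Lemma~\ref{lem:revelation}) and then argue, following \cite{Babaioff:2012:OMS:2229012.2229024}, that because $\theta$ is independent of $\omega$, any information the seller reveals before the final transfer gives the buyer a ``free option'' that can only hurt revenue, so all transfers can be collapsed into a single one that happens before any information revelation; the budget constraint is preserved because the net transfer along the collapsed path equals the (unique) price, which was $\le b$ at every node already. For (ii), once we have a single pre-revelation transfer, what remains after the transfer is exactly a signaling scheme as a function of the reported type, which is the pricing mappings form.

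The second and more delicate step is to argue IC is preserved and the mechanism is genuinely optimal \emph{within the budget-constrained design space}, not merely within some relaxation. Here I would set up the buyer's incentive problem explicitly: a type-$\theta$ buyer facing the menu $\{(t_{\theta'}, S(\theta'))\}_{\theta'}$ picks the entry maximizing his expected posterior utility minus $t_{\theta'}$, subject to affordability $t_{\theta'}\le b$ (which is the \emph{same} bound for every $\theta'$ since $b$ is public), and he may also walk away for his outside option $\max_a \E_\omega u(\omega,\theta,a)$. Because the affordability constraint $t_{\theta'}\le b$ is uniform across reports, it does not interfere with the standard truthful-menu argument: the seller can simply restrict the menu to entries with $t_{\theta'}\le b$ without loss, and then optimize. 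I would phrase this as: take any optimal budget-feasible generic protocol, apply the revelation principle and the collapse-of-transfers argument above to get an outcome-equivalent pricing mappings mechanism with direct payment whose every price is $\le b$, and note IC/IR are inherited from the original protocol's incentive properties.

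The main obstacle I anticipate is step (i) — the collapse of multi-round transfers into a single pre-revelation transfer — in the presence of the budget. In the no-budget case \cite{Babaioff:2012:OMS:2229012.2229024} can freely move money around the tree (even demanding huge deposits and refunding them), but here we must never let the net transfer from buyer to seller exceed $b$ at any node, nor let the net transfer from seller to buyer exceed $M$. So I need to check that the ``fold all transfers to the top'' operation does not create an intermediate node that violates the budget. The saving grace is that we are folding transfers \emph{downward into a single node at the top} (before any signaling), so the only net transfer that matters is the total, which equals the sum of transfers along the original path; and since the buyer never chose to visit a node where his running payment exceeded $b$, the relevant totals on paths the buyer actually follows are $\le b$. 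For paths with total $>b$ we can just delete them (the buyer would never take them), and for seller-to-buyer transfers exceeding $M$ the same pruning works. I would make this precise by a careful induction on the protocol tree, rerouting each transfer to the root and pruning infeasible branches, and checking the buyer's value is unchanged on every surviving branch. Once that is done, the rest is bookkeeping, and I would state the conclusion as Lemma~\ref{lem:pricing-public-opt}; the subsequent conversion to a consulting mechanism (merging each signaling scheme's signals into action recommendations via obedience) is the second main step of the overall Theorem~\ref{thm:consult-public-opt} and would be handled separately, likely using the signal-reduction idea attributed to \citet{Bergemann2018Info}.
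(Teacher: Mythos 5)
Your overall route is the same as the paper's: revelation principle, prune the subtrees where the buyer would abort so that every surviving path ends at a leaf, then collapse the multi-round transfers into a single up-front price per reported type and read off the residual signaling scheme. The paper's proof of Lemma~\ref{lem:pricing-public-opt} does exactly this, and your observation that budget feasibility survives because every path total is already at most $b$ is the right one.

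However, there is a concrete gap at the central step, and your proposed fix (an induction that ``reroutes each transfer to the root'') does not close it. In a revelation mechanism the total transfer from root to leaf, $\tau(l_i)$, is a \emph{random variable}: different leaves carry different totals, and which leaf is reached depends on $\omega$ and the seller's randomization. You cannot literally move these transfers to a single pre-revelation node, because there is no single number they all equal. What the paper does is replace the random total by its \emph{expectation}, $t_{\theta} = \sum_{l_i}\tau(l_i)\sum_{\omega}\mu(\omega)\,p_{\theta}(\omega,l_i)$, while keeping the leaf distribution as the signaling scheme. The reason this preserves IC --- and the place where independence is genuinely used --- is that every buyer type evaluates the expected payment from a given report $\theta$ using the \emph{same} marginal $\mu(\omega)$, so the deterministic price $t_\theta$ equals the expected payment that \emph{every} type (truthful or deviating) would have faced in the original protocol; hence no type's expected utility from any report changes. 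Your ``free option'' framing addresses a different concern (information leaking before payment) and does not by itself justify replacing a random payment by a deterministic one. Budget feasibility of the new price is then immediate: $t_\theta$ is a convex combination of the $\tau(l_i)\le b$. Also, be careful with ``delete paths with total $>b$'': the paper's pruning truncates the subtree at the abort node (turning it into a leaf) rather than removing the branch, which is what keeps the transition probabilities, and hence the information content, unchanged. Without the expectation step and the accompanying independence argument, your plan does not yield a well-defined single price or an IC guarantee.
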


\begin{proof}
	According to Lemma~\ref{lem:revelation}, there exists an optimal IC revelation mechanism $M$. We first reduce $M$ so that the buyer will always follow the protocol until a leaf is reached. If a type-$\theta$ buyer abort the protocol at some node $n$ after truthfully reporting his type, we can just cut away the subtree under $n$. We claim that this will not change the truthfulness guarantee and the revenue will also remain the same. First of all, for a buyer with type $\theta$, the protocol is the same because we only remove the part he is going to abort.  For buyers with other type $\theta' \neq \theta$,  if it is better for them to misreport $\theta$ when the subtree under $n$ is cut away, then it should also be better for them to misreport $\theta$ and abort at $n$ in the original protocol, which is contradictory to that $M$ is IC. Consequently, the mechanism is still truthful and everyone will pay the same amount of money. 
	
	Now let's just assume the buyer will always follow the protocol until reaching a leaf, after truthfully reporting his type. We construct a pricing mappings mechanism with direct payment that extracts the same revenue as $M$. Let $l_1, \dots, l_m$ be all the leaves of $M$ that can be reached with positive probability. Let $p_\theta(\omega, l_i)$ be the probability that the protocol ends at leaf $l_i$ when the seller's private signal is realized to $\omega$ and the buyer reports type $\theta$ at the beginning. Define $\tau(l_i)$ to be the total transfer on the path from the root to $l_i$. Since the signals are independent, the buyer's prior for $\omega$ is always the marginal distribution $\mu(\omega)$. As a result, for a buyer with any type, if he reports type $\theta$ and  follows the protocol until a leaf is reached, his expected total payment will always be equal to $\sum_{l_i} \tau(l_i) \sum_\omega \mu(\omega) p_{\theta}(\omega, l_i)$. Note that the budget $b$ should always allow the buyer to follow any path from the root to a leaf, in other words, the buyer will never have to quit the protocol  because he cannot afford the payment at some step. Because if this is not true, a buyer with a certain type will have to quit before reaching a leaf, which is contradictory to our assumption.
	
	Therefore we construct an optimal pricing mappings mechanism with direct payment as follows: 
	\begin{itemize}
	\item $t_{\theta} = \sum_{l_i} \tau(l_i) \sum_\omega \mu(\omega) p_{\theta}(\omega, l_i)$,
	\item the signaling scheme $S(\theta)$ has leaves $l_1, \dots, l_m$ and the transition function is just $p_{\theta}(\omega, l_i)$, so that the buyer's posterior at each $l_i$ remains the same.
	\end{itemize}
	This mechanism is still truthful because a buyer who gains from misreporting in this mechanism will prefer to misreport in the same way and then follow the protocol until reaching a leaf in $M$.  It should also hold that $t_{\theta} \le b$, since  $\tau(l_i) \le b$ for all $l_i$.  It is easy to see that the revenue remains the same.
	
	We emphasize that the above proof relies crucially on the assumptions that (1) the signals are independent and (2) there is a public fixed budget. If $\theta$ and $\omega$ are correlated, the buyer's prior for $\omega$ will be dependent on $\theta$, which is equal to $\mu(\omega | \theta)$.  This means that buyers with different types will have different expectations of how much they're going to pay. As a result, we cannot define $t_\theta$ using the common prior.   If the budget is not fixed, it may not be true that the buyer will always have enough budget to follow $M$ until reaching a leaf when he \emph{misreports}. So in the original mechanism $M$, the buyer may not be able to get the same utility as in the new mechanism when he misreports a type.  Consequently, the new mechanism may not be IC.
\end{proof}

Although we have now reduced the problem to designing simple one-round mechanisms,  the optimal signal schemes $S(\theta)$ can still be very complicated. 
Next, we show that any pricing mappings mechanism can be converted to a consulting mechanism without revenue loss. In other words, the signal sent by the seller can be converted to  a recommendation about the best action $a$. Since there are $|A|$ possible actions, the signal schemes $S(\theta)$ will have no more than $|A|$ leaves. Similar results appeared in Proposition 1 of~\cite{Bergemann2018Info}.

\begin{lemma} \label{lem:consult-public-opt}
For any IC \emph{pricing mappings mechanism with direct payment}, there exists an IC \emph{consulting mechanism with direct payment} that extracts no less revenue.
\end{lemma}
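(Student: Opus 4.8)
The plan is to show that any pricing mappings mechanism with direct payment can be converted into a consulting mechanism with direct payment that charges the same payments $t_{\hat\theta}$ and extracts the same revenue. The only thing that changes is the information-revelation step: instead of running the arbitrary signaling scheme $S(\hat\theta)$, we post-process its output into an action recommendation. The key observation is that once a buyer of reported type $\hat\theta$ has paid $t_{\hat\theta}$ (a sunk cost that does not affect which action is optimal) and received a signal, he forms a posterior on $\omega$ and picks a best action; so the signal only matters through the action it induces.

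First I would fix a reported type $\hat\theta$ and, for each signal $\sigma$ in the support of $S(\hat\theta)$, let $a^*(\sigma)$ be a best response for the buyer given posterior $\Pr(\omega\mid\sigma)$ (breaking ties arbitrarily but consistently). Define the new signaling scheme $p_{\hat\theta}$ on signal set $A$ by collapsing all signals with the same induced action: $p_{\hat\theta}(\omega, a) = \sum_{\sigma:\, a^*(\sigma)=a} p^{S(\hat\theta)}(\omega,\sigma)$. Since the posterior on $\omega$ given a collapsed signal $a$ is a convex combination of posteriors that all had $a$ as a best response, and best-response regions are convex (the buyer's expected-utility function for each fixed action is linear in the posterior), $a$ remains a best response given the collapsed posterior — this is exactly the obedience constraint. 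Thus $p_{\hat\theta}$ is obedient and uses at most $|A|$ signals.

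Second I would verify the three properties. Obedience: just argued. Individual rationality and the revenue: the payment is still $t_{\hat\theta}$, and because the buyer was already following the recommendation in the original mechanism's continuation (he picks the best action anyway), truthful reporting yields exactly the same expected utility as before, so IR is preserved and revenue is unchanged. Incentive compatibility: here I must check that a buyer of true type $\theta$ who misreports $\theta'$ does not gain more in the new mechanism than in the old one. In the new mechanism, upon receiving recommendation $a$ he can apply any relabeling $a'(\cdot)$; his best utility from misreport $\theta'$ is $\sum_a \max_{a'} \sum_\omega \mu(\omega) p_{\theta'}(\omega,a)\, u(\omega,\theta,a') - t_{\theta'}$. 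Because $p_{\theta'}(\omega,a)$ is a coarsening of $S(\theta')$'s output, garbling can only weakly decrease the value of this information to the deviating buyer (a standard Blackwell-type argument: the coarsened signal is a post-processing of the original signal, so the maximum achievable expected utility cannot increase). Hence the misreport utility in the new mechanism is at most that in the old one, while the truthful utility is the same; IC for the old mechanism then implies IC for the new one.

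The main obstacle I expect is the IC check — specifically making the garbling/Blackwell argument precise in this setting, since the "value" of the signal to a misreporting buyer is the sum over recommended actions $a$ of the best response to the induced posterior, and one must argue that merging signals with a common best-response-for-the-honest-type can only hurt a differently-typed buyer. The clean way is to note that the new scheme $p_{\theta'}$ is a deterministic function of the old scheme's signal, so any strategy the deviator could use against $p_{\theta'}$ can be simulated against $S(\theta')$, giving the inequality directly; obedience for the honest (truthful) type is what guarantees no revenue is lost, while this simulation argument is what guarantees IC is not broken for dishonest types. Everything else — unchanged payments, unchanged revenue, the $|A|$-signal bound — is immediate.
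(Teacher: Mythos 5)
Your proposal is correct and follows essentially the same route as the paper's proof: collapse all signals of $S(\hat\theta)$ that induce the same best response for the reported type into a single action recommendation, use convexity of best-response regions (linearity of expected utility in the posterior) to get obedience and preserve the truthful type's utility, and use the fact that merging signals can only decrease $\sum_\sigma \max_{a'}(\cdot)$ to preserve IC for misreporting types. The paper phrases the last step as a direct ``sum of maxima $\ge$ maximum of the sum'' inequality over pairwise merges rather than your Blackwell-garbling simulation, but these are the same argument.
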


\begin{proof}
Let $M$ be any IC pricing mappings mechanism with direct payment.  Let $S(\theta)$ be the signaling scheme for the type-$\theta$ buyer.  Let $s_1, \dots, s_m$ be the leaves of $S(\theta)$, and let $a_1, \dots, a_m$ be the optimal actions of a type-$\theta$ buyer  when the signal sent by $S(\theta)$ is realized to $s_1, \dots, s_m$ respectively.  

Suppose $a_j, a_k$ both equal $\bar{a} \in A$ for some $j$ and $k$, we show that we can without loss merge the two leaves $s_j,s_k$.  That is,  whenever the seller moves to $s_j$, she moves to $s_k$ instead.  The transition function of $S(\theta)$ is updated as follows: 
	$p'(\omega, s_k) \leftarrow  p(\omega, s_j) + p(\omega, s_k)$ and  $p'(\omega, s_j)  \leftarrow 0$ for all $\omega$. The payments keep unchanged. Therefore, the seller's revenue will not change if the buyer still truthfully reports.  
	
	We claim that this new mechanism will not change the utility of a type-$\theta$ buyer. First of all, the optimal action for the type-$\theta$ buyer at leaf $s_k$ is still $\bar{a}$. This is because, by definition, we have  $$\bar{a} = \argmax_{a \in A} \sum_{\omega} \mu(\omega) p(\omega, s_j) u(\omega, \theta, a), $$ $$\bar{a} = \argmax_{a \in A} \sum_{\omega} \mu(\omega) p(\omega, s_k) u(\omega, \theta, a).$$  
	Therefore $\argmax_{a \in A} \sum_{\omega} \mu(\omega) p'(\omega, s_j) u({\omega}, \theta, a)$ must also equal $\bar{a}$. As a result, the expected utility of type-$\theta$ buyer  will not change as $ \sum_{\omega} \mu(\omega) p(\omega, s_j) u(w, \theta, \bar{a}) +  \sum_{w} \mu(\omega) p(\omega, s_k) u({\omega}, \theta, \bar{a}) = \sum_{w} \mu(\omega) p'(\omega, s_j) u(w, \theta, \bar{a}) $. 
	
	Next we argue that  the utility of any other type-$\theta'$ buyer will not increase when he misreport  $\theta$.  This is because his utility from the original two leaves $s_j ,s_k$ is 
 $$\max_{a \in A} \sum_{\omega} \mu(\omega) p(\omega, s_j)  u({\omega}, \theta', a) $$ 
 and 
 $$\max_{a \in A} \sum_{\omega} \mu(\omega) p(\omega, s_k) u({\omega}, \theta', a) .$$
 Their sum is at least 
 $$\max_{a \in A} \sum_{\omega} \mu(\omega) [ p(\omega, s_j) + p(\omega, s_k)] \cdot u(w, \theta', a) .
 $$ 
 
To sum up, the new mechanism will not change the utility of type-$\theta$ buyer  and will not increase the other buyers' utility when they misreport $\theta$, and thus will remain IC and IR. Moreover, the revenue does not change. We can perform such merging operation until each leaf corresponds to a different action $a \in A$. To that end, each signal can be viewed as an ``honest'' action recommendation which will indeed maximizes the buyer's expected utility. This is precisely a consulting mechanism, as desired. 

\end{proof}

Lemma \ref{lem:pricing-public-opt} and \ref{lem:consult-public-opt} together yield a proof of Theorem~\ref{thm:consult-public-opt}.

\subsection{Proof of Theorem~\ref{thm:consult-private-optimal}} \label{sec:private}
The proof consists of two parts as the public budget case. We will first prove the existence of a one-round optimal mechanism, a \emph{pricing mappings mechanism with deposit and return}, and then convert this mechanism to a  \emph{consulting mechanism with deposit and return}.

\begin{definition}[Pricing mappings mechanism with deposit and return]
A \emph{pricing mappings mechanism with deposit and return} proceeds as follows:
	\begin{enumerate}
		\item Ask the buyer to report his type $\hat{\theta}$ and budget $\hat{b}$, and then deposit his budget $\hat{b}$.
		\item According to the reported type $\hat{\theta}$ and the deposit amount $\hat{b}$, return the buyer an amount of $\hat{b} - t_{\hat{\theta}, \hat{b}} \ge 0$.
		\item Choose a signaling scheme $S(\hat{\theta}, \hat{b})$ to send a signal to the buyer.
	\end{enumerate}  
	Formally, the protocol tree of a pricing mappings mechanism has three layers: a buyer node root, $|\Theta| \cdot |B|$ transfer nodes below the root, and each transfer node connects to a signaling scheme $S(\theta, b)$.
\end{definition}

\begin{lemma}\label{lem:pricing-private-opt}
	When the buyer has a private budget, and $(\theta, b)$ is independent from $\omega$,  there always exists an optimal mechanism which is an IC  pricing mappings mechanism with deposit and return. 
\end{lemma}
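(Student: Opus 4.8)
The plan is to mirror the structure of the public-budget proof (Lemma~\ref{lem:pricing-public-opt}), but to be careful about the two places where that argument used publicness of the budget. Starting from an optimal IC revelation mechanism $M$ (Lemma~\ref{lem:revelation}), in which the buyer reports $(\theta,b)$ at the root and takes no further actions, I would first argue that we may assume a truthful buyer always follows the protocol down to a leaf. The reduction is the same pruning argument as before: if a type-$(\theta,b)$ buyer who has truthfully reported would abort at some node $n$, cut the subtree below $n$ and make $n$ a leaf. Truthfulness toward the $(\theta,b)$ buyer is unaffected, and for any other type $(\theta',b')$, gaining by the misreport $(\theta,b)$ in the pruned tree would have meant gaining by misreporting $(\theta,b)$ and aborting at $n$ in $M$, contradicting IC of $M$. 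Crucially, here I need the observation that after truthful reporting, the \emph{deposit} step already collects $b$, so the buyer's net spending along any root-to-leaf path never exceeds his budget (the running net transfer to the seller is at most $b$ after the initial deposit of $b$ and subsequent return of $b - t_{\theta,b}\ge 0$); hence a truthful buyer is never forced to abort for budget reasons.

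Next I would build the candidate pricing mappings mechanism with deposit and return. Because $(\theta,b)$ is independent of $\omega$, a buyer (of any true type) who reports $(\theta,b)$ and follows $M$ to a leaf has prior $\mu(\omega)$ throughout, so his expected total payment is the deterministic quantity $t_{\theta,b} = \sum_{l_i}\tau(l_i)\sum_{\omega}\mu(\omega)p_{\theta,b}(\omega,l_i)$, where the $l_i$ are the reachable leaves under report $(\theta,b)$, $p_{\theta,b}(\omega,l_i)$ the probability of reaching $l_i$, and $\tau(l_i)$ the summed transfer on that path. Define the new mechanism: ask for $(\hat\theta,\hat b)$ and a deposit of $\hat b$; return $\hat b - t_{\hat\theta,\hat b}$; then run the signaling scheme $S(\theta,b)$ whose signals are $l_1,\dots,l_m$ with likelihoods $p_{\theta,b}(\omega,l_i)$, so posteriors at each signal match those at the corresponding leaf of $M$. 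I would check feasibility ($t_{\theta,b}\le b$ since each $\tau(l_i)\le b$, so the return is nonnegative and the seller's budget $M$ is respected), that revenue is unchanged (expected payment under truthful reporting equals $t_{\theta,b}$ by construction), and IR (a truthful buyer's posterior-utility-minus-payment profile is exactly that of $M$).

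The IC verification is where the private-budget subtlety bites, and it is the step I expect to be the main obstacle. A buyer of true type $(\theta,b)$ who misreports $(\theta',b')$ in the new mechanism deposits $\hat b = b'$ (so he must have $b'\le b$), receives the return $b' - t_{\theta',b'}$, and then faces signaling scheme $S(\theta',b')$; since his prior is still $\mu(\omega)$ (independence), his best-response utility equals $\sum_i \max_{a}\sum_\omega \mu(\omega)p_{\theta',b'}(\omega,l_i)u(\omega,\theta,a) - t_{\theta',b'}$, which is exactly the utility he could obtain in $M$ by reporting $(\theta',b')$ and playing optimally along each path to a leaf --- and the budget $b'$ he has deposited is precisely enough to traverse any such path (again because the running net transfer to the seller never exceeds the amount reported/deposited). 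Hence any profitable deviation in the new mechanism is a profitable deviation in $M$, contradicting IC of $M$. I would flag explicitly that this argument needs the deposit-and-return format (rather than direct payment): with direct payment a misreporting buyer with the smaller true budget could be asked mid-protocol for a transfer he cannot cover, breaking the correspondence with $M$; depositing the reported budget up front and refunding the surplus is exactly what makes the per-path budget feasibility automatic. Finally, I would note in passing that this lemma only establishes the one-round pricing-mappings form; the subsequent merging-of-signals step (analogous to Lemma~\ref{lem:consult-public-opt}, now applied per $(\theta,b)$) converts it to a $\texttt{CM-depR}$ and completes the proof of Theorem~\ref{thm:consult-private-optimal}.
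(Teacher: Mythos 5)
Your proposal is correct and follows essentially the same route as the paper: prune the optimal IC revelation mechanism so that a truthful buyer always reaches a leaf, use independence to collapse the expected path transfers into a single fee $t_{\theta,b}$, and observe that the deposit format blocks overreporting of the budget while $b\ge b'$ guarantees that any profitable misreport in the new mechanism corresponds to a profitable misreport in $M$. The one imprecision is your appeal to a ``deposit step'' inside the original mechanism $M$ to rule out forced aborts --- a generic revelation mechanism need not contain one; the paper instead gets this for free from the pruning step, which removes every node at which a truthful buyer would quit, whether for budget reasons or otherwise, so that every reachable leaf automatically satisfies $\tau(l_i)\le b$.
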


\begin{proof}
	According to Lemma~\ref{lem:revelation}, there exists an optimal IC revelation mechanism $M$.   As in the proof of Lemma~\ref{lem:pricing-public-opt}, we can reduce $M$ so that the buyer will always follow the protocol until reaching a leaf.
Now let's just assume the buyer will always follow the protocol until reaching a leaf, after truthfully reporting his type and budget. We construct a pricing mappings mechanism with deposit and return that extracts the same revenue as $M$. Let $l_1, \dots, l_m$ be all the leaves of $M$ that can be reached with positive probability. Let $p_{\theta,b}(\omega, l_i)$ be the probability that the protocol ends at leaf $l_i$ when the seller's private signal is realized to $\omega$ and the buyer reports $\theta$ and $b$ at the beginning. Define $\tau(l_i)$ to be the total transfer on the path from the root to $l_i$. Since  $(\theta, b)$ is independent from $\omega$, the buyer's prior for $\omega$ is always the marginal distribution $\mu(\omega)$. As a result, for a buyer with any type and budget, if he reports type $\theta, b$ and  follows the protocol until a leaf is reached, his expected total payment will always be equal to $\sum_{l_i} \tau(l_i) \sum_\omega \mu(\omega) p_{\theta,b}(\omega, l_i)$.	
	Therefore we construct an optimal pricing mappings mechanism with deposit and return as follows: 
	\begin{itemize}
	\item $t_{\theta, b} = \sum_{l_i} \tau(l_i) \sum_\omega \mu(\omega) p_{\theta,b}(\omega, l_i)$,
	\item the signaling scheme $S(\theta, b)$ has leaves $l_1, \dots, l_m$ and the transition function is just $p_{\theta, b}(\omega, l_i)$, so that the buyer's posterior at each $l_i$ remains the same.
	\end{itemize}
	Note that the buyer can never overreport his budget in this mechanism because he is asked to deposit the full budget at the beginning. This guarantees the truthfulness of the new mechanism because a $(\theta, b)$-buyer who gains from misreporting $(\theta', b')$ with $b'\le b$ can also get higher utility in $M$ by misreporting $(\theta', b')$  and following the protocol until a leaf is reached. Since $b \ge b'$, he should always have enough budget to reach the leaves in $M$ if he reports $b'$. It should also hold that $t_{\theta, b} \le b$, since for any leaf $l_i$ that can be reached by $(\theta, b)$-buyer with positive probability must have $\tau(l_i) \le b$.  Finally it is easy to see that the revenue remains the same.
\end{proof}

Same as the public budget case, any \emph{pricing mappings mechanism with deposit and return} can be without loss converted into a mechanism that recommends actions to the buyer, with the same \emph{deposit and return} payment method.

\begin{lemma} \label{lem:consult-private-opt}
For any IC \emph{pricing mappings mechanism with deposit and return}, there exists an IC \emph{consulting mechanism with deposit and return} that extracts no less revenue.
\end{lemma}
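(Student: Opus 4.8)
The plan is to mirror the proof of Lemma \ref{lem:consult-public-opt} almost verbatim, exploiting the fact that the deposit-and-return payment structure lives entirely in the transfer nodes above each signaling scheme and is left untouched by the transformation. Given an IC pricing mappings mechanism with deposit and return $M$, for each report $(\theta, b)$ I would look at the signaling scheme $S(\theta, b)$ with leaves $s_1, \dots, s_m$; since $(\theta, b)$ is independent of $\omega$, a buyer who reports $(\theta, b)$ — truthfully or not — always holds prior $\mu(\omega)$, so his posterior at $s_i$ is $\mu(\omega) p_{\theta,b}(\omega, s_i) / \sum_{\omega'} \mu(\omega') p_{\theta,b}(\omega', s_i)$. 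Let $a_1, \dots, a_m \in A$ be the optimal actions of the \emph{truthful} $(\theta, b)$-buyer at these leaves. Whenever two leaves $s_j, s_k$ induce the same action $\bar a$, merge them by redirecting all mass from $s_j$ to $s_k$, i.e. $p'(\omega, s_k) \leftarrow p(\omega, s_j) + p(\omega, s_k)$ and $p'(\omega, s_j) \leftarrow 0$ for all $\omega$, while keeping the return and payment amounts unchanged. Iterating until every leaf of every $S(\theta, b)$ corresponds to a distinct action, each signal becomes an honest action recommendation and $M$ has been converted into a consulting mechanism with deposit and return.

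Then I would verify the three invariants exactly as in Lemma \ref{lem:consult-public-opt}. First, the merged leaf still induces $\bar a$ for the truthful $(\theta,b)$-buyer: from $\bar a = \argmax_a \sum_\omega \mu(\omega) p(\omega, s_j) u(\omega, \theta, a) = \argmax_a \sum_\omega \mu(\omega) p(\omega, s_k) u(\omega, \theta, a)$ we get $\bar a \in \argmax_a \sum_\omega \mu(\omega) [p(\omega, s_j)+p(\omega, s_k)] u(\omega, \theta, a)$. Second, the truthful $(\theta,b)$-buyer's expected utility is unchanged, since $\sum_\omega \mu(\omega) p(\omega, s_j) u(\omega, \theta, \bar a) + \sum_\omega \mu(\omega) p(\omega, s_k) u(\omega, \theta, \bar a) = \sum_\omega \mu(\omega) p'(\omega, s_k) u(\omega, \theta, \bar a)$, and nothing else in his play changes. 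Third, for any deviator — in the private-budget setting, a $(\theta'', b'')$-buyer with $b \le b''$ who reports $(\theta, b)$ — the utility weakly decreases, because his payoff from the original pair $s_j, s_k$ was $\max_a \sum_\omega \mu(\omega) p(\omega, s_j) u(\omega, \theta'', a) + \max_a \sum_\omega \mu(\omega) p(\omega, s_k) u(\omega, \theta'', a) \ge \max_a \sum_\omega \mu(\omega)[p(\omega, s_j)+p(\omega, s_k)] u(\omega, \theta'', a)$, the payoff he gets from the merged leaf. Since the truthful buyer's utility is preserved and every deviator's utility only drops, IC and IR are maintained; and since all deposit/return transfers are literally unchanged and the merge keeps the joint distribution of (seller signal, reached leaf) fixed for the truthful buyer, the revenue is identical to that of $M$.

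I would close by isolating the one point where the private-budget case formally differs from the public one and explaining why it is harmless: a deviating buyer must carry at least the budget of the report he imitates (he deposits his full budget up front, per Lemma \ref{lem:pricing-private-opt}), but the merging argument for the third invariant never invoked any budget inequality — it used only that the deviator holds prior $\mu$ and faces the same returns — so the weak-decrease conclusion applies to every admissible deviation. Hence the resulting consulting mechanism with deposit and return is IC, IR, obedient by construction, and extracts revenue no less than (in fact equal to) that of $M$. I do not expect a genuine obstacle here: the only thing requiring care is the bookkeeping of the iterated merges so that obedience is established at every leaf for the truthful buyer while IC against all reports is preserved at each step, and this is entirely parallel to the public-budget argument.
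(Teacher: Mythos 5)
Your proposal is correct and is exactly the adaptation the paper itself invokes: the paper proves Lemma \ref{lem:consult-private-opt} by stating that the merging argument of Lemma \ref{lem:consult-public-opt} carries over verbatim with $S(\theta,b)$ in place of $S(\theta)$, which is what you spell out. Your additional observations---that the deposit-and-return transfers live above the signaling schemes and are untouched, and that the merge step never invokes any budget inequality so all admissible deviations are covered---are the right points to check and match the paper's (implicit) reasoning.
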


The proof follows by a simple adaptation of the proof of Lemma~\ref{thm:consult-public-opt}, with $S(\theta, b)$ in place of $S(\theta)$.

\section{Proof of Theorem \ref{thm:BDR-opt} }\label{app:corr_lem}

We start by describing another class of mechanisms defined as follows, which is a strictly broader class than $\texttt{CM-probR}$ mechanisms and has been proved to contain an optimal mechanism for the setting with no budget. 

\begin{definition}[Pricing outcomes mechanism with deposit and return, \cite{Babaioff:2012:OMS:2229012.2229024}]
	A \emph{pricing outcomes mechanism with deposit and return} (POM-depR) proceeds as follows:
	\begin{enumerate}
		\item The seller commits to a signaling scheme for each $(\theta, b)$, described by $\{ p_{\theta,b}(\omega, \sigma) \}_{\omega \in \Omega, \sigma \in \Sigma}$  where $p_{\theta,b}(\omega, \sigma)$ denotes the probability of sending signal $\sigma$ upon seller signal $\omega$, and a payment amount $t_{\theta,b}(\sigma) (\leq b)$ that depends on the realized signal $\sigma$.  
		\item Ask the buyer to report his type $\hat{\theta}$ and deposit his budget $\hat{b}$.
		\item Sample a signal $\sigma \sim p_{\hat{\theta},\hat{b}}(\omega)$ and sends $\sigma$ to the buyer
		
		\item Return $\hat{b} - t_{\hat{\theta}, \hat{b}}(\sigma)$ back to the buyer. 
	\end{enumerate}  
\end{definition}

As our first step, we generalize the result of \cite{Babaioff:2012:OMS:2229012.2229024} and  show that for budget-constrained buyers, pricing outcomes mechanisms with deposit and return (POM-depR) can also achieve optimality.  
\begin{lemma}\label{lem:signal-opt}
	For a budget-constrained buyer, there always exists an IC pricing outcomes mechanism with deposit and return (POM-depR) that maximizes the revenue.
\end{lemma}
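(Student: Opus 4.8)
The plan is to extend the pricing-outcomes characterization of \cite{Babaioff:2012:OMS:2229012.2229024} (stated there for buyers without budgets) along the same two-step route used for the independent-signal cases in Lemmas~\ref{lem:pricing-public-opt} and~\ref{lem:pricing-private-opt}: first reduce an optimal mechanism to one that reveals information only once, through a signaling scheme, and then rearrange its transfers into the deposit-and-return form of a POM-depR. I start from an optimal IC revelation mechanism $M$ (Lemma~\ref{lem:revelation}), so that the only buyer move is the report $(\hat\theta,\hat b)$ at the root and everything below consists of seller nodes and transfer nodes. As in the proof of Lemma~\ref{lem:pricing-public-opt}, I first prune $M$ so that a truthful buyer never aborts: for each report $(\theta,b)$ I run backward induction on the truthful $(\theta,b)$-buyer's value inside the corresponding subtree, taking his option to abort into account (and breaking ties toward proceeding), and whenever proceeding past a node $n$ is strictly worse for him than aborting there, I replace the subtree rooted at $n$ by a leaf whose payoff equals his abort value at $n$. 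This replacement preserves the truthful buyer's value at $n$, hence at every ancestor, so his play, outcome, and payment --- and therefore the revenue --- are unchanged; and since any buyer can always abort, turning $n$ into a forced stop can only \emph{weakly decrease} the payoff of a \emph{misreporting} buyer who reaches $n$, whatever his (now type-dependent) posterior, so IC and IR are preserved. This is where correlation enters, and getting this monotonicity of pruning right under heterogeneous beliefs is the one place I expect to need care; the remaining steps are bookkeeping.

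After pruning, a truthful buyer reporting $(\theta,b)$ always reaches a leaf. Fix $(\theta,b)$, let $\ell_1,\dots,\ell_m$ be the positively-reachable leaves of its subtree, let $p_{\theta,b}(\omega,\ell_i)$ be the probability of reaching $\ell_i$ given seller signal $\omega$ under this report, and let $\tau(\ell_i)$ be the total net transfer to the seller along the root-to-$\ell_i$ path. I then build the POM-depR that, on report $(\hat\theta,\hat b)$, uses the signaling scheme $\{p_{\hat\theta,\hat b}(\omega,\ell_i)\}$ with signal set $\{\ell_1,\dots,\ell_m\}$, asks for a deposit of $\hat b$, sends the realized signal $\ell_i$, and returns $\hat b-\tau(\ell_i)$; that is, it takes $t_{\hat\theta,\hat b}(\ell_i)=\tau(\ell_i)$. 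Upon receiving $\ell_i$ a buyer of true type $(\theta,b)$ updates his belief to $\Pr(\omega\mid\ell_i)\propto\mu(\omega\mid\theta,b)\,p_{\hat\theta,\hat b}(\omega,\ell_i)$, which is precisely his posterior at leaf $\ell_i$ in $M$, and his expected transfer when reporting $(\hat\theta,\hat b)$ is $\sum_i\tau(\ell_i)\sum_\omega\mu(\omega\mid\theta,b)\,p_{\hat\theta,\hat b}(\omega,\ell_i)$ in both mechanisms. Hence, type by type and report by report, the POM-depR reproduces the outcome distribution of the strategy ``report $(\hat\theta,\hat b)$ in the pruned $M$ and follow the protocol to a leaf''. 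With correlated signals this expected transfer genuinely varies across the buyer's true types, but that is accommodated automatically because the payment is attached to the signal rather than to the report.

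It then remains to check feasibility and incentives. Since the truthful $(\theta,b)$-buyer reaches a leaf in the pruned $M$ without ever exceeding his true budget $b$, every reachable $\ell_i$ under report $(\theta,b)$ has $\tau(\ell_i)\le b$; and the seller's budget constraint in $M$ forces $-\tau(\ell_i)\le M$, i.e.\ $\tau(\ell_i)\ge-M$. Thus the deposit $\hat b$ is affordable exactly when $\hat b\le b$, the return $\hat b-\tau(\ell_i)\in[0,\hat b+M]$ is nonnegative and payable by the seller out of the deposit plus her budget $M$, and no net transfer ever violates either budget. For IC: over-reporting the budget is infeasible in the POM-depR (the buyer cannot make the deposit), while a true-$(\theta,b)$ buyer reporting $(\hat\theta,\hat b)$ with $\hat b\le b$ gets exactly his payoff from ``report $(\hat\theta,\hat b)$ in the pruned $M$ and run to a leaf'', a strategy available to him there because $\hat b\le b$ suffices to afford every such path, and by IC of $M$ this is at most his truthful payoff, which the POM-depR also reproduces. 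For IR, the truthful payoff in the POM-depR equals that in $M$, which is at least the buyer's no-information outside option; moreover the deposit-then-signal-then-return ordering means the buyer can profitably quit only at the root --- once the signal is sent he has already paid, and the return is nonnegative --- so there is no ``learn-the-price-and-leave'' deviation. Finally, every truthful buyer's payment is unchanged, so the POM-depR earns the same revenue as $M$ and is therefore optimal.
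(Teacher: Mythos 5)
Your proposal is correct and follows essentially the same route as the paper's proof: start from an optimal IC revelation mechanism, prune it so a truthful buyer always reaches a leaf, then collapse the tree into a one-shot signaling scheme whose signals are the leaves, with the path transfer $\tau(\ell_i)$ attached to each signal and implemented as a deposit of $\hat b$ followed by a return of $\hat b-\tau(\ell_i)$. The extra care you take with the pruning step under heterogeneous posteriors and with the budget-feasibility and IC checks is a more detailed rendering of the same argument, not a different one.
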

\begin{proof}
	According to Lemma~\ref{lem:revelation}, there exists an IC optimal revelation mechanism $M$. 
	We can reduce $M$ so that the buyer will always follow the protocol until reaching a leaf.
	Now let's just assume the buyer will always follow the protocol until reaching a leaf, after truthfully reporting his type and budget. We construct a pricing outcomes mechanism with deposit and return that extracts the same revenue as $M$. Let $l_1, \dots, l_m$ be all the leaves of $M$ that can be reached with positive probability. Let $p_{\theta,b}(\omega, l_i)$ be the probability that the protocol ends at leaf $l_i$ when the seller's private signal is realized to $\omega$ and the buyer reports $\theta$ and $b$ at the beginning. Define $\tau(l_i)$ to be the total transfer on the path from the root to $l_i$. 
	Then we construct an optimal pricing outcomes mechanism as follows: 
	\begin{itemize}
		\item the signaling scheme $S(\theta, b)$ has leaves $l_1, \dots, l_m$ and the transition function is just $p_{\theta, b}(\omega, l_i)$, 
		\item the refund below leaf $l_i$ is set to $b - \tau(l_i)$ so that the net transfer equals $\tau(l_i)$. 
	\end{itemize}
	Since the buyer will never defect in $M$, the refund amount $b - \tau(l_i)$ is always nonnegative, so the buyer will not have chance to defect in the last step.
	
	This pricing outcomes mechanism is still truthful because a buyer who gains from misreporting in this mechanism will prefer to misreport in the same way in $M$. Then since the net transfer remains the same on each path, and the probability of ending at each leaf remains the same, the revenue does not change.  
\end{proof}

Observe that any $\texttt{CM-probR}$ mechanism  is also a POM-depR mechanism since we can view each $[a,i] \in A \times \{ +,- \}$ as a signal. However, $\texttt{CM-probR}$ is significantly simpler --- it only has two payments, i.e., $b$ and $-M$, and it uses at most $2n$ signals. Our goal is to show that this much smaller class of mechanisms can still achieve optimality.  For notational convenience, we will prove our result for the case of public budget. It is straightforward to generalize the argument to private budget, simply by adding additional incentive compatibility constraints to  elicit truthful budget report (or equivalently, by viewing the buyer type $\theta$ as including the budget information).

Our starting point is  an LP formulation  proposed by \cite{Babaioff:2012:OMS:2229012.2229024} with exponentially many variables but polynomially many constraints for computing the optimal POM-depR mechanism.  We now adapt that LP to incorporate buyer budget constraints, which leads to an LP with exponentially many variables \emph{and} constraints.   

To describe the LP, we will need a different but equivalent definition of a signaling scheme. Recall that we use $\{ p_{\theta}(\omega, \sigma) \}_{\omega \in \Omega,\sigma\in \Sigma}$ to denote a signaling scheme for buyer type $\theta$ (omitting the subscript of budget) where $\sum_{\sigma \in \Sigma} p_{\theta}(\omega, \sigma) = 1$ for each $\omega$,  which is drawn from the prior distribution  $\{ \mu(\omega) \}_{\omega \in \Omega}$. Equivalently, this signaling scheme can be viewed as a \emph{convex decomposition} of the prior $\mu \in \Delta \Omega$ into a set of posterior distributions. In particular, the posterior distribution $\{ \frac{ p_{\theta}(\omega, \sigma) }{  \sum_{\omega' \in \Omega} p_{\theta}(\omega', \sigma) } \}_{\omega \in \Omega}$ has convex coefficient $\sum_{\omega' \in \Omega} p_{\theta}(\omega', \sigma)$ (i.e., the probability).


As a result, a signaling scheme $p_{\theta}$ can be equivalently described by a variable $x_{\theta}(q) \in [0,1]$ for each possible posterior distribution $q \in \Delta \Omega$, satisfying $\sum_{q} x_{\theta}(q) \cdot q = \mu$ so that $\{x_{\theta}(q) \}_{q \in \Delta \Omega}$ indeed represents a convex decomposition of $\mu$. Crucially, given any posterior distribution $q$, each buyer will interpret it differently due to different information the buyer type $\theta$ possess regarding $\omega$. 
Concretely, as observed by \cite{Babaioff:2012:OMS:2229012.2229024}, buyer type $\theta$ will interpret any posterior $q$ as $D_{\theta} q$ (up to a normalization factor) where  $D_{\theta}$ is a $\Omega \times \Omega$ diagonal matrix with $\mu(\theta|\omega)$ for all $\omega $ in the diagonal. A minor issue is that we now have infinitely many variables since each $q \in \Delta \Omega$ corresponds to a variable $x_{\theta}(q)$. This turns out to be fine for our approach. However, for convenience, we invoke a simplification by \cite{Babaioff:2012:OMS:2229012.2229024} who show that one can without loss of generality focus on a set $Q^*$ of finite but \emph{exponentially} many possible posterior beliefs $q$'s since only these $q$'s can show up in the optimal solution.  

Now we are ready to state the linear program for computing the optimal POM-depR mechanism, with variable $x_{\theta}(q)$ and  $\tilde{t}_{\theta}(q) = x_{\theta}(q) \cdot t_{\theta}(q)$ where  $t_{\theta}(q)$ is the corresponding buyer total payment for the realized signal inducing posterior $q$ (negative $t_{\theta}(q)$ means that the seller will pay the buyer).  

\begin{lp}\label{lp:posterior-opt-budget}
	\maxi{  \sum_{\theta, q} (\bm{1}^t D_{\theta} q) \cdot \tilde{t}_{\theta}(q) }
	\st 
	\qcon{ \sum_q [ v_{\theta }(D_{\theta} q) x_{\theta}(q) - (\bm{1}^t D_{\theta} q) \tilde{t}_{\theta}(q)]  \ge v_{\theta}(D_{\theta } \mu)}{ \theta \in \Theta}{} 
	\con{ \sum_q [ v_{\theta }(D_{\theta } q) x_{\theta}(q) - (\bm{1}^t D_{\theta } q) \tilde{t}_{\theta }(q) ]  }{}
	\qcon{ \qquad \qquad  \ge  \sum_q [ v_{\theta }(D_{\theta } q) x_{\theta'}(q) - (\bm{1}^t D_{\theta } q) \tilde{t}_{\theta' }(q)] }{ \theta'\neq \theta}{}
	\qcon{  \sum_q x_{\theta}(q)  \cdot q = \mu  }{ \theta \in \Theta}{}
	\qcon{  b \cdot x_{\theta}(q) \geq \tilde{t}_{\theta}(q) \ge -M \cdot x_{\theta}(q) }{ \theta \in \Theta, q \in Q^*}{}
	\qcon{  x_{\theta}(q) \ge 0}{\theta \in \Theta, q \in Q^*}{}
\end{lp}
where $v_{\theta}(q) = \max_{a' \in A} \Ex_{\omega \sim q} u(\omega, \theta, a')$ denotes the type-$\theta$ buyer's optimal expected utility when he believes that $\omega$ follows distribution $q$.  Note that $q$ here does not have to be normalized to have $l_1$ norm $1$ since $v_{\theta}(q)$ is linear in any re-scaling factor of $q$. 

We mention a subtle issue here regarding the above LP formulation. Our LP \eqref{lp:posterior-opt-budget} is similar to an LP formulation given in \cite{Babaioff:2012:OMS:2229012.2229024} for computing the optimal pricing outcomes mechanism \emph{without} budget constraints. More concretely, the LP of \cite{Babaioff:2012:OMS:2229012.2229024} does not have the constraint $b \cdot x_{\theta}(q) \geq \tilde{t}_{\theta}(q) \ge -M \cdot x_{\theta}(q) $.  However, that LP does \emph{not} exactly compute the optimal pricing outcomes mechanism since it may have an optimal solution such that $\tilde{t}_{\theta}(q) \not = 0$ but $x_{\theta}(q) = 0$ for some $\theta, q$, which does not correspond to any pricing outcomes mechanisms. This discrepancy is due to their variable exchange $\tilde{t}_{\theta}(q) =x_{\theta}(q) t_{\theta}(q) $ which  does not result in an equivalent formulation since the constraint $x_{\theta}(q) = 0 \Rightarrow \tilde{t}_{\theta}(q) = 0$ cannot not be enforced after the variable change (and cannot be casted as any linear constraint). To deal with this issue, \cite{Babaioff:2012:OMS:2229012.2229024} introduced a technique of slightly perturbing the optimal solution and look for an $\epsilon$-optimal solution which always satisfies $x_{\theta}(q) \not = 0$ whenever $\tilde{t}_{\theta}(q) \not = 0$. As a result, this technique can only give an approximately optimal pricing outcomes mechanism, and may result in infinitely large payment as the approximation error $\epsilon$ tends to $0$. 

Interestingly, it turns out that the budget constraints help us to get rid of the above discrepancy.  In particular, our LP \eqref{lp:posterior-opt-budget} exactly computes the optimal pricing outcomes mechanism. This is because the budget constraint  $b \cdot x_{\theta}(q) \geq \tilde{t}_{\theta}(q) \ge -M \cdot x_{\theta}(q) $ automatically implies that $\tilde{t}_{\theta}(q) = 0$ when $x_{\theta}(q) = 0$, assuming all budgets are finite. We summarize this observation in the following lemma.   
\begin{lemma}
	The optimal solution to LP \eqref{lp:posterior-opt-budget} corresponds exactly to an optimal pricing outcomes mechanism.  
\end{lemma}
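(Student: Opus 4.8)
The plan is to establish a correspondence between feasible solutions of LP~\eqref{lp:posterior-opt-budget} and POM-depR mechanisms (after the harmless restriction of posteriors to the finite set $Q^*$), under which the LP objective equals the seller's expected revenue; combined with Lemma~\ref{lem:signal-opt}, this immediately gives the statement.

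\textbf{From a mechanism to a feasible solution.} Given any IC POM-depR mechanism, for each type $\theta$ let $x_\theta(q)$ be the total probability that its signaling scheme induces posterior $q$ (summing over all signals $\sigma$ mapping to $q$), and set $\tilde t_\theta(q) = x_\theta(q)\, t_\theta(q)$, where $t_\theta(q)$ is the net payment attached to that signal. Bayes-plausibility of the signaling scheme is exactly $\sum_q x_\theta(q)\, q = \mu$; the budget bound $-M \le t_\theta(q) \le b$ becomes $-M\, x_\theta(q) \le \tilde t_\theta(q) \le b\, x_\theta(q)$; and since a type-$\theta$ buyer reading posterior $q$ interprets it as $D_\theta q$ and best-responds for value $v_\theta(D_\theta q)$, the IR and IC inequalities of the LP are precisely the IR and IC conditions of the mechanism written out term by term. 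The LP objective $\sum_{\theta,q}(\bm 1^t D_\theta q)\,\tilde t_\theta(q)$ equals $\sum_\theta \sum_q \Pr[\text{posterior }q \mid \theta]\, t_\theta(q)$, i.e. the expected revenue.

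\textbf{From a feasible solution to a mechanism.} Conversely, take any feasible $(x,\tilde t)$. For $\theta,q$ with $x_\theta(q)>0$, define $t_\theta(q) = \tilde t_\theta(q)/x_\theta(q)$, which lies in $[-M,b]$ by the budget constraint. For $\theta,q$ with $x_\theta(q)=0$, the same constraint $-M\, x_\theta(q)\le \tilde t_\theta(q)\le b\, x_\theta(q)$, using that $M$ and $b$ are finite, forces $\tilde t_\theta(q)=0$; such a $q$ then carries no probability and no payment and can simply be deleted from the support. What remains is a genuine signaling scheme for each $\theta$ (the convex decomposition $\{x_\theta(q)\}_{q}$) together with well-defined, budget-feasible payments $t_\theta(q)$ — a POM-depR mechanism — whose IR/IC conditions and revenue coincide with the LP constraints and objective by the identities of the previous paragraph. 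This is the step where finiteness of the budgets is essential, and the only place the argument could break: it is exactly what rules out the degenerate solutions with $\tilde t_\theta(q)\ne 0$ but $x_\theta(q)=0$ that forced~\cite{Babaioff:2012:OMS:2229012.2229024} to pass to $\epsilon$-optimal solutions with unbounded payments in the no-budget LP.

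\textbf{Conclusion.} The two constructions are mutually inverse on the support where $x_\theta(q)>0$ and preserve the objective, so the optimal value of LP~\eqref{lp:posterior-opt-budget} equals the maximum revenue over POM-depR mechanisms; by Lemma~\ref{lem:signal-opt} this equals the overall optimal revenue, and reading off $t_\theta(q)=\tilde t_\theta(q)/x_\theta(q)$ from an optimal LP solution yields an optimal POM-depR mechanism. Restricting to $q\in Q^*$ loses nothing, by the structural reduction of~\cite{Babaioff:2012:OMS:2229012.2229024} already invoked above.
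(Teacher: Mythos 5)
Your proof is correct and takes essentially the same route as the paper: the paper presents this lemma as a summary of the observation in the preceding paragraph, namely that the constraint $b\,x_\theta(q)\ge \tilde t_\theta(q)\ge -M\,x_\theta(q)$ with finite budgets forces $\tilde t_\theta(q)=0$ whenever $x_\theta(q)=0$, which is exactly the step you isolate as the crux. You merely spell out the surrounding bidirectional correspondence (mechanism $\leftrightarrow$ feasible solution, objective $=$ revenue) in more detail than the paper does.
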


The main challenge for solving LP \eqref{lp:posterior-opt-budget} is that it has both exponentially many variables and constraints (due to budget constraints) since the set $Q^*$ is exponentially large. It is not even clear that it has an optimal solution with polynomial size, let alone solving it to output the optimal solution in polynomial time.  

Our only bet is that the exponential-sized budget constraints appear relatively ``simple'', and  hopefully it will result in dual variables that are easy to handle. With this in mind, we consider the dual program of LP \eqref{lp:posterior-opt-budget}, formulated as follows where $ \{ \alpha_{\theta} \}_{\theta \in \Theta},  \{ \lambda_{\theta, \theta'} \}_{\theta \not = \theta'}, \{ y_{\theta} \in \RR^{\Omega} \}_{\theta \in \Theta}$ and $\{\gamma_{\theta, q}, \xi_{\theta, q} \}_{\theta, q}$ are the dual variables for the first, second, third and forth sets of constraints in LP \eqref{lp:posterior-opt-budget}, respectively. 

\begin{lp}\label{lp:posterior-opt-budget-dual}
	\mini{  \sum_{\theta} \mu \cdot y_{\theta} - v_{\theta}(D_{\theta} \mu) \cdot \alpha_{\theta} }
	\st 
	\qcon{ \gamma_{\theta, q} - \xi_{\theta,q} - (\bm{1}^t D_{\theta} q) (1  - \alpha_{\theta}   - \sum_{\theta' \neq \theta} \lambda_{\theta, \theta'} )  -  \sum_{\theta' \neq \theta} \lambda_{\theta', \theta} (\bm{1}^t D_{\theta'} q)  = 0}{\theta, q}{}
	\qcon{y_{\theta} \cdot q - b \gamma_{\theta,q} - M \xi_{\theta, q} - v_{\theta }(D_{\theta} q) ( \alpha_{\theta} + \sum_{\theta' \neq \theta} \lambda_{\theta, \theta'})   + \sum_{\theta' \neq \theta} \lambda_{\theta', \theta} v_{\theta'}(D_{\theta'} q)  \geq 0}{\theta, q}{}
	\con{\alpha, \lambda, \gamma, \xi \geq 0}{}
\end{lp}

Obviously, LP \eqref{lp:posterior-opt-budget-dual} is not tractable neither as it still has exponentially many variables and constraints. Our next main step is to prove that the variables $\gamma_{\theta,q}, \xi_{\theta,q}$ in LP \eqref{lp:posterior-opt-budget-dual} can be eliminated to induce a different linear program which has the same optimal objective as LP \eqref{lp:posterior-opt-budget-dual} but only has polynomially many variables. For notational convenience, we define the following functions: 
\begin{eqnarray*}
	g_\theta(\lambda, \alpha; q) &=& (\bm{1}^t D_{\theta} q) (1  - \alpha_{\theta}   - \sum_{\theta' \neq \theta} \lambda_{\theta, \theta'} )  + \sum_{\theta' \neq \theta} \lambda_{\theta', \theta} (\bm{1}^t D_{\theta'} q)\\
	h_\theta(\lambda, \alpha; q) &=& v_{\theta }(D_{\theta} q) ( \alpha_{\theta} + \sum_{\theta' \neq \theta} \lambda_{\theta, \theta'})   - \sum_{\theta' \neq \theta} \lambda_{\theta', \theta} v_{\theta'}(D_{\theta'} q).
\end{eqnarray*}

Note that $g_{\theta}, h_{\theta}$ are both \emph{linear} functions of $\alpha, \lambda$ for any given $q \in Q^*$. Using the above simplification, we can re-write LP \eqref{lp:posterior-opt-budget-dual} as follows.   

\begin{figure}[ht] 
	\begin{lp}\label{lp:posterior-opt-budget-dual-eqv}
		\mini{  \sum_{\theta} \mu \cdot y_{\theta} - v_{\theta}(D_{\theta} \mu) \cdot \alpha_{\theta} }
		\st 
		\qcon{ \gamma_{\theta, q} - \xi_{\theta,q} - g_\theta(\lambda, \alpha; q)  = 0}{\theta, q}{}
		\qcon{y_{\theta} \cdot q - b \gamma_{\theta,q} - M \xi_{\theta, q} - h_\theta(\lambda, \alpha; q)  \geq 0}{\theta, q}{}
		\con{\alpha, \lambda, \gamma, \xi \geq 0}{}
	\end{lp}
	\vspace{-3mm}
	\caption{Re-Writing of LP   \eqref{lp:posterior-opt-budget-dual}    }
\end{figure}

\begin{lemma}
	LP \eqref{lp:posterior-opt-budget-dual-eqv} has the same optimal objective value as the following LP with only variables $\alpha, \lambda, y$: 
	\begin{lp}\label{lp:posterior-opt-budget-dual-var}
		\mini{  \sum_{\theta} \mu \cdot y_{\theta} - v_{\theta}(D_{\theta} \mu) \cdot \alpha_{\theta} }
		\st 
		\qcon{y_{\theta} \cdot q - b \cdot g_{\theta}  - h_\theta(\lambda, \alpha, q) \geq 0}{\theta, q}{}
		\qcon{y_{\theta} \cdot q + M  \cdot g_{\theta}  - h_\theta(\lambda, \alpha, q) \geq 0}{\theta, q}{}
		\con{\alpha, \lambda \geq 0}{}
	\end{lp}
\end{lemma}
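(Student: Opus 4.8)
The plan is to eliminate the two nonnegative slack-type variables $\gamma_{\theta,q}, \xi_{\theta,q}$ from LP \eqref{lp:posterior-opt-budget-dual-eqv} for each fixed pair $(\theta, q)$, exploiting that they appear only in the two constraints indexed by that same pair plus the nonnegativity requirement. Fix $(\theta,q)$ and treat everything else ($\alpha, \lambda, y$) as given. The equality constraint forces $\gamma_{\theta,q} - \xi_{\theta,q} = g_\theta(\lambda,\alpha;q)$, so I can parametrize the feasible pairs: writing $\xi_{\theta,q} = s \ge 0$, we get $\gamma_{\theta,q} = g_\theta(\lambda,\alpha;q) + s$, and the constraint $\gamma_{\theta,q} \ge 0$ becomes $s \ge -g_\theta(\lambda,\alpha;q)$, so $s \ge \max\{0, -g_\theta(\lambda,\alpha;q)\}$. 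Substituting into the inequality constraint $y_\theta\cdot q - b\gamma_{\theta,q} - M\xi_{\theta,q} - h_\theta \ge 0$ gives $y_\theta\cdot q - b\,g_\theta(\lambda,\alpha;q) - (b+M)s - h_\theta \ge 0$. Since the coefficient $-(b+M)$ of $s$ is negative (budgets are finite and $b,M \ge 0$, and the case $b=M=0$ is degenerate/handled trivially), this inequality is tightest at the smallest admissible $s$, i.e. $s = \max\{0, -g_\theta(\lambda,\alpha;q)\}$.

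Now split into the two cases. If $g_\theta(\lambda,\alpha;q) \ge 0$, the minimal choice is $s=0$, and the surviving constraint is exactly $y_\theta\cdot q - b\,g_\theta - h_\theta \ge 0$, which is the first constraint of LP \eqref{lp:posterior-opt-budget-dual-var}; moreover in this case $-b\,g_\theta \le M\,g_\theta$ would need checking but in fact since $g_\theta\ge 0$ and $b,M\ge0$ we have $y_\theta\cdot q + M g_\theta - h_\theta \ge y_\theta\cdot q - b g_\theta - h_\theta \ge 0$ only if $(b+M)g_\theta\ge 0$, which holds — so the second constraint of \eqref{lp:posterior-opt-budget-dual-var} is implied. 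If instead $g_\theta(\lambda,\alpha;q) < 0$, the minimal choice is $s = -g_\theta$, giving $\gamma_{\theta,q}=0$, and the surviving constraint becomes $y_\theta\cdot q - b\,g_\theta - (b+M)(-g_\theta) - h_\theta = y_\theta\cdot q + M g_\theta - h_\theta \ge 0$, the second constraint of \eqref{lp:posterior-opt-budget-dual-var}; symmetrically the first constraint of \eqref{lp:posterior-opt-budget-dual-var} is then implied since $g_\theta<0$. Hence, for every $(\theta,q)$, the pair of conditions ``there exist $\gamma_{\theta,q},\xi_{\theta,q}\ge 0$ satisfying both constraints'' is equivalent to ``$y_\theta\cdot q - b\,g_\theta - h_\theta \ge 0$ \emph{and} $y_\theta\cdot q + M g_\theta - h_\theta \ge 0$.'' Since the objective $\sum_\theta \mu\cdot y_\theta - v_\theta(D_\theta\mu)\alpha_\theta$ does not involve $\gamma,\xi$ at all, this equivalence of feasible sets (after projecting out $\gamma,\xi$) immediately yields equality of the optimal values of \eqref{lp:posterior-opt-budget-dual-eqv} and \eqref{lp:posterior-opt-budget-dual-var}.

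The only genuinely delicate point is the direction of the substitution argument, i.e. making sure that minimizing $s$ is always the right move — this is where the sign of $b+M$ matters, and it is precisely here that finiteness of the budgets (so $b+M$ is a genuine positive constant, not $0$ or $\infty$) is used, echoing the remark earlier in the paper that budget constraints are what make the formulation well-behaved. One should also note that $g_\theta$ and $h_\theta$ are linear in $(\alpha,\lambda)$ for fixed $q\in Q^*$, so \eqref{lp:posterior-opt-budget-dual-var} is a bona fide linear program; the elimination does not introduce any nonlinearity. I would phrase the whole argument as a per-constraint Fourier–Motzkin elimination of $\gamma_{\theta,q}$ and $\xi_{\theta,q}$, which is the cleanest way to make the case analysis rigorous without carrying the other variables around.
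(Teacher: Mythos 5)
Your proof is correct and follows essentially the same route as the paper's: both eliminate $\gamma_{\theta,q},\xi_{\theta,q}$ by observing that they can be taken at their minimal complementary values $\gamma_{\theta,q}=\max\{g_\theta,0\}$, $\xi_{\theta,q}=\max\{-g_\theta,0\}$, which turns the remaining inequality into $y_\theta\cdot q+\min\{-b\,g_\theta,\,M\,g_\theta\}-h_\theta\ge 0$ and hence into the two linear constraints of LP \eqref{lp:posterior-opt-budget-dual-var} (the paper phrases this as a complementarity adjustment of an optimal solution plus concavity of the min, you phrase it as a per-block Fourier--Motzkin projection, but the computation is identical). The only quibble is wording: since the left-hand side is decreasing in $s$, the inequality is \emph{easiest} (not ``tightest'') to satisfy at the smallest admissible $s$, which is exactly why the projected constraint is the one evaluated at $s=\max\{0,-g_\theta\}$.
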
 
\begin{proof}
	The main idea of the proof is to eliminate the variables $\gamma_{\theta, q}, \xi_{\theta, q}$. 
	
	First, we claim that LP \eqref{lp:posterior-opt-budget-dual-eqv} always admits an optimal solution in which either $\gamma_{\theta,q} = 0$ or $\xi_{\theta,q} = 0$ for any $\theta, q$. In particular, let $(\alpha^*, \lambda^*, y^*, \gamma^*, \xi^*)$ be any optimal solution to LP \eqref{lp:posterior-opt-budget-dual-eqv}. We construct a new optimal solution which satisfies the desired property, as follows. For any index pair $(\theta, q)$: 
	\begin{itemize}
		\item  If $\gamma_{\theta,q}^* \geq \xi^*_{\theta,q}$, define $\gamma_{\theta, q}' = \gamma_{\theta,q}^* -  \xi^*_{\theta,q}$ and $\xi_{\theta,q}' = 0$;
		\item If $\gamma_{\theta,q}^* < \xi^*_{\theta,q}$, define $\gamma_{\theta, q}' = 0$ and $\xi_{\theta,q}' = \xi^*_{\theta,q}  -  \gamma_{\theta,q}^* $.
	\end{itemize}   
	
	We claim that $(\alpha^*, \lambda^*, y^*, \gamma', \xi')$ is a feasible and optimal solution to LP \eqref{lp:posterior-opt-budget-dual}. The new variable values achieve the same (optimal) objective value as $(\alpha^*, \lambda^*, y^*, \gamma^*, \xi^*)$ since the objective function only depends on $\alpha^*,y^*$. We only need to argue that $(\alpha^*, \lambda^*, y^*, \gamma', \xi')$  is feasible. The non-negativity of $\gamma'_{\theta,q}, \xi'_{\theta,q}$  is obvious from construction. The first set of constraint is satisfied  because $ \gamma_{\theta,q}^* -  \xi^*_{\theta,q} = \gamma'_{\theta,q} -  \xi'_{\theta,q} $ in our construction and $g_{\theta}$ does not depend on $\gamma_{\theta,q} , \xi_{\theta,q}$. The second set of constraints are also feasible because  $y^*_{\theta} \cdot q - b \gamma^*_{\theta,q} - M \xi^*_{\theta, q} - h_\theta(\lambda^*, \alpha^*; q) \geq 0$, and the  constructed variable value $\gamma'_{\theta,q} \xi'_{\theta, q} $ satisfy $\gamma'_{\theta,q} \leq  \gamma^*_{\theta,q}, \xi'_{\theta, q}  \leq   \xi^*_{\theta, q} $ and thus will only  increase the left-hand side of the above inequality. Therefore the ``$\geq$'' inequality still holds.  
	
	Thanks to the above argument, we can impose the complementarity constraints $ \gamma_{\theta, q} \cdot  \xi_{\theta,q}  = 0$ into LP \eqref{lp:posterior-opt-budget-dual-eqv}, without changing its optimal objective value.  As a result, the equality constraint  $\gamma_{\theta, q} - \xi_{\theta,q} - g_\theta(\lambda, \alpha; q)  = 0$ in LP \eqref{lp:posterior-opt-budget-dual-eqv}, together with $ \gamma_{\theta, q} \cdot  \xi_{\theta,q}  = 0$,  can be used to uniquely determine the value of $\gamma_{\theta, q}$ and $ \xi_{\theta,q}$. In particular, if $g_\theta(\lambda, \alpha; q) \geq 0$, we must have $\gamma_{\theta, q} = g_\theta(\lambda, \alpha; q) $ and $\xi_{\theta,q} = 0$. Otherwise, we must have $\gamma_{\theta, q} = 0 $ and $\xi_{\theta,q} = -g_\theta(\lambda, \alpha; q)$. To sum up, $\gamma_{\theta, q}$ and $ \xi_{\theta,q}$ can be uniquely determined as follows: 
	\begin{eqnarray*}
		\gamma_{\theta, q}  &=& \mathds{1}\left\{ g_\theta(\lambda, \alpha, q) \ge 0 \right\} g_\theta(\lambda, \alpha, q), \\
		\xi_{\theta, q} &=& - \mathds{1}\left\{ g_\theta(\lambda, \alpha,  q) \le 0 \right\} g_\theta(\lambda, \alpha, q),
	\end{eqnarray*}
	where $\mathds{1}(X)$ is the indicator function which equals $1$ when condition $X$ holds. 
	
	We can now plug in the above value of  $\gamma_{\theta, q}, \xi_{\theta, q} $ into the left-hand side of the second constraint of LP \eqref{lp:posterior-opt-budget-dual-eqv}, and obtain the following inequality constraint for any $\theta, q$: 
	\begin{eqnarray}\label{eq:constraint-rewrite}
	y_{\theta} \cdot q - b \cdot \mathds{1}\left\{ g_\theta(\lambda, \alpha, q) \ge 0 \right\} g_\theta(\lambda, \alpha, q)  + M \cdot  \mathds{1}\left\{ g_\theta(\lambda, \alpha,  q) \le 0 \right\} g_\theta(\lambda, \alpha, q)  - h_\theta(\lambda, \alpha; q) \geq 0
	\end{eqnarray}
	
	Observe that 
	\begin{eqnarray*}
		& &  - b \cdot \mathds{1}\left\{ g_\theta(\lambda, \alpha, q) \ge 0 \right\} g_\theta(\lambda, \alpha, q)  + M \cdot  \mathds{1}\left\{ g_\theta(\lambda, \alpha,  q) \le 0 \right\} g_\theta(\lambda, \alpha, q)  \\
		& = & \min \{ -b \cdot g_\theta(\lambda, \alpha, q), M \cdot  g_\theta(\lambda, \alpha, q)  \}  
	\end{eqnarray*} 
	which is the minimum of two linear functions and thus is concave. As a result, Inequality constraint \eqref{eq:constraint-rewrite} can be re-written as the following
	\begin{eqnarray*}
		& & y_{\theta} \cdot q - b \cdot \mathds{1}\left\{ g_\theta(\lambda, \alpha, q) \ge 0 \right\} g_\theta(\lambda, \alpha, q)  + M \cdot  \mathds{1}\left\{ g_\theta(\lambda, \alpha,  q) \le 0 \right\} g_\theta(\lambda, \alpha, q)  - h_\theta(\lambda, \alpha; q) \geq 0 \\
		&\Leftrightarrow &  y_{\theta} \cdot q + \min \{ -b \cdot g_\theta(\lambda, \alpha, q), M \cdot  g_\theta(\lambda, \alpha, q)  \}     - h_\theta(\lambda, \alpha; q) \geq 0  \\
		&\Leftrightarrow & \min \{ -b \cdot g_\theta(\lambda, \alpha, q), M \cdot  g_\theta(\lambda, \alpha, q)  \}  \geq    h_\theta(\lambda, \alpha; q) -y_{\theta} \cdot q    \\ 
		&\Leftrightarrow & -b \cdot g_\theta(\lambda, \alpha, q)   \geq    h_\theta(\lambda, \alpha; q) -y_{\theta} \cdot q  \quad \&   \quad  M \cdot  g_\theta(\lambda, \alpha, q)   \geq    h_\theta(\lambda, \alpha; q) -y_{\theta} \cdot q  
	\end{eqnarray*} 
	Substituting Inequality constraint \eqref{eq:constraint-rewrite} by  the above to inequalities, we obtain LP \eqref{lp:posterior-opt-budget-dual-var} which has the same optimal objective as LP \eqref{lp:posterior-opt-budget-dual-eqv}.  
\end{proof}

We now plugging the concrete form of $g$ and $h$ into LP \eqref{lp:posterior-opt-budget-dual-var} and obtain the following concrete linear program, which has the same objective as LP \eqref{lp:posterior-opt-budget-dual} but with only polynomially many constraints.
\begin{figure}[ht]
	\begin{lp}\label{lp:posterior-opt-budget-dual-variant}
		\mini{  \sum_{\theta} \mu \cdot y_{\theta} - v_{\theta}(D_{\theta} \mu) \cdot \alpha_{\theta} }
		\st 
		\con{ - \alpha_{\theta} \bigg[ v_{\theta} (D_{\theta} q) - b(\bm{1}^t D_{\theta } q)  \bigg] - \sum_{\theta'} \lambda_{\theta, \theta'} \bigg[ v_{\theta} (D_{\theta} q) - b(\bm{1}^t D_{\theta } q)  \bigg]  + }{} 
		\qcon{ \qquad  \sum_{\theta'} \lambda_{\theta', \theta}\bigg[ v_{\theta'} (D_{\theta'} q) - b(\bm{1}^t D_{\theta' } q)  \bigg]  + y_{\theta} q \geq b(\bm{1}^t D_{\theta } q)   }{\theta, q}{}
		\con{ - \alpha_{\theta} \bigg[ v_{\theta} (D_{\theta} q) + M(\bm{1}^t D_{\theta } q)  \bigg] - \sum_{\theta'} \lambda_{\theta, \theta'} \bigg[ v_{\theta} (D_{\theta} q)  + M(\bm{1}^t D_{\theta } q)  \bigg]  + }{}
		\qcon{ \qquad  \sum_{\theta'} \lambda_{\theta', \theta}\bigg[ v_{\theta'} (D_{\theta'} q)  + M (\bm{1}^t D_{\theta' } q)  \bigg]  + y_{\theta} q \geq -M (\bm{1}^t D_{\theta } q)   }{\theta, q}{}
		\con{\alpha, \lambda \geq 0}{}
	\end{lp}
	\vspace{-3mm}
	\caption{Re-Writing of LP \eqref{lp:posterior-opt-budget-dual-var}}
\end{figure}

At this point, a natural approach is to design an efficient separation oracle for LP \eqref{lp:posterior-opt-budget-dual-variant} and then employ the celebrated Ellipsoid method to solve it with a hope to also obtain an optimal solution to LP \eqref{lp:posterior-opt-budget}. Due to the reasons we mentioned in the main body, here we adopt a different approach and instead turn to the dual program of LP \eqref{lp:posterior-opt-budget-dual-variant}. However, the benefit of considering the dual program is not clear at this point yet since it will have exponentially many variables. Interestingly, we show that by exploring some ``local convexity'' structure of the formulation the dual program actually can be solved much more efficiently.   

We start by deriving the dual of LP \eqref{lp:posterior-opt-budget-dual-variant} as follows, with variable $x_{\theta}^+(q), x_{\theta}^{-}(q)$ for the two different sets of constraints. 

\begin{lp}\label{lp:primal-opt}
	\maxi{  \sum_{\theta, q} \bigg[  b(\bm{1}^t D_{\theta } q) \cdot x^+_{\theta}(q) - M(\bm{1}^t D_{\theta } q) \cdot x^-_{\theta}(q) \bigg]  }
	\st 
	\qcon{ \sum_q \bigg[ [v_{\theta }(D_{\theta} q) -b(\bm{1}^t D_{\theta } q)]  x_{\theta}^+(q) + [v_{\theta }(D_{\theta} q) +M(\bm{1}^t D_{\theta } q)]  x_{\theta}^-(q)  \bigg]  \ge v_{\theta}(D_{\theta } \mu)}{ \theta \in \Theta}{}
	\con{ \sum_q \bigg[ [v_{\theta }(D_{\theta} q) -b(\bm{1}^t D_{\theta } q)]   x^+_{\theta}(q) + [v_{\theta }(D_{\theta} q) + M (\bm{1}^t D_{\theta } q)]   x^-_{\theta}(q) \bigg]  }{}
	\qcon{ \qquad \qquad \geq  \sum_q \bigg[ [v_{\theta }(D_{\theta} q) -b(\bm{1}^t D_{\theta } q)]   x^+_{\theta'}(q) + [v_{\theta }(D_{\theta} q) + M (\bm{1}^t D_{\theta } q)]   x^-_{\theta'}(q)  \bigg]  }{ \theta'\neq \theta}{}
	\qcon{  \sum_q [ x^+_{\theta}(q) + x^-_{\theta}(q)] \cdot q = \mu  }{ \theta \in \Theta}{}
	\qcon{  x_{\theta}(q) \ge 0}{\theta, q }{}
\end{lp}

Interestingly, it turns out that solutions to LP \eqref{lp:primal-opt} have a natural interpretation as a \emph{special type} of pricing outcomes mechanism, as formalized in the following lemma.
\begin{lemma}
	Any feasible solution to LP \eqref{lp:primal-opt} is an incentive compatible pricing outcomes mechanism which uses only two different payment methods: either buyer pays $b$ to the seller or the seller pays $M$ to the buyer. 
\end{lemma}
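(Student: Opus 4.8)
The plan is to attach to every feasible point $(x^+,x^-)$ of LP~\eqref{lp:primal-opt} an explicit POM-depR mechanism that uses only the two payment amounts $b$ (buyer pays the seller) and $-M$ (seller pays the buyer), and then to read off incentive compatibility and individual rationality from the constraints of the LP. Concretely, for each reported type $\theta$ I would define a signaling scheme whose signal set is $\{(q,+),(q,-):q\in Q^*\}$, where both $(q,+)$ and $(q,-)$ carry the same posterior $q$: one first draws a posterior $q$ according to the convex decomposition $\sum_q[x^+_\theta(q)+x^-_\theta(q)]\,q=\mu$ (the third constraint of~\eqref{lp:primal-opt}), then, independently of $\omega$, splits it into a ``$+$'' copy with probability $x^+_\theta(q)/(x^+_\theta(q)+x^-_\theta(q))$ and a ``$-$'' copy otherwise. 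The payment is $b$ on every ``$+$'' signal and $-M$ on every ``$-$'' signal. The third constraint of~\eqref{lp:primal-opt} guarantees this is a genuine signaling scheme (applying $\bm{1}^t$ to both sides shows the weights sum to one), non-negativity of $x^+,x^-$ gives valid probabilities, and both payments are budget-feasible since the return $\hat b-t$ equals $0$ or $b+M\ge 0$ and the net transfer to the buyer is at most $M$. So this is a well-defined POM-depR using only the two claimed payment methods.

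It then remains to check IC and IR. I would do this via the substitution $x_\theta(q):=x^+_\theta(q)+x^-_\theta(q)$ and $\tilde{t}_\theta(q):=b\,x^+_\theta(q)-M\,x^-_\theta(q)$, verifying that $(x,\tilde{t})$ is feasible for LP~\eqref{lp:posterior-opt-budget}, which was set up so that its feasible points correspond exactly to IC, IR POM-depR mechanisms (the budget rows force $\tilde{t}_\theta(q)=0$ whenever $x_\theta(q)=0$, which is precisely what makes a feasible point a genuine mechanism). The Bayes-plausibility and non-negativity rows of~\eqref{lp:posterior-opt-budget} match those of~\eqref{lp:primal-opt} termwise; the budget rows $b\,x_\theta(q)\ge\tilde{t}_\theta(q)\ge -M\,x_\theta(q)$ reduce to $(b+M)x^+_\theta(q)\ge 0$ and $(b+M)x^-_\theta(q)\ge 0$; and expanding $v_\theta(D_\theta q)x_\theta(q)-(\bm{1}^t D_\theta q)\tilde{t}_\theta(q)$ gives exactly $[v_\theta(D_\theta q)-b(\bm{1}^t D_\theta q)]x^+_\theta(q)+[v_\theta(D_\theta q)+M(\bm{1}^t D_\theta q)]x^-_\theta(q)$, the bracketed quantity appearing in the IR and IC rows of~\eqref{lp:primal-opt}, so those inequalities transfer directly. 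Hence $(x,\tilde{t})$ is feasible for~\eqref{lp:posterior-opt-budget}, and the associated POM-depR is IC and IR (its IR/IC rows being the buyer's participation and truth-telling constraints up to the harmless positive scaling by $\mu(\theta)$, so WLOG $\mu(\theta)>0$ for all $\theta$).

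The last point to nail down is that the ``split'' mechanism coincides, from the buyer's standpoint, with the POM-depR attached to this feasible point of~\eqref{lp:posterior-opt-budget}, whose payment on posterior $q$ is the average $t_\theta(q)=(b\,x^+_\theta(q)-M\,x^-_\theta(q))/(x^+_\theta(q)+x^-_\theta(q))$, a convex combination of $b$ and $-M$. Because the $+/-$ split is independent of $\omega$, both copies of the $q$-signal induce the same posterior, so the buyer's best response and conditional value are unchanged and only the conditional distribution of the payment is disaggregated, with the same conditional mean $t_\theta(q)$; hence every buyer's utility, under truthful reporting and under any misreport, is identical in the two mechanisms. I expect this last reduction --- replacing an expected transfer by a two-point randomization with the same mean while holding all posteriors fixed --- together with the bookkeeping of the $\mu(\theta)$ normalization relating the LP expressions to actual buyer utilities, to be the only mildly delicate part; everything else is a direct termwise substitution.
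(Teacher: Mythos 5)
Your proposal is correct, and its core is the same as the paper's: you interpret a feasible point of LP \eqref{lp:primal-opt} as a signaling scheme whose signals are the pairs $(q,+),(q,-)$, with $x^+_\theta(q)/(x^+_\theta(q)+x^-_\theta(q))$ the conditional probability of the buyer paying $b$ and the complement the probability of the seller paying $M$. Where you diverge is in the verification of IC and IR: the paper simply observes that the first and second constraint blocks of \eqref{lp:primal-opt} \emph{are} the participation and truth-telling constraints of this two-payment mechanism (the bracketed terms being exactly the buyer's utility net of each payment, weighted by the type-dependent signal probabilities $\bm{1}^t D_\theta q$), whereas you route through LP \eqref{lp:posterior-opt-budget} via the substitution $x_\theta(q)=x^+_\theta(q)+x^-_\theta(q)$, $\tilde t_\theta(q)=b\,x^+_\theta(q)-M\,x^-_\theta(q)$ and then disaggregate the averaged payment back into the two-point randomization. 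That detour is sound but strictly longer; its one delicate step --- that a mean-preserving two-point randomization of the transfer leaves every type's utility unchanged --- is exactly the issue the paper warns about when discussing why $\texttt{CM-probR}$ cannot in general be collapsed to expected payments, and you correctly identify why it is harmless here: since both copies carry the same posterior $q$, the $+/-$ split is independent of $\omega$, so the conditional probability of each payment given the merged signal is $x^+_\theta(q)/(x^+_\theta(q)+x^-_\theta(q))$ for \emph{every} buyer type, truthful or not. In short, your argument buys a reduction to the already-established semantics of \eqref{lp:posterior-opt-budget} at the cost of an extra equivalence step, while the paper's reading of \eqref{lp:primal-opt} is direct; both are valid.
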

\begin{proof}
	The proof is almost evident from the formulation of LP \eqref{lp:primal-opt}. Any feasible solution to LP \eqref{lp:primal-opt} corresponds to, for any buyer type $\theta$, a convex decomposition of the prior $\mu$ where the posterior is $q$ with probability $x_{\theta}^{+}(q) + x_{\theta}^-(q)$. Moreover, $\frac{x_{\theta}^+(q)}{ x_{\theta}^{+} (q)+ x_{\theta}^-(q) }$ of the time the buyer pays $b$ to the seller in which case the buyer's remaining utility is  $[v_{\theta }(D_{\theta} q) -b(\bm{1}^t D_{\theta } q)] $ (up to a normalization factor). Otherwise, the seller pays $M$ to the buyer in which case the buyer's remaining utility is  $[v_{\theta }(D_{\theta} q)+ M (\bm{1}^t D_{\theta } q)] $. As a result, it is easy to see that the first and second sets of constraints in LP \eqref{lp:primal-opt} are precisely the individual rationality and incentive compatibility constraints. Moreover, it is also easy to see that the objective function is precisely the seller's revenue under this mechanism. This concludes the proof. 
\end{proof} 


Next, we show that LP \eqref{lp:primal-opt} always admits an optimal solution which is a $\texttt{CM-probR}$. More precisely, $x_{\theta}^+, x_{\theta}^-$ support on at most $|A|$ different posteriors, each resulting in a different best response action for buyer type $\theta$. 

\begin{lemma}\label{lem:succint-mechanism}
	There always exists an optimal solution to  LP \eqref{lp:primal-opt} where, for any $\theta \in \Theta$,  both $x_{\theta}^+$ and $x_{\theta}^-$ are supported on at most $|A|$ different posterior $q$'s, each inducing a different best response action for buyer type $\theta$. In other words, LP \eqref{lp:primal-opt} always admits an optimal solution which is a $\texttt{CM-probR}$ as described in Definition \ref{def:BDR} by viewing each posterior distribution as an action recommendation.  
\end{lemma}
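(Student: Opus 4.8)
The plan is to take an arbitrary optimal solution $\{x_\theta^+(q),x_\theta^-(q)\}_{\theta,q}$ of LP~\eqref{lp:primal-opt} and, for each type $\theta$ one at a time, \emph{merge} all posteriors in the support that induce the same best response for $\theta$. Fix $\theta$. For each action $a\in A$ let $Q_a^\theta\subseteq Q^*$ be the block of posteriors $q$ for which $a$ is, under a fixed tie-breaking rule, the best response of a type-$\theta$ buyer, i.e.\ $a\in\argmax_{a'}\sum_\omega u(\omega,\theta,a')[D_\theta q]_\omega$; these blocks partition $Q^*$. The observation that drives everything is that on $Q_a^\theta$ the map $q\mapsto v_\theta(D_\theta q)$ agrees with the \emph{linear} functional $\ell_{\theta,a}(q):=\sum_\omega u(\omega,\theta,a)[D_\theta q]_\omega$, and $q\mapsto\mathbf{1}^\top D_\theta q$ is linear everywhere. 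Set $\bar x_{\theta,a}^+=\sum_{q\in Q_a^\theta}x_\theta^+(q)$ and, when this is positive, $\bar q_a^+=(\bar x_{\theta,a}^+)^{-1}\sum_{q\in Q_a^\theta}x_\theta^+(q)\,q$, and analogously $\bar x_{\theta,a}^-,\bar q_a^-$. Since $\bar q_a^+$ is a convex combination of posteriors in $Q_a^\theta$, convexity of $v_\theta(D_\theta\cdot)$ gives $v_\theta(D_\theta\bar q_a^+)\le\ell_{\theta,a}(\bar q_a^+)$, while the reverse inequality is immediate because $v_\theta$ is a maximum that includes $a$; hence equality holds, $a$ is still a best response at $\bar q_a^+$, and replacing the $x_\theta^+$-mass on $Q_a^\theta$ by a single atom at $\bar q_a^+$ (and similarly for $-$) keeps the decomposition obedient.

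The next step is to verify that this replacement preserves all of LP~\eqref{lp:primal-opt} and its objective. The decomposition constraint survives because $\sum_a(\bar x_{\theta,a}^+\bar q_a^++\bar x_{\theta,a}^-\bar q_a^-)=\sum_q(x_\theta^+(q)+x_\theta^-(q))\,q=\mu$. Type $\theta$'s IR left-hand side and the left-hand side of every IC constraint $(\theta,\theta')$ decompose over the blocks $Q_a^\theta$ into sums of the linear expressions $[\ell_{\theta,a}(q)-b(\mathbf{1}^\top D_\theta q)]x_\theta^+(q)$ and $[\ell_{\theta,a}(q)+M(\mathbf{1}^\top D_\theta q)]x_\theta^-(q)$, so linearity together with $v_\theta(D_\theta\bar q_a^\pm)=\ell_{\theta,a}(\bar q_a^\pm)$ makes these quantities \emph{exactly} invariant; the objective is invariant for the same reason since $\mathbf{1}^\top D_\theta q$ is linear. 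The only nonlinear terms affected are the cross terms $v_\theta(D_\theta q)\,x_{\theta'}^+(q)$ and $v_\theta(D_\theta q)\,x_{\theta'}^-(q)$ on the \emph{right-hand side} of IC constraint $(\theta,\theta')$, where a posterior tuned to $\theta'$'s best responses is passed through $\theta$'s value function; here Jensen's inequality for the convex $v_\theta(D_\theta\cdot)$ shows that merging $\theta'$'s blocks can only \emph{lower} $\sum_q v_\theta(D_\theta q)x_{\theta'}^\pm(q)$, while the $b$- and $M$-weighted parts are linear hence preserved, so the right-hand side of IC $(\theta,\theta')$ weakly decreases and the constraint still holds. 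Since merging type $\theta$'s scheme affects only (i) $\theta$'s own IR/IC left-hand sides, which stay fixed, and (ii) the IC right-hand sides of the other types against misreporting $\theta$, which weakly decrease, the merges can be carried out sequentially over all types without ever breaking feasibility or changing the objective.

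Once every type has been processed, the supports of $x_\theta^+$ and of $x_\theta^-$ each contain at most one posterior per action — the $\bar q_a^\pm$ with positive mass — and that posterior induces $a$ as a best response for $\theta$; translating this convex decomposition back into a signaling scheme (equivalently, an action-plus-indicator recommendation policy $p_{\theta,b}:\Omega\to\Delta(A\times\{+,-\})$, with the $+$ atoms carrying the ``buyer pays $b$'' transfer and the $-$ atoms the ``seller pays $M$'' transfer) shows the solution is exactly a $\texttt{CM-probR}$ in the sense of Definition~\ref{def:BDR}. One technical point remains: the merged posteriors $\bar q_a^\pm$ need not lie in the prescribed finite set $Q^*$, which I would handle by observing (in the spirit of \cite{Babaioff:2012:OMS:2229012.2229024}) that $Q^*$ can be taken closed under these best-response-block averages, or simply enlarged to include them without affecting the optimal value. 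I expect the bookkeeping of the IC constraints — specifically, checking that the convexity/Jensen bound on the cross terms points in the direction that \emph{relaxes} rather than \emph{tightens} each constraint — to be the only genuinely delicate step; the rest is linear algebra on convex decompositions of $\mu$.
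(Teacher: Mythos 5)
Your proposal is correct and follows essentially the same route as the paper's proof: merge posteriors sharing a best response for type $\theta$, observe that $v_\theta(D_\theta\cdot)$ is linear on each best-response block so the objective, the decomposition constraint, and type $\theta$'s own IR/IC left-hand sides are exactly preserved, and invoke convexity of $v_{\theta'}(D_{\theta'}\cdot)$ to show the IC right-hand sides only weakly decrease. The only differences are cosmetic (you merge each block in one shot where the paper merges pairwise and iterates) plus your explicit flag about the merged posteriors possibly leaving $Q^*$ — a point the paper's proof silently glosses over and which is handled exactly as you suggest.
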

\begin{proof}
	We prove the lemma by converting any optimal solution to LP \eqref{lp:primal-opt} to a feasible and optimal solution that satisfies the desired property. 
	
	Let $\mathbf{x}=\{ x_{\theta}^+(q), x_{\theta}^{-}(q) \}_{\theta \in \Theta, q \in Q^*}$ be an arbitrary optimal solution to LP \eqref{lp:primal-opt}. Let $q_1,q_2 \in Q^*$ be two posteriors such that  $x_{\theta}^+(q_1),x_{\theta}^+(q_2) >0$ and buyer utility $v_{\theta}(D_{\theta}q_1),v_{\theta}(D_{\theta}q_2)$ are both achieved by  the same best action $\hat{a}$. We now adjust $\mathbf{x}$ as follows. Denote $$q = \frac{1}{x_{\theta}^+(q_1) + x_{\theta}^+(q_2)} \bigg[ x_{\theta}^+(q_1) \cdot q_1 + x_{\theta}^+(q_2) \cdot q_2 \bigg] \in \Delta_{\Omega}$$ and let $\hat{x}^+_{\theta}(q) = x_{\theta}^+(q_1) + x_{\theta}^+(q_2)$ whereas $\hat{x}_{\theta}^+(q_1) = 0,  \hat{x}_{\theta}^+(q_2) = 0$.  Observe that the buyer $\theta$'s optimal action on posterior $q$ is still $\hat{a}$ because $\hat{a} = \arg \max_{a \in A} \sum_{\omega \in \Omega} (q_1)_{\omega}u(w, \theta, a)$ and $\hat{a} = \arg \max_{a \in A} \sum_{\omega \in \Omega} (q_2)_{\omega}u(w, \theta, a)$  imply that $\hat{a} = \arg \max_{a \in A} \sum_{\omega \in \Omega} [ c_1 (q_1)_{\omega} + c_2  (q_2)_{\omega} ] u(w, \theta, a)$ for any constants $c_1,c_2 >0$. As a result, we have 
	\begin{equation}\label{eq:lin-invariant-v}
	c_1 \cdot v_{\theta}(D_{\theta} q_1) + c_2 \cdot v_{\theta}(D_{\theta} q_2)  = v_{\theta} \big(  D_{\theta} [c_1\cdot q_1+c_2 \cdot q_2]  \big)
	\end{equation}
	
	We now argue that our adjustment to $\mathbf{x}$ maintains feasibility and optimality. The adjustment  does not change the objective because
	\begin{eqnarray*}
		b(\bm{1}^t D_{\theta } q) \cdot \hat{x}^+_{\theta}(q)  &=& b\bigg[\bm{1}^t D_{\theta } \cdot (\hat{x}^+_{\theta}(q) \cdot q) \bigg] \\
		&=& b\bigg[\bm{1}^t D_{\theta } \cdot (x^+_{\theta}(q_1) \cdot q_1 + x^+_{\theta}(q_2) \cdot q_2  ) \bigg]   \\
		& = &    b(\bm{1}^t D_{\theta } q_1) \cdot x^+_{\theta}(q_1) +  b(\bm{1}^t D_{\theta } q_2) \cdot x^+_{\theta}(q_2) 
	\end{eqnarray*} 
	We thus only need to prove that it remains feasible. The adjustment does not change the first constraint, because
	\begin{eqnarray*}
		&& \bigg[v_{\theta }(D_{\theta} q) -b(\bm{1}^t D_{\theta } q) \bigg]  \hat{x}_{\theta}^+(q)  \\
		& = &  u\bigg[ D_{\theta} \cdot [\hat{x}_{\theta}^+(q) \cdot q] \bigg] -  b\bigg[ \bm{1}^t D_{\theta } \cdot (\hat{x}_{\theta}^+(q) \cdot q)    \bigg] \\
		& = & u\bigg[ D_{\theta} \cdot [x_{\theta}^+(q_1) \cdot q_1 + x_{\theta}^+(q_2)\cdot q_2] \bigg] -  b\bigg[ \bm{1}^t D_{\theta } \cdot (x_{\theta}^+(q_1) \cdot q_1 + x_{\theta}^+(q_2) \cdot q_2 )    \bigg] \\ 
		& = & \bigg[v_{\theta }(D_{\theta} q_1) -b(\bm{1}^t D_{\theta } q_1) \bigg]   x_{\theta}^+(q_1)   + \bigg[v_{\theta }(D_{\theta} q_2) -b(\bm{1}^t D_{\theta } q_2) \bigg]   x_{\theta}^+(q_2)
	\end{eqnarray*}
	where the last equality is due to Inequality \eqref{eq:lin-invariant-v}. 
	
	Next,  we show that the second set of constraints also remain feasible under our adjustment. The above argument shows that the value of its left-hand side remains the same for any $\theta, q$. We now show that the adjustment will \emph{not  increase} the value of the right-hand side. The only possibility that the value of the right-hand side could change is when $\theta'$ in $x_{\theta'}^+(q)$ correspond to our focus of buyer type $\theta$ since all other $x_{\theta}^+(q)$'s are not changed. In this case, we have for any $\theta' \not = \theta$ 
	\begin{eqnarray*}
		& & [v_{\theta' }(D_{\theta'} q) -b(\bm{1}^t D_{\theta' } q)]   \hat{x}^+_{\theta}(q)  \\
		& \leq  &  \bigg[v_{\theta' }(D_{\theta'} q_1) \cdot \frac{x_{\theta}^+(q_1)}{x_{\theta}^+(q_1) + x_{\theta}^+(q_2)}  +  v_{\theta' }(D_{\theta'} q_2) \cdot \frac{x_{\theta}^+(q_2)}{x_{\theta}^+(q_1) + x_{\theta}^+(q_2)}  \bigg] \cdot \hat{x}^+_{\theta}(q)    -b(\bm{1}^t D_{\theta' } q) \cdot   \hat{x}^+_{\theta}(q)   \\
		& = & \bigg[v_{\theta' }(D_{\theta'} q_1) \cdot x_{\theta}^+(q_1) +  v_{\theta' }(D_{\theta'} q_2) \cdot x_{\theta}^+(q_2)  \bigg]     -b(\bm{1}^t D_{\theta' } q_1) \cdot   x^+_{\theta}(q_1)  -  b(\bm{1}^t D_{\theta' } q_2) \cdot   x^+_{\theta}(q_2)  \\ 
	\end{eqnarray*}
	where the inequality is due to the convexity of $v_{\theta'}$ function and our definition of $q = \frac{1}{x_{\theta}^+(q_1) + x_{\theta}^+(q_2)} \bigg[ x_{\theta}^+(q_1) \cdot q_1 + x_{\theta}^+(q_2) \cdot q_2 \bigg] $ as a convex combination of $q_1,q_2$. This shows that our adjustment will not increase the value of the right-hand side of the second constraint for any $\theta, \theta'$. It is easy to see that the third set of constraints also hold. As a result, the adjusted solution remains feasible.  
	
	The above adjustment can be done for any $\theta$ and for both $x_{\theta}^+$ and $x_{\theta}^-$. Therefore, ultimately, we can obtain an optimal solution with the desired property as described in the lemma. 
\end{proof}	

The following lemma concludes our proof of the characterization part in Theorem \ref{thm:BDR-opt}.
\begin{lemma}
	The optimal revenue can be achieved by an IC Consulting Mechanism with Probabilistic Return ($\texttt{CM-probR}$).
\end{lemma}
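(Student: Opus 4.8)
The plan is to stitch together the chain of linear programs built up over the preceding lemmas by strong duality, conclude that the optimal revenue equals the optimal value of LP~\eqref{lp:primal-opt}, and then invoke Lemma~\ref{lem:succint-mechanism} together with the (unnamed) lemma just before it, which says every feasible solution of~\eqref{lp:primal-opt} is an IC pricing-outcomes mechanism using only the two transfers ``buyer pays $b$'' and ``seller pays $M$'' with objective equal to the seller's revenue.

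Concretely, I would proceed as follows. By Lemma~\ref{lem:signal-opt}, the supremum of revenue over all generic interactive protocols equals the optimal value of LP~\eqref{lp:posterior-opt-budget}. That LP is feasible (take the no-sale solution $x_\theta(\mu)=1$, $\tilde t_\theta(q)\equiv 0$) and its objective is bounded (revenue is bounded in terms of $\max_{\omega,\theta,a}|u(\omega,\theta,a)|$ and $M$), so strong LP duality gives that its value equals that of its dual LP~\eqref{lp:posterior-opt-budget-dual}. LP~\eqref{lp:posterior-opt-budget-dual-eqv} is a pure notational rewriting of~\eqref{lp:posterior-opt-budget-dual} via the abbreviations $g_\theta,h_\theta$, hence has the same value; the lemma that eliminates the variables $\gamma_{\theta,q},\xi_{\theta,q}$ shows~\eqref{lp:posterior-opt-budget-dual-eqv} and~\eqref{lp:posterior-opt-budget-dual-var} have the same value; and substituting the explicit forms of $g_\theta,h_\theta$ shows~\eqref{lp:posterior-opt-budget-dual-var} and~\eqref{lp:posterior-opt-budget-dual-variant} have the same value. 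Finally, LP~\eqref{lp:primal-opt} is by construction the dual of~\eqref{lp:posterior-opt-budget-dual-variant}; since~\eqref{lp:primal-opt} is feasible (the no-information solution) and bounded above (its objective is the revenue of a pricing-outcomes mechanism, hence bounded), strong duality gives that~\eqref{lp:posterior-opt-budget-dual-variant} and~\eqref{lp:primal-opt} share the same optimal value. Chaining these equalities, the optimal seller revenue equals $\mathrm{val}\eqref{lp:primal-opt}$.

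It then remains to exhibit an optimal solution of~\eqref{lp:primal-opt} that is a legitimate IC $\texttt{CM-probR}$. By the lemma preceding Lemma~\ref{lem:succint-mechanism}, every feasible solution of~\eqref{lp:primal-opt} is already an incentive-compatible pricing-outcomes mechanism with the restricted transfer set $\{b,-M\}$ and objective equal to the revenue. By Lemma~\ref{lem:succint-mechanism}, there is an optimal solution in which, for each $\theta$, both $x_\theta^+$ and $x_\theta^-$ are supported on at most $|A|$ posteriors, each inducing a distinct best-response action for $\theta$; reading each such posterior as the recommendation of that action and the $+/-$ superscript as the return indicator $i\in\{+,-\}$, this solution is exactly a $\texttt{CM-probR}$ in the sense of Definition~\ref{def:BDR}, the distinct-best-action property being precisely the obedience requirement. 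Hence this $\texttt{CM-probR}$ is IC and attains revenue $\mathrm{val}\eqref{lp:primal-opt}$, i.e.\ the optimal revenue. The private-budget case follows by the reduction noted earlier (fold the budget into $\theta$ and add the extra IC constraints eliciting truthful budget reports), which leaves every step above unchanged. The part requiring the most care is the bookkeeping around the two genuine dualizations — checking feasibility and boundedness of~\eqref{lp:posterior-opt-budget} and of~\eqref{lp:primal-opt} so that no duality gap arises, and verifying that the objective of~\eqref{lp:primal-opt} is literally the revenue rather than a rescaled version; all the combinatorial content has already been absorbed into Lemma~\ref{lem:succint-mechanism}.
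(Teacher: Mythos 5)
Your proposal is correct and follows essentially the same route as the paper: chain the optimal values of LP~\eqref{lp:posterior-opt-budget} through its dual and the reduced duals down to LP~\eqref{lp:primal-opt} (the paper invokes strong duality implicitly where you spell out feasibility and boundedness), then combine Lemma~\ref{lem:signal-opt}, the interpretation lemma for feasible solutions of~\eqref{lp:primal-opt}, and Lemma~\ref{lem:succint-mechanism} to extract an optimal IC $\texttt{CM-probR}$. The extra bookkeeping you supply is a harmless elaboration of the paper's one-line argument, not a different proof.
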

\begin{proof}
	This is because the optimal objective of LP \eqref{lp:primal-opt} --- i.e., the maximium revenue acheived by the optimal $\texttt{CM-probR}$ --- equals the optimal objective of LP \eqref{lp:posterior-opt-budget}, i.e., the maximum revenue achieved by the optimal pricing outcomes mechanims, which is proved to be an optimal mechanim by Lemma \ref{lem:signal-opt}.  
\end{proof}

\section{Omitted Proofs in Section \ref{sec:black-box} }\label{sec:unknown}
\subsection{Proof of Lemma~\ref{lem:unknown}} \label{app:unknown_lem}

We first prove that $\varepsilon$-IC can be guaranteed by the IC constraint in the LP. Consider any $\mu(\omega, \theta, b)$, any $(\theta, b)$. The expected payoff of a type-$(\theta,b)$ buyer when reporting $\theta,b$ is equal to 
\begin{eqnarray*}
U_{\theta,b}(\theta, b) = \E_{\omega_1\sim \mu(\omega|\theta, b)} \ \E_{T\sim\mu^{n-1}} \  \E_{\mathcal{M}_S} \left[u(\omega_1, \theta, a_S(\theta, b)) -  t_S(\theta, b) \right].
\end{eqnarray*}
Consider this expected payoff conditioning on $S_{\theta,b}$. Let $\mathcal{U}(S_{\theta,b})$ be the uniform distribution over $S_{\theta, b}$. Then
\begin{eqnarray*}
U_{\theta,b}(\theta, b | S_{\theta,b}) & = \E_{\omega_1\sim \mu(\omega|\theta, b)} \ \E \left[u(\omega_1, \theta, a_S(\theta, b)) -  t_S(\theta, b) ~\vert~ S_{\theta,b}\right] \\
  & = \E_{\omega_1\sim \mathcal{U}(S_{\theta,b})} \  \E \left[u(\omega_1, \theta, a_S(\theta, b)) -  t_S(\theta, b) ~\vert~ S_{\theta,b}\right]
\end{eqnarray*}
Conditioning on $S_{\theta,b}$, we should have $\omega_1\sim \mathcal{U}(S_{\theta,b})$ because $\omega_1 \in S_{\theta, b}$ and the elements in $S_{\theta,b}$ are independently drawn from $\mu(\omega|\theta,b)$. Recall that $\widehat{\mu}(\omega|\theta,b)$ is defined to be the uniform distribution over $S_{\theta,b}$. So our IC constraints for type-$(\theta, b)$ buyer guarantees that  when $\omega_1$ is drawn from the uniform distribution over $S_{\theta,b}$, it is $\varepsilon$-optimal to truthfully report $(\theta,b)$, i.e.,
$$
U_{\theta,b}(\theta, b | S_{\theta,b}) \ge U_{\theta,b}(\theta', b' | S_{\theta,b}) - \varepsilon
$$
for all $(\theta', b')\neq (\theta, b)$. Here  
$$
U_{\theta,b}(\theta', b' | S_{\theta,b}) = \E_{\omega_1\sim \mathcal{U}(S_{\theta,b})} \  \E \left[u(\omega_1, \theta, a'(a_S(\theta', b'))) -  t_S(\theta', b') ~\vert~ S_{\theta,b}\right]
$$ 
represents the conditional expected utility when reporting $(\theta', b')$ and then taking arbitrary action based on the recommendation.  Take expectation over $S_{\theta,b}$, we prove $\varepsilon$-IC. 

The proofs for $\varepsilon$-IR and $\varepsilon$-obedience are basically the same. $\varepsilon$-IR is guaranteed by the IR constraints for each $S_{\theta, b}$. 
$\varepsilon$-obedience is guaranteed by the 
OB constraints for each $S_{\theta, b}$.

\subsection{Proof of Theorem~\ref{thm:unknown}} \label{app:unknown_thm}
We first show that the  
optimal solution of~\eqref{prog:CMPR} is a feasible solution of~\eqref{prog:unknown} with high probability.

\begin{lemma}
With probability at least $1-\delta/2$, the optimal solution of~\eqref{prog:CMPR} $p^*$  is a feasible solution of the LP~\eqref{prog:unknown}, as long as
$$
n \ge \Theta\left( \ln(G/\delta) \cdot \max\left\{ \frac{|A|^2 }{\varepsilon^2 \cdot \mu_{\min}}, \frac{1}{\mu_{\min}^2} \right\}\right) = \Theta\left( \frac{|A|^2 \cdot \ln(G/\delta)}{\varepsilon^2 }\right)
$$
where $G = \max \{ |\Theta|, |B|, |A|\}$ and $\mu_{\min} = \min_{\theta, b} \mu(\theta, b)$.
\end{lemma}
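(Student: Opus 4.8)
The plan is to show that, with probability at least $1-\delta/2$ over the draw $S\sim\mu^n$, the exact optimum $p^*$ of LP~\eqref{prog:CMPR} is feasible for LP~\eqref{prog:unknown}. To do this I transplant $p^*$ into~\eqref{prog:unknown} together with the auxiliary variables it induces there, namely $\widehat z^{\circ}_{\theta,b,\theta',b',a}:=\max_{a'}\sum_\omega\widehat\mu(\omega\mid\theta,b)\,p^{*,\circ}_{\theta',b'}(\omega,a)\,(u(\omega,\theta,a')-t^{\circ})$ and $\widehat U_{\theta,b}(\theta',b'):=\sum_a(\widehat z^{+}_{\theta,b,\theta',b',a}+\widehat z^{-}_{\theta,b,\theta',b',a})$. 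With this assignment the $z$-inequalities and the $U$-equalities of~\eqref{prog:unknown} hold by construction, so the work is to verify the three nontrivial families $(\varepsilon\text{-IC})$, $(\varepsilon\text{-IR})$, $(\varepsilon\text{-OB})$. Since $p^*$ satisfies the corresponding exact constraints of~\eqref{prog:CMPR}, it suffices to prove that every quantity occurring in those three families changes by at most $\varepsilon/2$ when $\mu(\cdot\mid\theta,b)$ is replaced by $\widehat\mu(\cdot\mid\theta,b)$; the built-in $\pm\varepsilon$ slack in~\eqref{prog:unknown} then absorbs the perturbation on both sides of each inequality.

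\textbf{Concentration of the relevant scalars.} Each such quantity is, up to $\pm$ a sum of at most $2|A|$ terms (one per $(a,\circ)$, using $|\max_{a'}f-\max_{a'}g|\le\max_{a'}|f-g|$ to handle the $z$'s), a scalar empirical average $\frac1m\sum_{j=1}^m X_j$ of i.i.d.\ bounded random variables $X_j=p^{*,\circ}_{\bullet}(\omega_j,\cdot)\,(u(\omega_j,\cdot)-t^{\circ})$ with $\omega_j$ i.i.d.\ from $\mu(\cdot\mid\theta,b)$, where $|X_j|\le 2$ by the normalization $u,M,b\in[0,1]$, and $m=|S_{\theta,b}|$. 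By Hoeffding's inequality $\bigl|\frac1m\sum_j X_j-\E[X_j]\bigr|\le O\!\bigl(\sqrt{\ln(1/\delta')/m}\bigr)$ with probability $\ge 1-\delta'$. The decisive point is that this argument never estimates $\widehat\mu(\cdot\mid\theta,b)$ coordinate by coordinate, so $|\Omega|$ never enters: the only union bound needed is over the $\poly(G)=O(|\Theta|^2|B|^2|A|^2)$ index tuples labeling these averages (the $(\varepsilon\text{-OB})$ right-hand side is even a single bounded average for each $\circ$, with no $|A|$ factor). Hence, conditioned on $m=|S_{\theta,b}|\ge m_0$ for all $(\theta,b)$, taking $\delta'=\Theta(\delta/G^{6})$ makes all these perturbations simultaneously at most $O\!\bigl(|A|\sqrt{\ln(G/\delta)/m_0}\bigr)$ with probability at least $1-\delta/4$.

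\textbf{Lower-bounding $m_0$ and assembling the sample bound.} The set $S_{\theta,b}$ contains the $\omega_i$ with $i\ge2$ and $(\theta_i,b_i)=(\theta,b)$ — a $\mathrm{Bin}(n-1,\mu(\theta,b))$ number of them, all i.i.d.\ $\mu(\cdot\mid\theta,b)$ — plus the stray sample $\omega_1$. A multiplicative Chernoff bound, union-bounded over the $\le|\Theta||B|$ pairs, shows every such count is at least $(n-1)\mu_{\min}/2$ with probability $\ge1-\delta/4$ provided $n=\Omega(\ln(G/\delta)/\mu_{\min}^2)$; and replacing a single element of $S_{\theta,b}$ (the stray $\omega_1$, which for $(\theta,b)\ne(\theta_1,b_1)$ is from the ``wrong'' conditional) perturbs any of the above averages by only $O(1/m_0)=O(1/(n\mu_{\min}))$, which is negligible. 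Setting $m_0=\Theta(n\mu_{\min})$ and demanding $O\!\bigl(|A|\sqrt{\ln(G/\delta)/m_0}\bigr)\le\varepsilon/2$ forces $n=\Omega\!\bigl(|A|^2\ln(G/\delta)/(\varepsilon^2\mu_{\min})\bigr)$; combined with the Chernoff requirement this is exactly the stated bound, the total failure probability is at most $\delta/2$, and treating $\mu_{\min}$ as an instance-dependent constant gives the final $\Theta(|A|^2\ln(G/\delta)/\varepsilon^2)$.

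\textbf{Main obstacle.} The one genuinely delicate step is the one just used to keep $|\Omega|$ out of the bound: one must resist estimating the conditional distributions $\widehat\mu(\cdot\mid\theta,b)$ in any norm (which would cost a factor of $|\Omega|$ or $\sqrt{|\Omega|}$) and instead concentrate only the handful of scalar linear functionals of $\widehat\mu$ that appear in the LP. Everything else — that each constraint is a sum of $O(|A|)$ such functionals, that the random sizes $|S_{\theta,b}|$ concentrate, and that the shared sample $\omega_1$ is harmless — is routine bookkeeping.
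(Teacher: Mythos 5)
Your proposal is correct and follows essentially the same route as the paper's proof: transplant $p^*$ into the perturbed LP, concentrate only the scalar linear functionals of $\widehat\mu(\cdot\mid\theta,b)$ appearing in the constraints (so $|\Omega|$ never enters), require each of the $O(|A|)$ terms to deviate by $O(\varepsilon/|A|)$, lower-bound the per-type sample counts via Chernoff, and union-bound over the $\poly(G)$ constraint indices. Your explicit handling of the stray sample $\omega_1$ in $S_{\theta,b}$ is a minor bookkeeping point the paper glosses over, but it does not constitute a different approach.
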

\begin{proof}
 Define $\mu_{\min} = \min_{\theta, b} \mu(\theta, b)$. We first show that with probability at least $1-\delta/4$, we collect at least $\frac{n \cdot \mu_{\min}}{2}$ samples for each buyer type $\theta, b$. 
By the Chernoff bound, for each $\theta, b$, with probability at least $1 - \delta/(4|\Theta|\cdot |B|)$, 
\begin{eqnarray*}
&   \mu(\theta, b) - \frac{\sum_{i=1}^n \bm{1}\{(\theta_i, b_i) = (\theta, b) \} }{n} \le \sqrt{\frac{\ln(4|\Theta|\cdot |B|/\delta)}{2n}}\\
\Longrightarrow & \frac{\sum_{i=1}^n \bm{1}\{(\theta_i, b_i) = (\theta, b) \} }{n} \ge  \mu(\theta, b) - \sqrt{\frac{\ln(4|\Theta|\cdot |B|/\delta)}{2n}}\\
\Longrightarrow &  \frac{\sum_{i=1}^n \bm{1}\{(\theta_i, b_i) = (\theta, b) \} }{n} \ge  \mu_{\min} - \sqrt{\frac{\ln(4|\Theta|\cdot |B|/\delta)}{2n}}
\end{eqnarray*}
So by the union bound when $n \ge \frac{2\ln(4|\Theta|\cdot |B|/\delta)}{\mu_{\min}^2} $ we have $\frac{\sum_{i=1}^n \bm{1}\{(\theta_i, b_i) = (\theta, b) \} }{n} \ge \mu_{\min}/2$ for all $\theta,b$ with probability $1-\delta/4$.

Now assume that we collect at least $\frac{n \cdot \mu_{\min}}{2}$ samples for each buyer type $\theta, b$. Let $p^{*, \circ}_{\theta,b}(\omega, a)$ be the variables of the optimal solution of~\eqref{prog:CMPR} $p^*$, where $\circ \in \{+,-\}$. To prove that $p^*$ is a feasible solution of~\eqref{prog:unknown} with high probability, it suffices to show that with high probability, we have
$$
\left| \sum_\omega ( \widehat{\mu}(\omega | \theta, b) - \mu(\omega | \theta, b) ) p^{*,\circ}_{\theta, b}(\omega, a) (u( \omega, \theta, a) - t^\circ) \right| \le \frac{1}{4|A|}
$$
for all $\theta, \theta', b, b', a, a'$. Now consider the buyer with type $\theta$ and budget $b$. Let $K$ be the number of samples with $(\theta_i, b_i) = (\theta, b)$. Consider fixed $\theta', b', a, a'$ and define $h^\circ_\omega = p^{*,\circ}_{\theta', b'}(\omega, a) (u( \omega, \theta, a') - t^\circ)$. According to the Chernoff bound, with probability at least $1 - \delta/(8|\Theta|^2 |B|^2  |A|^2)$,
\begin{eqnarray*}
\left|\frac{\sum_{i=1}^n \bm{1}\{ (\theta_i, b_i) = (\theta, b)\} h^\circ_{\omega_i} }{K} - \sum_\omega \mu(\omega | \theta, b) h^\circ_\omega \right| \le \sqrt{\frac{2\ln(8|\Theta|^2 |B|^2 |A|^2/\delta)}{K}} \le \sqrt{\frac{4\ln(8|\Theta|^2 |B|^2 |A|^2/\delta)}{n \cdot \mu_{\min}}}.
\end{eqnarray*}
Setting the right hand side$=\frac{1}{4|A|}$, we get $n \ge  \frac{64 |A|^2 \cdot \ln(8|\Theta|^2 |B|^2 |A|^2/\delta)}{\varepsilon^2 \cdot \mu_{\min}}$. And by the union bound, with probability $1-\delta/4$ we have the left-hand side $\le \frac{1}{4|A|}$ for all $\theta, \theta', b, b', a, a', \circ\in\{+,-\}$. 


Therefore when
$$
n \ge \max\left\{ \frac{64 |A|^2 \cdot \ln(8|\Theta|^2 |B|^2 |A|^2/\delta)}{\varepsilon^2 \cdot \mu_{\min}},  \frac{2\ln(2|\Theta|\cdot |B|/\delta)}{\mu_{\min}^2}\right\} = \Theta\left( \ln(G/\delta) \cdot \max\left\{ \frac{|A|^2 }{\varepsilon^2 \cdot \mu_{\min}}, \frac{1}{\mu_{\min}^2} \right\}\right),
$$
with probability $1-\delta/2$, $p^*$ is a feasible solution of~\eqref{prog:unknown}, where $G = \max \{ |\Theta|, |B|, |A|\}$. 
\end{proof}

Define $\texttt{Rev}_S(\mathcal{M})$ to be the expected revenue of $\mathcal{M}$ when the underlying distribution is the uniform distribution over $S$. Let $\mathcal{E}$ be the event that  $p^*$  is a feasible solution of the LP~\eqref{prog:unknown}. Then the expected revenue of Mechanism~\ref{alg:unknown} is equal to 
\begin{eqnarray*}
&\E_S \ \texttt{Rev}_S(\mathcal{M}_S)  &=  \E_S \ \mathds{1}(\mathcal{E}) \texttt{Rev}_S(\mathcal{M}_S) + \E_S \ \mathds{1}(\overline{\mathcal{E}}) \texttt{Rev}_S(\mathcal{M}_S)\\
&& \ge   \E_S \  \mathds{1}(\mathcal{E})\  \texttt{Rev}_S(p^*) + \E_S \  \mathds{1}(\overline{\mathcal{E}})  \big(\texttt{Rev}_S(p^*)+ \texttt{Rev}_S(\mathcal{M}_S) -  \texttt{Rev}_S(p^*)\big)\\
&& =  \E_S \  \texttt{Rev}_S(p^*) + \E_S \  \mathds{1}(\overline{\mathcal{E}})  \big(\texttt{Rev}_S(\mathcal{M}_S) -  \texttt{Rev}_S(p^*)\big) \\
&& \ge  \E_S \  \texttt{Rev}_S(p^*) - 2 \cdot \E_S \  \mathds{1}(\overline{\mathcal{E}}) \\
&& \ge  \texttt{Rev}_\mu(p^*) - \delta.
\end{eqnarray*}

\end{document}